%% file: index.tex
\documentclass[acmsmall,screen,nonacm]{acmart}
\input{preamble}
\AtBeginDocument{%
  }
\usepackage{xifthen}
\newboolean{showcomments}
\setboolean{showcomments}{false} 

\ifthenelse{\boolean{showcomments}}{
	\newcommand{\nbc}[3]{
		{\colorbox{#3}{\bfseries\sffamily\scriptsize\textcolor{white}{#1}}}
		{\textcolor{#3}{\sf\small$\langle$\textit{#2}$\rangle$}}}
}{
	\newcommand{\nbc}[3]{}
	
}

\newcommand{\learnedSolutions}{\textit{learnedSolutions}}
\newcommand{\learnedSpecs}{\textit{learnedSpecs}}
\newcommand{\unrealisableSpecifications}{\textit{unrealisableSpecifications}}
\newcommand{\candidateQueue}{\textit{candidateQueue}}
\newcommand{\newCounterPlaySets}{\textit{newCounterPlaySets}}

\newcommand{\learnAssumptionDegradations}{\textit{learnAssumptionDegradations}}
\newcommand{\learnGuaranteeDegradations}{\textit{learnGuaranteeDegradations}}
\newcommand{\extractCounterPlays}{\textit{extractCounterPlays}}

\newcommand{\isRealisable}{\textit{isRealisable}}
\newcommand{\spec}{\textit{spec}}
\newcommand{\isRedundantCandidate}{\textit{isRedundantCandidate}}
\newcommand{\trivialSolutions}{\textit{trivialSolutions}}

\newcommand{\asm}{\textit{asm}}
\newcommand{\trivialDegradation}{\textit{trivialDegradation}}

\newcommand{\ucsSet}{\textit{ucsSet}}
\newcommand{\minCorrectionSets}{\textit{minCorrectionSets}}

\newcommand{\trivialGD}{\textit{trivialGD}}

\newcommand{\getUnrealisableCores}{\textit{getUnrealisableCores}}
\newcommand{\getMinimalHittingSets}{\textit{getMinimalHittingSets}}

\newcommand{\invariant}{\textit{inv}}
\newcommand{\response}{\textit{res}}
\newcommand{\cs}{\textit{cs}}
\newcommand{\preferredDegradations}{\textit{preferredDegradations}}

\begin{document}

\title{Learning to adapt GR(1) specifications through degradation}
\titlenote{This is a preprint of a manuscript under review at ACM Transactions on Software Engineering and Methodology.}

\author{Tiberiu-Andrei Georgescu}
\email{tibi.geo@ic.ac.uk}
\orcid{0009-0006-7099-4153}
\affiliation{%
  \institution{Imperial College London}
  \city{London}
  \country{United Kingdom}}

\author{Dalal Alrajeh}
\affiliation{%
  \institution{Imperial College London}
  \city{London}
  \country{United Kingdom}}
\email{dalal.alrajeh04@ic.ac.uk}

\author{Sebastian Uchitel}
\affiliation{%
  \institution{Imperial College London}
  \city{London}
  \country{United Kingdom}}
\affiliation{%
  \institution{Universidad de Buenos Aires, CONICET, Universidad de San Andrés}
  \country{Argentina}
}
\email{s.uchitel@ic.ac.uk}

\renewcommand{\shortauthors}{Georgescu et al.}

\begin{abstract}
Reactive synthesis is a powerful tool for generating correct-by-construction controllers from formal specifications. GR(1) is an assume-guarantee specification framework that enables efficient synthesis, allowing synthesised controllers to be used in a wide array of applications. The limitation of such controllers is that, should they encounter environment behaviour unspecified in the assumptions of the specification, the specified system guarantees are no longer ensured. Our work proposes an approach based on oracle-guided inductive synthesis to adapt the specification to be consistent with the observed assumption violation, while degrading system guarantees as little as possible to maintain realisability. Our methodology discovers multiple potential solutions, so we propose a preference criteria, based on the ability of the specification to enable robustness under adaptation. Although our approach is capable of degrading the entire specification, for our case studies we successfully discover degradations that preserve the entire set of original guarantees. 
\end{abstract}



\maketitle

\section{Introduction}
\input{sections/1-introduction/index}

\section{Motivating Example}
\label{sec:minepump}

\input{sections/2-motivating_example/index}

\section{Background}
\label{sec:background}
\input{sections/3-background/3.0-introduction}

\input{sections/3-background/3.1-linear_temporal_logic_over_infinite_and_finite_traces}
\input{sections/3-background/3.2-gr1}

\section{Adaptation under Assumption Violations}
\label{sec:reality_integration}
\input{sections/4-reality_integration/4.0-introduction}
\input{sections/4-reality_integration/4.1-rint_problem_definition}
\input{sections/4-reality_integration/4.2-optimality_of_rint_solutions}

\input{sections/4-reality_integration/4.3-trivial_solution}
\input{sections/4-reality_integration/4.4-motivating_example}

\section{Learning Degradations}
\label{sec:learning_adaptations}

\input{sections/5-learning_adaptations/5.0-introduction}
\input{sections/5-learning_adaptations/5.1-syntactic_degradation_methods}
\input{sections/5-learning_adaptations/5.2-finer_grained_rint_solutions}
\input{sections/5-learning_adaptations/5.3-learning_degradations_for_reality_integration}


\section{Evaluation}
\label{sec:evaluation}

\input{sections/6-evaluation/6.0-introduction}
\input{sections/6-evaluation/6.1-methodology_of_evaluation}

\input{sections/6-evaluation/6.2-experiments}

\input{sections/6-evaluation/6.3-discussion}
\input{sections/6-evaluation/6.z-conclusion}

\section{Related Work}
\label{sec:related_work}
\input{sections/7-related_work/index}

\section{Conclusion}
\label{sec:conclusion}

\input{sections/8-conclusion/index}
\begin{acks}
    This work was supported by the UK Research and Innovation [grant number EP/S023356/1] in the UKRI Centre for Doctoral Training in Safe and Trusted Artificial Intelligence (\url{www.safeandtrustedai.org}).
\end{acks}

\bibliographystyle{ACM-Reference-Format}
\bibliography{index}

\clearpage

\appendix
\pagenumbering{Roman} 
\renewcommand{\thesection}{\Alph{section}} 
\section{Proofs}
\label{apx:sec:proofs}
    \input{appendices/A-proofs/A.1-counter_example_disjunction_of_assumptions}

\input{appendices/A-proofs/A.2-proof_of_soundness_of_baseline_algorithm}
    \input{appendices/A-proofs/A.3-finiteness_of_semantic_equivalence_classes}

\input{appendices/A-proofs/A.4-soundness_of_algorithm}

\section{Code}
    \input{appendices/B-code/B.1-helper_functions}

\section{Case Studies}
\label{apx:sec:case_studies}
    \input{appendices/C-case_studies/C.1-size_of_specifications}
    \input{appendices/C-case_studies/C.2-specifications_and_violation_traces}
    \input{appendices/C-case_studies/C.3-trivial_solutions}
    \input{appendices/C-case_studies/C.4-preferred_learned_solutions}
\end{document}

%% file: preamble.tex
\usepackage{algorithm}
\usepackage{algpseudocode}
\usepackage{aliascnt}
\usepackage{enumitem}
\usepackage{fancyvrb}
\usepackage{comment}
\usepackage{blindtext}
\usepackage{lipsum}
\usepackage{tfrupee}

\usepackage{graphicx}
\usepackage{xcolor}
\usepackage{svg}
\usepackage{tabularx}
\newcolumntype{L}{>{\raggedright\arraybackslash}X}
\newcolumntype{C}{>{\centering\arraybackslash}X}
\newcolumntype{R}{>{\raggedleft\arraybackslash}X}

\definecolor{codegreen}{rgb}{0,0.6,0}
\definecolor{codegray}{rgb}{0.5,0.5,0.5}
\definecolor{codepurple}{rgb}{0.58,0,0.82}
\definecolor{backcolour}{rgb}{0.95,0.95,0.92}

\definecolor{Teal}{RGB}{0,128,128}
\definecolor{NewBlue1}{RGB}{4,100,226}
\definecolor{NiceBlue}{RGB}{63,104,132}
\definecolor{DarkRed}{RGB}{14,53,59}
\definecolor{NewBlue2}{RGB}{62,100,125}
\definecolor{NewBlue3}{RGB}{44,100,128}

\usepackage{listings}
\usepackage{courier}
\lstdefinestyle{mystyle}{
  backgroundcolor=\color{backcolour},
  commentstyle=\color{codegreen},
  keywordstyle=\color{magenta},
  numberstyle=\tiny\color{codegray},
  stringstyle=\color{codepurple},
  basicstyle=\ttfamily\footnotesize,
  breaklines=true,
  captionpos=b,
  keepspaces=true,
  numbers=left,
  numbersep=5pt,
  showspaces=false,
  showstringspaces=false,
  showtabs=false,
  tabsize=2,
  xleftmargin=0.5cm,
  xrightmargin=-0.8cm,
  frame=lr,
  framerule=0pt
}
\lstdefinelanguage{Spectra}{
  morekeywords=[1]{spec,module},
  morekeywords=[2]{env, asm, assumption},
  morekeywords=[3]{sys, gar, guarantee},
  morekeywords=[4]{boolean, ini, alw, alwEv},
  sensitive=true,
  morecomment=[l]{//},
}

\lstdefinestyle{SpectraStyle}{
  language=Spectra,
  basicstyle=\ttfamily\small,
  keywordstyle=[1]\bfseries\color{blue},
  keywordstyle=[2]\bfseries\color{red},
  keywordstyle=[3]\bfseries\color{green!60!black},
  keywordstyle=[4]\bfseries\color{blue},
  commentstyle=\color{gray},
  numbers=left,
  numberstyle=\tiny,
  stepnumber=1,
  numbersep=8pt,
  frame=single,
  framesep=6pt,
  frame=single,
  framerule=1pt,
  backgroundcolor=\color{white},
  title=\strut Spectra,
  captionpos=t,
  showstringspaces=false,
  breaklines=true,
}

\newcommand\todo[1]{\textcolor{red}{\textit{TG TODO: #1}}}

\usepackage{amsmath}
\usepackage{amssymb}
\usepackage{mathtools}
\usepackage{bbold}
\usepackage{cancel}
\usepackage{amsthm}        
\newtheorem{theorem}{Theorem}[section]

\newtheorem{lemma}[theorem]{Lemma}
\newtheorem{definition}[theorem]{Definition}
\newtheorem{axiom}[theorem]{Axiom}

\newtheorem{example}[theorem]{Example}

\usepackage{tikz}
\usetikzlibrary{positioning,calc}

\usepackage{multirow}
\usepackage{booktabs}
\usepackage{colortbl}
\usepackage{xcolor}
\usepackage{amssymb}

\usepackage[normalem]{ulem}

%% file: sections/1-introduction/index.tex
Formal synthesis from temporal-logic specifications provides a rigorous approach for constructing reactive systems that are correct by construction with respect to their formally specified requirements. Consequently, stakeholders obtain stronger assurance that the generated controller satisfies the intended behavioural properties encoded within the specification~\cite{pnueli1989synthesis}. However, synthesising controllers from unrestricted Linear Temporal Logic (LTL) specifications is doubly exponential in complexity, limiting practicality for many real-world applications. Piterman et al.~\cite{piterman2006synthesis} introduced a restricted fragment of LTL known as Generalised Reactivity of rank~1 (GR(1) for short), enabling polynomial-time synthesis of reactive controllers and substantially improving the practical applicability of synthesis techniques. 

In GR(1) specifications, system behaviour is decomposed into two components: \textit{assumptions}, which characterize the expected behaviour of the \textit{environment}, and \textit{guarantees}, which characterize the required behaviour of the \textit{system}. GR(1) therefore follows an assume--guarantee structure, where realisability holds if and only if there exists a controller that can guarantee satisfaction of its specified properties whenever the environmental assumptions remain satisfied.

The implication above exposes a fundamental limitation of assume--guarantee specifications: once an environmental assumption is violated at runtime, the controller may produce arbitrary behaviour, as the correctness criterion the controller was synthesised for only applies when assumptions hold. Although techniques exist for improving the robustness of synthesized controllers under assumption violations~\cite{bloem2011specification}, such approaches do not generally provide precise control over which guarantees may degrade, nor the extent to which system behaviour may diverge from the intended specification. 

Classical work in reactive synthesis has emphasized that the validity of assume--guarantee reasoning fundamentally depends on the correctness and adequacy of the environmental assumptions themselves~\cite{zave1997four}. In practice, however, constructing sufficiently accurate and complete models of complex operational environments remains a significant challenge. Consequently, much of the existing literature focuses on specifying only those aspects of environmental behaviour deemed necessary for system synthesis or verification~\cite{ma2023using}. While practically motivated, this approach introduces the risk that assumptions are engineered primarily to support controller realisability rather than to faithfully characterize the true operational environment. As a result, synthesized controllers may exhibit formally correct behaviour with respect to the specification while remaining vulnerable to mismatches between assumed and actual runtime conditions.

In this work, we consider the problem of adapting reactive specifications under assumption violations, a setting commonly addressed in the literature through forms of specification degradation or repair. Intuitively, this problem captures how a reactive system should evolve when behaviours observed during deployment reveal environmental conditions that were not captured by the original specification. Rather than viewing adaptation as the synthesis of a single repaired specification, we instead consider the broader space of admissible adaptations obtained by integrating newly observed environmental behaviours into the environment assumptions and, where necessary, weakening or modifying the corresponding system guarantees to restore realisability.

Recent work by Buckworth et al.~\cite{buckworth2023adapting} addresses this challenge through the adaptation of environmental assumptions based on observed runtime behaviour. Starting from a controller synthesized from a GR(1) specification, they were able to successfully leverage inductive learning to weaken the original set of assumptions to make it consistent with  an environment violation, and use an oracle-guided inductive synthesis (OGIS) like loop \cite{jha2017theory} to converge to a realisable set of guarantees. 
%
However, the approach proposed by Buckworth et al.~\cite{buckworth2023adapting} produces a single adapted specification without explicitly considering the broader space of alternative adaptations that may also restore realisability. Moreover, the selected adaptation may not necessarily correspond to the solution that is closest to the original specification, nor the one that is most robust to future assumption violations. This limitation is closely related to challenges previously identified in robustness-oriented synthesis approaches~\cite{bloem2011specification}.

A central contribution of this paper is the formal characterization of the adaptation space and the development of principled criteria for evaluating alternative solutions. In particular, we investigate which adaptations should be preferred when multiple realisable solutions exist, including considerations such as preservation of the original specification intent, retention of critical guarantees, and robustness to future assumption violations. From this perspective, we develop a set of axioms defining preferred adaptations and use them to develop a learning-based algorithm capable of exploring a rich space of candidate adaptations. The proposed approach supports the discovery and comparison of multiple realisable alternatives while incorporating heuristics designed to prioritize preferred solutions. Experimental evaluation across multiple case studies demonstrates that the learning-based approach consistently outperforms an adaptation that operates by specification reduction,  dropping violated assumptions and non-realisable guarantees.


We summarize the contributions of this work as follows:
\begin{enumerate}
    \item We formalize the problem of adapting reactive specifications under observed mismatches between assumed and actual environmental behaviour.

    \item We introduce a formal preference framework for reasoning about alternative adaptation solutions, including criteria related to specification preservation and guarantee retention.

    \item We develop an adaptation algorithm based on specification reduction, and formally show that it generates realisable specifications consistent with observed environment behaviour.

    \item We develop an inductive learning-based adaptation algorithm that weakens specification formulas to explore a richer space of realisable adaptations, and formally show that the generated specifications are consistent with environment observations.

    \item We evaluate and analyze the proposed approaches across five case studies from the literature, examining the changes, tradeoffs, and preference of the resulting adapted specifications.
\end{enumerate}

Section~\ref{sec:minepump} introduces a motivating example based on the Arbiter case study \cite{piterman2006synthesis}. Section~\ref{sec:background} presents the formal background necessary for the remainder of the paper. In Section~\ref{sec:reality_integration}, we formally define the adaptation problem, introduce a preference framework for reasoning about alternative adaptations, and present a baseline algorithm based on specification-reduction operations for discovering realisable solutions. Section~\ref{sec:learning_adaptations} then considers formula-level degradation as a mechanism for solving the adaptation problem and presents an inductive learning-based adaptation algorithm for generating candidate specifications. Section~\ref{sec:evaluation} evaluates the proposed learning-based approach against a baseline adaptation algorithm, including comparative analysis of the resulting specifications and their logical relationships. Section~\ref{sec:related_work} reviews related work and situates our work on adaptation within the broader literature on reactive synthesis, specification adaptation, and robustness. Finally, Section~\ref{sec:conclusion} concludes with a discussion of current limitations, broader implications, and directions for future work.

%% file: sections/2-motivating_example/index.tex
We motivate the problem of specification adaptation under assumption violations using the Arbiter case study, discussed in \cite{piterman2006synthesis}, and further leveraged by subsequent literature on adaptation \cite{alur2013counter,cavezza2020minimal,buckworth2023adapting}. 


The arbiter constitutes the system to be synthesized, and controls access to an ARM Advanced High-Performance Bus (AHB), an on-chip communication protocol. The environment consists of multiple masters together with a third party (e.g. a user), although we simplify the case study to two masters only. Each master may request access to the resource by flipping the environment variable $r1$ or $r2$, representing their respective request switch. In response to this behaviour, the arbiter needs to eventually grant access to each requesting master by flipping the system variable $g1$ or $g2$, representing their respective grant switch. The arbiter is bound to never grant access to this shared resource to both masters at the same time, even if they make their requests concurrently. Finally, the third party has the power to lock the arbiter by flipping the environment variable $a$ off, which disallows the arbiter from granting access to the bus.

We model the Arbiter behaviour as a GR(1) specification, shown in Table~\ref{tab:arbiter}. In addition to the behaviour described earlier,  we consider the addition  of a further assumption  to ensure realisability of the specification:  the arbiter is always active. Consequently, the resulting system is realisable according to the  realisability condition of GR(1) presented in~\cite{bloem2012synthesis}. Therefore, an implementation can be synthesized that satisfies the specified guarantees.

\input{code/arbiter_spec}

Suppose, however, that during deployment,  the system exhibits an unexpected behaviour, in which the arbiter becomes inactive. A trace demonstrating a violation to the introduced environment assumption is one in which the environment deactivates the bus in the first time point. Consequently, the synthesized controller is no longer obliged to satisfy the specified guarantees. Although the controller may continue to exhibit operationally robust behaviour~\cite{bloem2014synthesizing}, there is no guarantee that the guarantees it preserves correspond to the intended or safety-critical system behaviour.

Prior work on specification repair~\cite{buckworth2023adapting} demonstrated that assumption-violating behaviours can be incorporated into a realisable specification through automated adaptation techniques. In particular, the arbiter specification in Table~\ref{tab:arbiter} was used as one of their case studies. Applying their repair method to this specification yields a repaired solution in which $assumption\_1$ and $guarantee\_3$ are both weakened to $\top$. The resulting specification is realisable while preserving a subset of the system's original guarantees.

We note, however, that this adaptation comes at a  cost. Most notably, the repaired specification completely disregards the environment variable $a$. Although this variable remains observable to the controller, it no longer influences the specification and is therefore ignored during synthesis. One might argue that the original assumption requiring the arbiter to remain continuously active was overly restrictive and ultimately responsible for the assumption violation. However, as we demonstrate shortly, this does not justify eliminating all dependence on $a$, as doing so discards behaviour that remains both meaningful and exploitable.

The problem above has two main causes:
\begin{enumerate}
    \item Premature convergence to the first feasible intermediate solution encountered; and
    \item An insufficiently expressive solution space, which restricts the exploration of alternative repairs.
\end{enumerate}

Addressing either of these limitations would increase the likelihood of identifying repairs that preserve a greater portion of the original specification. For example, the repair framework of~\cite{buckworth2023adapting} admits an alternative repair in which $assumption\_1$ is weakened to $\Box ((g1 \lor g2)\rightarrow a)$. Under this assumption, the arbiter may be deactivated only when neither requesting master has been granted access to the shared resource. Consequently, the specification remains realisable without weakening any of the system guarantees. However, if this repair, or one with similar properties, is not explored, the subsequent guarantee degradation procedure may unnecessarily weaken the specification despite the existence of repairs that preserve all original guarantees.

Alternatively, prior work does not consider degrading assumptions to weaker temporal patterns, such as justice or response properties. In the arbiter example, observing that the arbiter may become inactive need not imply that it can remain inactive indefinitely. Instead, $assumption\_1$ could be weakened from an invariant to the justice property $\Box\lozenge a$, requiring only that the arbiter eventually becomes active again. Although this repair lies outside the search space explored by prior work, it likewise restores realisability without requiring any degradation of the system guarantees.

The discussion above suggests that some repairs are inherently more desirable than others. While the preference is straightforward when comparing a repair that preserves all system guarantees with one that preserves only a subset, the duality of assume--guarantee specifications makes such comparisons considerably less clear in the general case. Existing repair approaches \cite{buckworth2023adapting,zhang2025behavioral} provide limited guidance on how alternative adaptations should be evaluated when both assumptions and guarantees may be modified. We therefore investigate the structure of the adaptation space to identify the properties that characterize desirable specification repairs. Building on these properties, we develop methods for systematically discovering preferred realisable adaptations from a broader space of candidate specifications.

More fundamentally, the observed violation exposes the need to reason about the quality of specification repairs, rather than merely their existence. Although multiple realisable adaptations may satisfy the observed environmental behaviour, they need not preserve the original specification equally well. This motivates the need for principled criteria for comparing candidate repairs and identifying those that minimise unnecessary degradation of the specification.

%% file: code/arbiter_spec.tex
\begin{table}[h]
\centering
\begin{tabular}{|p{2.5cm}|p{7cm}|p{3.5cm}|}
\hline
\textbf{Type \& ID} & \textbf{Informal requirement} & \textbf{GR(1) formalization} \\
\hline
guarantee\_1 & If master 1 requests to start communicating with the bus, access is eventually granted to master 1 & 
$\Box(\textit{r1}\rightarrow \lozenge\textit{g1})$\\
\hline
guarantee\_2 & If master 2 requests to start communicating with the bus, access is eventually granted to master 2& 
$\Box(\textit{r2}\rightarrow \lozenge\textit{g2})$ \\
\hline
guarantee\_3 & If inactive, the arbiter cannot give access to the bus to any master & 
$\Box(\neg\textit{a} \rightarrow (\neg\textit{g1}\land  \neg \textit{g2}))$ \\
\hline
guarantee\_4 & At most one master may be granted access to the bus at a time & 
$\Box(\neg\textit{g1} \lor \neg\textit{g2})$ \\
\hline
assumption\_1 & The bus is never locked & $\Box(a)$ \\
\hline
\end{tabular}
\caption{GR(1) specification for the arbiter example}
\label{tab:arbiter}
\end{table}

%% file: sections/3-background/3.0-introduction.tex
We present the theoretical background necessary for the remainder of this work. The preliminaries focus on the temporal-logic and reactive-synthesis foundations underlying our approach, including Linear Temporal Logic, GR(1) specifications, and notions of specification weakness, covered in Sections~\ref{subsec:linear_temporal_logic} and \ref{subsec:gr1}.

%% file: sections/3-background/3.1-linear_temporal_logic_over_infinite_and_finite_traces.tex
\subsection{Linear Temporal Logic over Infinite and Finite Traces}
\label{subsec:linear_temporal_logic}

Temporal logics model system evolution over time using Boolean variables. Let $\mathcal{V}$ denote a finite set of propositional variables capturing relevant abstractions of the system at a given instant in time. Using this set of variables, we define a state as follows:

\begin{definition}[State]
Given a finite set of Boolean variables $\mathcal{V}$, a state is a valuation of the variables in $\mathcal{V}$ at a particular instant in time. Equivalently, a state can be represented as a subset $p \subseteq \mathcal{V}$ containing exactly those variables assigned the value \textit{true}. We denote the set of all states over $\mathcal{V}$ by $S = 2^{\mathcal{V}}$.

\end{definition}


This representation captures the system state at a particular instant in time. In discrete-time settings, system behavior can then be modeled as a sequence of consecutive states, referred to as a trace:


\begin{definition}[Trace]
Given a set of Boolean variables $\mathcal{V}$, a trace is an ordered sequence of states over $\mathcal{V}$. A finite trace is an element of $(2^{\mathcal{V}})^*$, while an infinite trace is an element of $(2^{\mathcal{V}})^\omega$.
\end{definition}

Infinite traces correspond equivalently to $\omega$-words in the $\omega$-language literature. Throughout this work, we use the terms \textit{trace} and \textit{word} interchangeably. The notion is also equivalent to the concept of a \textit{run} in automata theory.


Linear Temporal Logic (LTL), introduced in~\cite{pnueli1977temporal}, provides a formal language for specifying and reasoning about temporal properties of execution traces.


\begin{definition}[Linear Temporal Logic]
Linear Temporal Logic (LTL) is a formal language for specifying temporal properties over reactive systems. Given a finite set of Boolean variables $\mathcal{V}$, LTL formulas are inductively generated by the following grammar:
\begin{equation*}
\varphi ::= \top 
\mid \bot 
\mid p 
\mid \neg \varphi 
\mid (\varphi \land \varphi)
\mid \bigcirc\,\varphi
\mid (\varphi \ U \ \varphi)
\end{equation*}
where $p \in \mathcal{V}$ is a propositional variable, $\top$ and $\bot$ denote the Boolean constants \textit{true} and \textit{false}, respectively, $\neg$ and $\land$ denote negation and conjunction, $\bigcirc$ denotes the \emph{next} temporal operator, and $U$ denotes the \emph{strong until} temporal operator.
Additional operators can be derived from this grammar, including $
\varphi \lor \psi \equiv \neg(\neg\varphi \land \neg\psi)
$, 
$\varphi \rightarrow \psi \equiv \neg\varphi \lor \psi $, 
eventually:
$
\lozenge\,\varphi \equiv \top \ U \ \varphi,
$
and globally:
$
\Box\,\varphi \equiv \neg \lozenge\,\neg\varphi.
$
\end{definition}

LTL formulas specify temporal properties over execution traces and are interpreted over infinite sequences of states. Let $\tau = s_0 s_1 s_2 \ldots$ denote an infinite trace over $2^{\mathcal{V}}$, and let $(\tau,i)\models\varphi$ denote that the formula $\varphi$ is satisfied at position $i \geq 0$ of the trace. The semantics of LTL formulas are defined inductively as follows:
\begin{align*}
(\tau, i) &\models p 
& \text{iff} & \quad p \in s_i
\qquad \text{for } p \in \mathcal{V} \\[2mm]
(\tau, i) &\models \neg \varphi
& \text{iff} & \quad (\tau, i) \not\models \varphi \\[2mm]
(\tau, i) &\models \varphi \land \psi
& \text{iff} & \quad (\tau, i) \models \varphi
\ \land \
(\tau, i) \models \psi \\[2mm]
(\tau, i) &\models \bigcirc\,\varphi
& \text{iff} & \quad (\tau, i+1) \models \varphi \\[2mm]
(\tau, i) &\models \varphi \ U \ \psi
& \text{iff} & \quad
\exists k \geq i.\;
(\tau,k)\models\psi
\ \land \
\forall j \in [i,k).\;
(\tau,j)\models\varphi
\end{align*}

For notational convenience, we additionally consider the past-time temporal operators \emph{yesterday} ($Y$) and \emph{historically} ($H$). The operator $Y\,\varphi$ specifies that $\varphi$ held at the previous position in the trace, while $H\,\varphi$ specifies that $\varphi$ has held at all preceding positions. Their semantics are defined as follows:
\begin{align*}
(\tau, i) \models Y\,\varphi
& \iff
(i > 0) \land (\tau, i-1) \models \varphi \\[2mm]
(\tau, i) \models H\,\varphi
& \iff
\forall j < i.\; (\tau,j)\models\varphi
\end{align*}


Finally, our setting additionally requires reasoning over finite traces. To this end, we use the following weak temporal operators from~\cite{de2014reasoning}. The \emph{weak next} operator is defined as
\[
X^\mathcal{W}\varphi \equiv \neg X \neg \varphi,
\]
and specifies that $\varphi$ holds at the next position in the trace, or vacuously when no successor position exists. Similarly, the \emph{weak until} operator is defined as
\[
\varphi_1 \ W \ \varphi_2 \equiv
(\varphi_1 \ U \ \varphi_2)
\lor
G\,\varphi_1,
\]
allowing $\varphi_1$ to hold indefinitely in the event that $\varphi_2$ never becomes true. Using weak until, we further define the \emph{weak eventually} operator:
\[
F^\mathcal{W}\varphi \equiv \top \ W \ \varphi,
\]
which expresses that $\varphi$ may eventually hold, including beyond the finite horizon of a trace.

An LTL formula induces a set of satisfying traces, commonly referred to as its language.



\begin{definition}[$\omega$-Language of an LTL Formula]
Let $\varphi$ be an LTL formula over a finite set of Boolean variables $\mathcal{V}$. The $\omega$-language induced by $\varphi$ is
\[
\mathcal{L}(\varphi)
=
\{
\tau \in (2^{\mathcal{V}})^\omega
\mid
\tau \models \varphi
\}.
\]
That is, $\mathcal{L}(\varphi)$ contains exactly the infinite traces that satisfy $\varphi$. Throughout this work, we use the term \emph{language} to refer to $\omega$-languages whenever no ambiguity arises.
\end{definition}


The language induced by an LTL formula captures the set of behaviours permitted by the specification. This naturally induces a notion of relative permissiveness between formulas: a formula is considered weaker if it admits a larger set of traces.


\begin{definition}[Weakness Relation over LTL Formulas]
\label{def:weakness_ltl}
Let $\varphi_1$ and $\varphi_2$ be LTL formulas over the same set of Boolean variables $\mathcal{V}$.

\begin{itemize}
    \item $\varphi_1 \equiv \varphi_2$ iff $\mathcal{L}(\varphi_1)=\mathcal{L}(\varphi_2)$.

    \item $\varphi_1 \succeq \varphi_2$ iff $\mathcal{L}(\varphi_1)\supseteq \mathcal{L}(\varphi_2)$. In this case, $\varphi_1$ is said to be \emph{weaker than or equivalent to} $\varphi_2$.

    \item $\varphi_1 \succ \varphi_2$ iff $\mathcal{L}(\varphi_1)\supset \mathcal{L}(\varphi_2)$. In this case, $\varphi_1$ is said to be \emph{strictly weaker than} $\varphi_2$.

    \item $\varphi_1 \preceq \varphi_2$ iff $\mathcal{L}(\varphi_1)\subseteq \mathcal{L}(\varphi_2)$. In this case, $\varphi_1$ is said to be \emph{stronger than or equivalent to} $\varphi_2$.

    \item $\varphi_1 \prec \varphi_2$ iff $\mathcal{L}(\varphi_1)\subset \mathcal{L}(\varphi_2)$. In this case, $\varphi_1$ is said to be \emph{strictly stronger than} $\varphi_2$.

\end{itemize}


The relation $\preceq$ is reflexive, antisymmetric, and transitive, and therefore defines a partial order over equivalence classes of LTL formulas under $\equiv$.
\end{definition}

Given a partially ordered set, one may identify elements that are minimal or maximal with respect to the underlying order relation.

\begin{definition}[Minimal and Maximal Elements]
Let $(S,\preceq)$ be a partially ordered set.

\begin{itemize}
    \item An element $e \in S$ is \emph{maximal} if there does not exist an element $e' \in S$ such that $e \prec e'$.

    \item An element $e \in S$ is \emph{minimal} if there does not exist an element $e' \in S$ such that $e' \prec e$.
\end{itemize}
\end{definition}

In the remainder of this paper, the terms \emph{minimal} and \emph{maximal} are used with respect to the partial order over LTL formulas introduced in Definition~\ref{def:weakness_ltl}, restricted to a given set of formulas under consideration.

%% file: sections/3-background/3.2-gr1.tex
\subsection{Generalised Reactivity of Rank 1}
\label{subsec:gr1}
Although LTL is highly expressive and can capture a broad range of temporal properties, including safety and liveness requirements, reactive synthesis from arbitrary LTL specifications is doubly exponential in the size of the formula~\cite{pnueli1989synthesis}. Consequently, practical synthesis approaches often restrict attention to computationally tractable fragments of LTL. One of the most widely adopted fragments is Generalised Reactivity of rank~1~\cite{bloem2012synthesis}, for which symbolic synthesis can be performed in polynomial time with respect to the size of the underlying game structure. In the remainder of this work, we restrict our attention to GR(1) specifications.

We present the definition of GR(1) specifications following~\cite{bloem2012synthesis}, while adopting the modular environment--system decomposition style of~\cite{maoz2020inherent}.

\begin{definition}[GR(1) Specification]
\label{def:gr1-spec}
Let $\mathcal{V}$ be a finite set of Boolean variables partitioned into environment variables $\mathcal{X}$ and system variables $\mathcal{Y}$, such that $\mathcal{X}\cap\mathcal{Y}=\emptyset$ and $\mathcal{X}\cup\mathcal{Y}=\mathcal{V}$.

A GR(1) specification is a tuple
$
\langle \mathcal{E}, \mathcal{S} \rangle,
$
where
$
\mathcal{E} =
\langle
\theta^e,
\rho^e,
\{J_1^e,\dots,J_m^e\}
\rangle
$
and
$
\mathcal{S} =
\langle
\theta^s,
\rho^s,
\{J_1^s,\dots,J_n^s\}
\rangle
$
denote the environment assumptions and system guarantees, respectively.

The components of the specification are defined as follows:

\begin{itemize}
    \item $\theta^e$ and $\theta^s$ are Boolean formulas representing the initial conditions of the environment and system, respectively. The formula $\theta^e$ is defined over $\mathcal{X}$, while $\theta^s$ is defined over $\mathcal{Y}$.

    \item $\rho^e$ and $\rho^s$ are the transition relations of the environment and system, respectively. The environment transition relation $\rho^e$ constrains transitions over current-state variables in $\mathcal{V}$ and next-state environment variables in $\mathcal{X}'$. The system transition relation $\rho^s$ constrains transitions over current-state variables in $\mathcal{V}$ and next-state variables in $\mathcal{X}' \cup \mathcal{Y}'$. These formulas specify the admissible evolution of the environment and system at each time step and are commonly referred to as \emph{safety constraints} or \emph{invariants}.

    \item $J_i^e$ and $J_j^s$ are liveness conditions over $\mathcal{V}$. The environment assumptions $J_i^e$ specify properties that the environment is required to satisfy infinitely often, while the system guarantees $J_j^s$ specify properties that the system must ensure infinitely often.
\end{itemize}
\end{definition}

A key advantage of the GR(1) fragment is that reactive synthesis can be performed in polynomial time whenever the specification is realisable. In this work, we adopt the notion of \emph{strict realisability}, defined below.

\begin{definition}[GR(1) Strict Realisability~\cite{bloem2012synthesis}]
\label{def:gr1-strict-realisability}
A GR(1) specification $\langle \mathcal{E},\mathcal{S}\rangle$ is said to be \emph{strictly realisable} if the following LTL formula is realisable:
\begin{align*}
    \phi :=\;&
    \Big(\theta^e \rightarrow \theta^s\Big)\ \land \nonumber\\
    &\Big(\theta^e \rightarrow \square\big(\mathbf{H}\rho^e \rightarrow \rho^s\big)\Big)\ \land \nonumber\\
    &\Big(\theta^e \land \square\rho^e \rightarrow \Big(
    \bigwedge_{i=1}^{m}\square\lozenge J_i^e
    \rightarrow
    \bigwedge_{j=1}^{n}\square\lozenge J_j^s
    \Big)\Big).
\end{align*}

Intuitively, the three conjuncts capture the following requirements:

\begin{itemize}
    \item If the environment satisfies its initial condition $\theta^e$, then the system must satisfy its initial condition $\theta^s$.

    \item Assuming the environment satisfies its transition relation $\rho^e$ up to the current time step, the system must satisfy its transition relation $\rho^s$ at the current step.

    \item If the environment satisfies its initial condition, continuously respects its transition relation, and satisfies all liveness assumptions $J_i^e$ infinitely often, then the system must satisfy all liveness guarantees $J_j^s$ infinitely often.
\end{itemize}
\end{definition}

Throughout this work, and in particular in the context of the Spectra synthesiser, we frequently reason about individual formulas that constitute a GR(1) specification. In Spectra, the environment assumptions and system guarantees are typically represented as collections of separate constraints rather than as monolithic formulas. For example, the system guarantees may contain multiple invariant constraints, such as $\varphi_1=\mathbf{G}\rho_1^s$ and $\varphi_2=\mathbf{G}\rho_2^s$, which together define the transition relation of the system specification $\mathcal{S}$ in a GR(1) specification $\langle \mathcal{E},\mathcal{S}\rangle$. Analogously, the environment assumptions can be represented as collections of individual formulas.

This representation naturally induces the interpretation that transition relations, such as $\rho^e$ and $\rho^s$, are conjunctions over sets of invariant constraints. The same observation applies to initial conditions $\theta^e$ and $\theta^s$, which may likewise be represented as conjunctions over sets of Boolean assertions.

To facilitate reasoning about individual assumptions and guarantees, we introduce the following syntactic extension of GR(1) specifications, referred to as the \emph{set representation}.

\begin{definition}[Set Representation of a GR(1) Specification]
\label{def:set-representation}
Let $\langle \mathcal{E},\mathcal{S}\rangle$ be a GR(1) specification. The \emph{set representation} of the specification is the tuple $\langle \mathcal{A},\mathcal{G}\rangle$, where $\mathcal{A}$ and $\mathcal{G}$ denote the sets of assumption and guarantee formulas, respectively.

\begin{itemize}
    \item $\mathcal{A}$ is the set of assumption formulas derived from the environment specification\\
    $
    \mathcal{E}=
    \langle
    \theta^e,
    \rho^e,
    \{J_1^e,\dots,J_m^e\}
    \rangle.
    $
    Formally,
    $
    \mathcal{A}
    =
    I^e
    \cup
    \{\mathbf{G}t^e \mid t^e \in T^e\}
    \cup
    \{\mathbf{GF}j^e \mid j^e \in L^e\},
    $
    where:
    \begin{itemize}
        \item $I^e$ is the set of conjuncts of the initial condition $\theta^e$;
        \item $T^e$ is the set of conjuncts of the transition relation $\rho^e$;
        \item $L^e=\{J_1^e,\dots,J_m^e\}$ is the set of environment liveness assumptions.
    \end{itemize}

    \item $\mathcal{G}$ is the set of guarantee formulas derived from the system specification
    $
    \mathcal{S}=
    \langle
    \theta^s,
    \rho^s,
    \{J_1^s,\dots,J_n^s\}
    \rangle.
    $
    Formally,
    $
    \mathcal{G}
    =
    I^s
    \cup
    \{\mathbf{G}t^s \mid t^s \in T^s\}
    \cup
    \{\mathbf{GF}j^s \mid j^s \in L^s\},
    $
    where $I^s$, $T^s$, and $L^s$ are defined analogously to $I^e$, $T^e$, and $L^e$.
\end{itemize}

The partition into sets of initial conditions, transition constraints, and liveness goals is inspired by the abstract syntax representation proposed in~\cite{maoz2020inherent}. However, the broader framework introduced therein is not required for the purposes of this work.

\end{definition}

Intuitively, the assumption set $\mathcal{A}$ captures the expected behaviour of the environment, whereas the guarantee set $\mathcal{G}$ captures the required behaviour of the system. This distinction reflects the asymmetric nature of reactive synthesis: system guarantees correspond to behaviours enforceable by the controller, while environment assumptions characterise expected behaviours of the external environment.

The set representation enables reasoning about GR(1) specifications at the level of individual assumptions and guarantees. In particular, it allows us to compare related specifications obtained through the removal or weakening of specific formulas.

For example, given a specification $\langle \mathcal{A}, \mathcal{G}\rangle$, one may derive a weaker specification by removing an assumption $\varphi \in \mathcal{A}$, yielding
$
\mathcal{A}' = \mathcal{A} \setminus \{\varphi\}.
$
Similarly, one may replace $\varphi$ with a weaker formula $\varphi'$ such that $\varphi' \succeq \varphi$, obtaining
$
\mathcal{A}'' = (\mathcal{A} \setminus \{\varphi\}) \cup \{\varphi'\}.
$

Throughout this work, we primarily use the set representation $\langle \mathcal{A},\mathcal{G}\rangle$ when reasoning about GR(1) specifications, since it facilitates discussion of individual constraints and specification modifications. The tuple representation $\langle \mathcal{E},\mathcal{S}\rangle$ is retained only in situations where the structured decomposition into initial conditions, transition relations, and liveness goals improves clarity or precision.

This style of representation has also been employed in prior work on specification analysis and realisability debugging~\cite{cimatti2008diagnostic}. 
Moving back to the original representation of GR(1), we present the formal background required to understand the GR(1) controller.

Reactive synthesis for GR(1) specifications is commonly formulated as a two-player game between the environment and the system, where the objective of the system is to satisfy its guarantees under the assumptions imposed on the environment. The corresponding game structure is defined as follows.

\begin{definition}[GR(1) Game~\cite{bloem2012synthesis}]
\label{def:gr1-game}
A GR(1) game is a tuple
$
\langle
\mathcal{X},
\mathcal{Y},
\theta^e,
\theta^s,
\rho^e,
\rho^s,
Win
\rangle,
$
where $\mathcal{X}$ and $\mathcal{Y}$ are the environment and system variables, $\theta^e$ and $\theta^s$ are the initial conditions, and $\rho^e$ and $\rho^s$ are the transition relations, as defined in Definition~\ref{def:gr1-spec}.

The winning condition of the game is given by the LTL formula
$
Win =
\Big(
\bigwedge_{i=1}^{m}\mathbf{GF}J_i^e
\Big)
\rightarrow
\Big(
\bigwedge_{j=1}^{n}\mathbf{GF}J_j^s
\Big),
$
which states that if the environment satisfies all of its liveness assumptions infinitely often, then the system must satisfy all of its liveness guarantees infinitely often.
\end{definition}

\begin{definition}[Game State and Projection]
    The state of a game is any valuation $\sigma\in 2^{\mathcal{V}}$ of the union of input and output variables $\mathcal{V}=\mathcal{X} \cup \mathcal{Y}$. For any subset $\mathcal{V}'\subseteq \mathcal{V}$, we call $\sigma|_{\mathcal{V}'}=\sigma \cap \mathcal{V}'$ the projection of $\sigma$ onto set $\mathcal{V}'$.
\end{definition}

To reason about executions of a GR(1) game, we define game states as valuations over the environment and system variables, together with the standard notion of variable projection.

\begin{definition}[Game State and Projection]
A \emph{game state} is a valuation
$
\sigma \in 2^{\mathcal{V}}
$
over the set of variables
$
\mathcal{V} = \mathcal{X} \cup \mathcal{Y}.
$
For any subset of variables $\mathcal{V}' \subseteq \mathcal{V}$, the \emph{projection} of a state $\sigma$ onto $\mathcal{V}'$ is defined as
$
\sigma|_{\mathcal{V}'}
=
\sigma \cap \mathcal{V}'.
$
That is, the projection retains only the variables in $\mathcal{V}'$ that are assigned the value \textit{true} in $\sigma$.
\end{definition}

Strategies provide a formal representation of how a player selects actions during the execution of a game. In the GR(1) setting, synthesis produces a finite-state strategy for the system player.

\begin{definition}[Finite-State Strategy]
A finite-state strategy is a tuple
$
\langle \Gamma,\gamma_0,f\rangle,
$
where:
\begin{itemize}
    \item $\Gamma$ is a finite set of memory states;
    \item $\gamma_0 \in \Gamma$ is the initial memory state;
    \item $f$ is a partial transition function
    $
    f:\Gamma\times Q \rightharpoonup \Gamma\times \Sigma,
    $
    where $Q$ is the set of game states and $\Sigma$ is the set of actions or outputs available to the player.
\end{itemize}

Given a memory state $\gamma \in \Gamma$ and a game state $q \in Q$, the function $f$ returns an updated memory state $\gamma' \in \Gamma$ together with an output action $o \in \Sigma$.
\end{definition}

When the memory structure $\Gamma$ and initial memory state $\gamma_0$ are clear from context, we refer to a finite-state strategy $\langle \Gamma,\gamma_0,f\rangle$ simply by its transition function $f$.


To relate strategies to executions of a GR(1) game, we next define when an infinite play is consistent with the choices prescribed by a strategy.

\begin{definition}[Play--Strategy Compliance~\cite{bloem2012synthesis}]
Let
$
\langle
\mathcal{X},
\mathcal{Y},
\theta^e,
\theta^s,
\rho^e,
\rho^s,
Win
\rangle
$
be a GR(1) game, and let
$
w=\sigma_0\sigma_1\sigma_2\dots \in (2^\mathcal{V})^\omega
$
be an infinite play.

\begin{itemize}
    \item The play $w$ is said to be \emph{compliant} with an environment strategy
    $
    \langle \Gamma^e,\gamma_0^e,f^e\rangle
    $
    with
    $
    f^e :
    \Gamma^e \times 2^\mathcal{V}
    \rightharpoonup
    \Gamma^e \times 2^\mathcal{X},
    $
    iff there exists a sequence of memory states
    $
    \gamma_0^e,\gamma_1^e,\gamma_2^e,\dots
    $
    such that, for every $i\geq 0$,
    $
    f^e(\gamma_i^e,\sigma_i)
    =
    (\gamma_{i+1}^e,\sigma_{i+1}|_\mathcal{X}).
    $

    \item Similarly, the play $w$ is said to be \emph{compliant} with a system strategy
    $
    \langle \Gamma^s,\gamma_0^s,f^s\rangle
    $
    with
    $
    f^s :
    \Gamma^s \times 2^\mathcal{V} \times 2^\mathcal{X}
    \rightharpoonup
    \Gamma^s \times 2^\mathcal{Y},
    $
    iff there exists a sequence of memory states
    $
    \gamma_0^s,\gamma_1^s,\gamma_2^s,\dots
    $
    such that, for every $i\geq 0$,
    $
    f^s(\gamma_i^s,\sigma_i,\sigma_{i+1}|_\mathcal{X})
    =
    (\gamma_{i+1}^s,\sigma_{i+1}|_\mathcal{Y}).
    $
\end{itemize}
\end{definition}

In the GR(1) setting, the environment and the system induce distinct classes of strategies corresponding to their respective roles in the game.

\begin{definition}[Environment and System Strategies]
Let
$
\langle
\mathcal{X},
\mathcal{Y},
\theta^e,
\theta^s,
\rho^e,
\rho^s,
Win
\rangle
$
be a GR(1) game.

An \emph{environment strategy} is a finite-state strategy whose transition function is of the form
$
f^e :
\Gamma^e \times 2^\mathcal{V}
\rightharpoonup
\Gamma^e \times 2^\mathcal{X},
$
where the strategy selects the next valuation of the environment variables.

A \emph{system strategy} is a finite-state strategy whose transition function is of the form
$
f^s :
\Gamma^s \times 2^\mathcal{V} \times 2^\mathcal{X}
\rightharpoonup
\Gamma^s \times 2^\mathcal{Y},
$
where the strategy selects the next valuation of the system variables after observing the next environment move.
\end{definition}

The asymmetry between the two definitions reflects the reactive nature of the game: the environment acts first at each step, while the system responds by choosing a compatible system action.

The objective of reactive synthesis is to construct a system strategy that guarantees satisfaction of the specification against all admissible behaviours of the environment. Such strategies are referred to as winning strategies.

\begin{definition}[Winning Strategy]
Let $\langle \mathcal{E},\mathcal{S}\rangle$ be a GR(1) specification.

A system strategy $f^s$ is said to be \emph{winning} iff every play compliant with $f^s$ satisfies the specification $\langle \mathcal{E},\mathcal{S}\rangle$.

A GR(1) specification is said to be \emph{realisable} iff there exists a winning strategy for the system. Equivalently, the corresponding GR(1) game is said to be \emph{winning for the system}.
\end{definition}

When a GR(1) specification is unrealisable, the environment can enforce a violation of the specification regardless of the behaviour of the system. The corresponding environment strategy is referred to as a counter-strategy.

\begin{definition}[Counter-Strategy]
\label{def:counter-strategy}
Let $\langle \mathcal{E},\mathcal{S}\rangle$ be a GR(1) specification. If the specification is unrealisable, then the corresponding GR(1) game is said to be \emph{winning for the environment}.

An environment strategy $c$ is called a \emph{counter-strategy} iff, for every system strategy $f^s$, every play $w$ compliant with both $c$ and $f^s$ violates the specification. Equivalently, every such play satisfies at least one of the following conditions:

\begin{itemize}
    \item The environment satisfies its initial condition, but the system violates its initial condition:
    $
    w[0]\models\theta^e
    $
    and
    $
    w[0]\not\models\theta^s.
    $

    \item The environment satisfies its transition relation at every step, but the system violates its transition relation at some step:
    $
    \forall i\geq0.\;(w[i],w[i+1])\models\rho^e
    $
    and
    $
    \exists i\geq0.\;(w[i],w[i+1])\not\models\rho^s.
    $

    \item The environment satisfies all of its liveness assumptions infinitely often, but the system fails to satisfy at least one of its liveness guarantees infinitely often:
    $
    \forall i\in\{1,\dots,m\}.\; w\models \mathbf{GF}J_i^e
    $
    and
    $
    \exists j\in\{1,\dots,n\}.\; w\not\models \mathbf{GF}J_j^s.
    $
\end{itemize}
\end{definition}

Counter-strategies characterise sets of environment behaviours that force the system to violate the specification. Individual executions induced by such strategies are referred to as counter-plays.

\begin{definition}[Counter-Play]
Let $c$ be a counter-strategy for an unrealisable GR(1) specification $\langle \mathcal{E},\mathcal{S}\rangle$. A \emph{counter-play} of $c$ is an infinite play that is compliant with the counter-strategy $c$.
\end{definition}

%% file: sections/4-reality_integration/4.0-introduction.tex
In this section, we formally define the adaptation problem considered throughout the paper. Section~\ref{subsec:rint_problem_definition} introduces the problem setting in which behaviours observed during deployment violate the assumptions of the original specification, requiring the synthesis of an adapted realisable specification that incorporates the newly observed environment behaviour while preserving as much of the original specification intent as possible.

Section~\ref{subsec:in_search_for_an_optimal_solution_to_the_rint_problem} then investigates the problem of comparing alternative adaptations. In particular, we identify desirable properties of adaptation solutions, analyse trade-offs between competing objectives, and derive a set of axiomatic criteria characterising preferred adaptations. We additionally discuss circumstances under which different solutions cannot be meaningfully ordered without additional contextual information or design priorities.

Finally, Section~\ref{subsec:trivial-solution} presents a baseline adaptation procedure based on specification-reduction operations. We formally prove that the proposed algorithm always produces valid solutions to the adaptation problem and satisfies the optimality criteria established earlier in the section.

%% file: sections/4-reality_integration/4.1-rint_problem_definition.tex
\subsection{Problem Definition}
\label{subsec:rint_problem_definition}


Reactive synthesis under assumption violations has previously been studied through approaches based on specification degradation and repair~\cite{buckworth2023adapting}. In this setting, the deployed system encounters behaviours of the environment that are not captured by the assumptions of the original specification, causing the theoretical guarantees of the synthesised controller to no longer apply. The objective of adaptation is therefore to recover a realisable specification that accounts for the newly observed behaviour while preserving as much of the original system intent as possible.

We model such unexpected observations using the notion of a \emph{violation trace}.

\begin{definition}[Violation Trace]
Let $\mathcal{A}$ be a set of environment assumptions over a set of environment variables $\mathcal{X}$. A finite trace
$
\tau\in(2^\mathcal{X})^*
$
is called a \emph{violation trace} iff
$
\tau\not\models\mathcal{A}.
$

When a violation trace is obtained from the execution of a deployed system, we may also refer to it as an \emph{observed behaviour} or \emph{newly observed behaviour}.
\end{definition}

Given a violation trace, the goal of adaptation is to revise the original specification so that the newly observed behaviour becomes admissible while maintaining as much of the original specification structure and system behaviour as possible. This leads to the following problem definition.

\begin{definition}[Adaptation Problem]
\label{def:rca}
Let
$
\varphi=\langle\mathcal{A},\mathcal{G}\rangle
$
be a realisable specification, and let
$
\tau\in(2^\mathcal{X})^*
$
be a violation trace such that
$
\tau\not\models\mathcal{A}.
$

An \emph{adaptation} of the tuple $(\varphi,\tau)$ is a specification
$
\varphi'=\langle\mathcal{A}',\mathcal{G}'\rangle
$
such that:

\begin{enumerate}
    \item $\tau\models\mathcal{A}'$;

    \item $\mathcal{A}\prec\mathcal{A}'$;

    \item $\mathcal{G}\preceq\mathcal{G}'$;

    \item $\varphi'$ is realisable.
\end{enumerate}

We denote by
$
AllSolutions(\varphi,\tau)
$
the set of all specifications satisfying the conditions above. If
$
\varphi'\in AllSolutions(\varphi,\tau),
$
we say that $\varphi'$ is a solution to the adaptation problem induced by $(\varphi,\tau)$.
\end{definition}



Intuitively, Property~1 ensures that the newly observed behaviour is incorporated into the adapted environment model. Property~2 requires the adapted assumptions to be weaker than the original assumptions, thereby preserving all previously admissible environment behaviours while additionally admitting the observed violation trace. Property~3 requires the adapted guarantees to preserve as much of the original system behaviour as possible by ensuring that all previously guaranteed behaviours remain permitted by the adapted specification. Finally, Property~4 ensures that the resulting specification remains realisable and therefore admits a correct-by-construction implementation.


A trivial solution to the adaptation problem is obtained by replacing both the assumptions and guarantees with $\top$, thereby allowing all environment and system behaviours. Although such a specification satisfies the formal conditions above, it discards all meaningful behavioural guarantees and is therefore uninformative in practice. Consequently, the central challenge of adaptation lies not merely in recovering realisability, but in identifying preferred solutions that preserve as much of the original specification intent as possible.


%% file: sections/4-reality_integration/4.2-optimality_of_rint_solutions.tex
\subsection{Preference over Comparable Adaptations}
\label{subsec:in_search_for_an_optimal_solution_to_the_rint_problem}


The adaptation problem introduced in Section~\ref{subsec:rint_problem_definition} admits, in general, multiple valid solutions. Consequently, realisability alone is insufficient for determining which adapted specification should be preferred in practice. In this subsection, we investigate how alternative adaptations may be compared and ordered according to desirable behavioural properties.

Our objective is to derive principled criteria for selecting preferred adaptations while preserving as much of the original specification intent as possible. To this end, we first analyse pairwise comparisons between candidate solutions and study how differences in assumptions and guarantees influence adaptation quality. Building upon this analysis, we derive a collection of preference axioms that induce a partial order over the space of valid adaptations. We refer to the ordering induced between two adaptation solutions as \emph{preference}, formalised below.

\begin{definition}[Specification Preference]
\label{def:preference}
Let $(\varphi,\tau)$ be an adaptation problem, and let
$
\varphi_1,\varphi_2 \in AllSolutions(\varphi,\tau)
$
be two valid adaptation solutions.

We use the following notation:
\begin{itemize}
    \item $\varphi_1 \sqsubset \varphi_2$ to denote that $\varphi_1$ is preferred to $\varphi_2$;

    \item $\varphi_1 \sqsupset \varphi_2$ to denote that $\varphi_2$ is preferred to $\varphi_1$.
\end{itemize}
\end{definition}

The preference criteria developed in this subsection induce a partial order over the space of adaptation solutions, allowing the extraction of preferred solutions from a candidate solution set.

\subsubsection{Reasoning about comparable specifications}

Consider an adaptation problem $(\varphi,\tau)$ together with two semantically distinct adaptation solutions
$
\varphi_1'=\langle\mathcal{A}_1',\mathcal{G}_1'\rangle
$
and
$
\varphi_2'=\langle\mathcal{A}_2',\mathcal{G}_2'\rangle.
$

To compare alternative adaptations, we use the logical weakness relation introduced in Definition~\ref{def:weakness_ltl}. We begin by considering the following representative cases involving logically comparable assumptions and guarantees (incomparable cases are considered  in Section~\ref{subsubsec:dealing_with_uncomparability}):
\begin{enumerate}
    \item $\mathcal{A}_1'\equiv \mathcal{A}_2'$ and $\mathcal{G}_1'\prec \mathcal{G}_2'$;

    \item $\mathcal{A}_1'\prec \mathcal{A}_2'$ and $\mathcal{G}_1'\equiv \mathcal{G}_2'$;

    \item $\mathcal{A}_1'\prec \mathcal{A}_2'$ and $\mathcal{G}_1'\prec \mathcal{G}_2'$;

    \item $\mathcal{A}_1'\succ \mathcal{A}_2'$ and $\mathcal{G}_1'\prec \mathcal{G}_2'$.
\end{enumerate}



When two adaptation solutions differ only in their guarantees, existing work on specification repair and degradation favours the  stronger guarantee~\cite{cavezza2017interpolation,maoz2019symbolic,buckworth2023adapting}. This preference follows naturally from the fact that
$
\mathcal{G}\prec\mathcal{G}_1'\prec\mathcal{G}_2'.
$
Consequently, $\varphi_1'$ preserves more of the original system behaviour encoded in $\mathcal{G}$ and introduces fewer additional behaviours beyond those intended by the original specification. Since stronger guarantees correspond to smaller deviations from the original specification intent, we adopt the following preference axiom.

\begin{axiom}
\label{axiom:preference_1}
Let $(\varphi,\tau)$ be an adaptation problem, and let
$
\varphi_1'=\langle\mathcal{A}_1',\mathcal{G}_1'\rangle
$
and
$
\varphi_2'=\langle\mathcal{A}_2',\mathcal{G}_2'\rangle
$
be two adaptation solutions.
If the assumptions are equivalent and the guarantees of $\varphi_1'$ are stronger than those of $\varphi_2'$, i.e.,
$
\mathcal{A}_1'\equiv\mathcal{A}_2'
$
and
$
\mathcal{G}_1'\prec\mathcal{G}_2',
$
then $\varphi_1'$ is preferred to $\varphi_2'$. 
\end{axiom}

Existing work on assumption refinement typically favours the weakest assumptions capable of preserving a fixed set of guarantees~\cite{cavezza2017interpolation,gaaloul2020mining,shalom2023my,maoz2019symbolic}. Under this interpretation, the preferred adaptation would, therefore, be the one with the weakest assumption set.

However, the preference criteria used in classical assumption refinement do not directly transfer to the adaptation setting considered in this work. Assumption refinement typically seeks to maximise the set of admissible environment behaviours while preserving a fixed set of guarantees. In contrast, our setting begins from an already deployed specification whose assumptions have been violated by newly observed behaviour. The objective is therefore not unrestricted weakening of the environment model, but rather controlled adaptation that incorporates the observed behaviour while preserving as much of the original specification intent as possible.

To illustrate the implications of this distinction, consider extending the comparison setting from two adaptation solutions to three solutions. Let
$
\varphi_1'=\langle\mathcal{A}_1',\mathcal{G}_1'\rangle,
$
$
\varphi_2'=\langle\mathcal{A}_2',\mathcal{G}_2'\rangle,
$
and
$
\varphi_3'=\langle\mathcal{A}_3',\mathcal{G}_3'\rangle
$
be three adaptation solutions satisfying the following properties:
\begin{enumerate}
    \item $\mathcal{A}_2'\succ \mathcal{A}_1'$ and  $\mathcal{A}_3'\succ \mathcal{A}_1'$;
    \item $\mathcal{A}_3'$ and $\mathcal{A}_2'$ are incomparable under the weakness relation;

    \item $\mathcal{G}_3'\equiv\mathcal{G}_1'\equiv\mathcal{G}_2'\equiv\mathcal{G}$.
\end{enumerate}


The implication relations between the assumptions are visualised in Figure~\ref{fig:overfitting_assumption_weakening}. Consider a deployed reactive system whose controller has been synthesised from a specification
$
\varphi=\langle\mathcal{A},\mathcal{G}\rangle.
$
Suppose that, during execution, the system observes a trace
$
\tau
$
such that
$
\tau\not\models\mathcal{A},
$
i.e., the observed behaviour violates the assumptions of the original specification.

In this setting, two important sources of uncertainty arise:
\begin{itemize}
    \item the true dynamics of the environment are generally unknown and may not be fully characterisable;

    \item the observed violation trace $\tau$ does not necessarily characterise the complete set of environment behaviours excluded by $\mathcal{A}$.
\end{itemize}


Existing work on specification learning and repair typically seeks to identify the weakest set of assumptions under which a fixed set of guarantees remains realisable. The underlying objective is to maximise the robustness of the synthesised system by enlarging the set of admissible environment behaviours while preserving the intended guarantees. Intuitively, if the same guarantees can be enforced under weaker assumptions, then the resulting specification is considered more robust, since the system can operate correctly under a broader range of environmental conditions.

While this perspective is well-motivated, it becomes less clear in the adaptation setting considered here. In particular, temporal logic specifications characterise infinite sets of behaviours, and logical weakness induces only a partial order over these sets. Consequently, multiple incomparable assumption sets may exist that are each minimal or maximally permissive with respect to a given guarantee set. In such cases, selecting between incomparable weakenings based solely on permissiveness may admit behaviours unsupported by the observed violation trace. We refer to this phenomenon as \emph{overgeneralisation}.


\begin{figure}
\centering
\begin{tikzpicture}[
    node distance=0.5cm and 0.1cm,
    every node/.style={draw, rounded corners, minimum width=2.5cm, minimum height=0.9cm, align=center},
    arr/.style={->, thick},
    box/.style={draw=red, thick},
    unrealisable/.style={->, thick,color=red},
    realisable/.style={->, thick,color=green},
    labelnode/.style={draw=none, rectangle, inner sep=1pt}
]

\node (A) {$\mathcal{A}$};
\node (Ap) [below=of A] {$\mathcal{A}_1'$};

\node (A1pp) [below left=of Ap] {$\mathcal{A}_2'$};
\node (A2pp) [below right=of Ap] {$\mathcal{A}_3'$};

\node (Appp) [below=1.5cm of $(A1pp)!0.5!(A2pp)$] {$\mathcal{A}''$};

\node (G) [right=5cm of $(A1pp)!0.5!(A2pp)$] {$\mathcal{G}$};
\node (Gp) [right=5cm of $(Appp)!0.5!(Appp)$] {$\mathcal{G}'$};

\draw[arr] (A) -- (Ap);
\draw[arr] (Ap) -- (A1pp);
\draw[arr] (Ap) -- (A2pp);
\draw[arr] (A1pp) -- (Appp);
\draw[arr] (A2pp) -- (Appp);

\draw[arr] (G) -- (Gp);

\draw[unrealisable] (Appp) -- node[labelnode, sloped, below, allow upside down] {unrealisable} (G);

\end{tikzpicture}
\caption{Illustration of successive specification weakenings during adaptation. $\mathcal{A}$ and $\mathcal{G}$ denote the original assumption and guarantee sets, respectively. The sets $\mathcal{A}_1'$, $\mathcal{A}_2'$, $\mathcal{A}_3'$, and $\mathcal{A}''$ represent weakened assumption sets derived from $\mathcal{A}$, while $\mathcal{G}'$ denotes a weakened guarantee set derived from $\mathcal{G}$. Directed edges represent the strict weakness relation $\prec$. Nodes with no directed path between them are incomparable under the weakness relation.}
\label{fig:overfitting_assumption_weakening}
\end{figure}

To illustrate the risks of overgeneralisation in the adaptation setting, we return to Figure~\ref{fig:overfitting_assumption_weakening}. Suppose the adaptation problem $(\varphi,\tau)$ admits multiple solutions requiring only assumption weakening to recover realisability. Formally,
$
AllSolutions(\varphi,\tau)
\supseteq
\{\varphi_1',\varphi_2',\varphi_3'\},
$
where
$
\varphi_1'=\langle\mathcal{A}_1',\mathcal{G}\rangle,
$
$
\varphi_2'=\langle\mathcal{A}_2',\mathcal{G}\rangle,
$
and
$
\varphi_3'=\langle\mathcal{A}_3',\mathcal{G}\rangle.
$
The logical relationships between the assumption sets $\mathcal{A}_1'$, $\mathcal{A}_2'$, and $\mathcal{A}_3'$ are illustrated in Figure~\ref{fig:overfitting_assumption_weakening}.

Suppose that, guided solely by the preference for weaker assumptions under equivalent guarantees, the adaptation process selects $\varphi_3'$ (the same argument applies symmetrically to $\varphi_2'$). Consider now the later observation of a second violation trace $\tau_2$ such that
$
\tau_2 \models \mathcal{A}_2'
$
but
$
\tau_2 \not\models \mathcal{A}_3'.
$

In this setting, the previously selected adaptation $\varphi_3'$ no longer accommodates the newly observed behaviour. Consequently, the adaptation process must further weaken the assumptions to some set
$
\mathcal{A}'' \succ \mathcal{A}_3'
$
such that
$
\tau_2 \models \mathcal{A}''.
$
Since $\mathcal{A}_2'$ already admits $\tau_2$, it follows that the new assumption set must additionally satisfy
$
\mathcal{A}'' \succ \mathcal{A}_2'.
$

The difficulty is that the resulting specification
$
\langle\mathcal{A}'',\mathcal{G}\rangle
$
may no longer remain realisable. Recovering realisability may therefore require an additional weakening of the guarantees to some
$
\mathcal{G}' \succ \mathcal{G},
$
yielding a realisable specification
$
\varphi''=\langle\mathcal{A}'',\mathcal{G}'\rangle.
$

Intuitively, this sequence of adaptations illustrates the risk of prematurely selecting overly permissive assumptions based on limited observations. Although $\varphi_3'$ initially appeared preferable under a weakest-assumption criterion, it may ultimately force stronger degradation of guarantees during subsequent adaptations. In contrast, the alternative solution
$
\varphi_2'=\langle\mathcal{A}_2',\mathcal{G}\rangle
$
would already accommodate the second violation trace without requiring any degradation of the guarantees.

Consequently, adaptation procedures that prioritise aggressively weak assumptions based on limited observations are susceptible to overgeneralisation, potentially leading to poorer long-term preservation of system behaviour.

Two alternative adaptation strategies are worth discussing.
First, one could attempt to refine assumptions after additional violations are observed, rather than continuing assumption weakening. Existing approaches to specification refinement achieve this either through reasoning over the current guarantee set~\cite{cavezza2017interpolation,gaaloul2020mining} or through the incorporation of additional behavioural data~\cite{keegan2020control}. However, the former risks overfitting assumptions to the current guarantees, while the latter may require substantially more observational data than is realistically available during deployment.
Second, in the setting of Figure~\ref{fig:overfitting_assumption_weakening}, one may consider the disjunctive assumption set
$
\mathcal{A}_2' \lor \mathcal{A}_3'
$
together with the original guarantees $\mathcal{G}$. Although such a construction may preserve realisability in some cases, the resulting specification does not, in general, remain within the GR(1) fragment. A counterexample is provided in Appendix~\ref{apx:counter_example}.
These observations further motivate a conservative adaptation strategy based on gradual weakening of assumptions while preserving stronger guarantees whenever possible.

We therefore obtain the following preference axiom.

\begin{axiom}
\label{axiom:preference_2}
Let $(\varphi,\tau)$ be an adaptation problem, and let
$
\varphi_1'=\langle\mathcal{A}_1',\mathcal{G}_1'\rangle
$
and
$
\varphi_2'=\langle\mathcal{A}_2',\mathcal{G}_2'\rangle
$
be two adaptation solutions.

If the guarantees are equivalent and the assumptions of $\varphi_1'$ are stronger than those of $\varphi_2'$, i.e.,
$
\mathcal{G}_1'\equiv\mathcal{G}_2'
$
and
$
\mathcal{A}_1'\prec\mathcal{A}_2',
$
then $\varphi_1'$ is preferred to $\varphi_2'$. Formally,
$
(\mathcal{A}_1'\prec\mathcal{A}_2')
\land
(\mathcal{G}_1'\equiv\mathcal{G}_2')
\Rightarrow
\varphi_1' \sqsubset \varphi_2'.
$
\end{axiom}

Axioms~\ref{axiom:preference_1} and~\ref{axiom:preference_2} immediately yield the following combined preference criterion.

\begin{axiom}
\label{axiom:preference_3}
Let $(\varphi,\tau)$ be an adaptation problem, and let
$
\varphi_1'=\langle\mathcal{A}_1',\mathcal{G}_1'\rangle
$
and
$
\varphi_2'=\langle\mathcal{A}_2',\mathcal{G}_2'\rangle
$
be two adaptation solutions.

If both the assumptions and guarantees of $\varphi_1'$ are stronger than those of $\varphi_2'$, i.e.,
$
\mathcal{A}_1'\prec\mathcal{A}_2'
$
and
$
\mathcal{G}_1'\prec\mathcal{G}_2',
$
then $\varphi_1'$ is preferred to $\varphi_2'$. Formally,
$
(\mathcal{A}_1'\prec\mathcal{A}_2')
\land
(\mathcal{G}_1'\prec\mathcal{G}_2')
\Rightarrow
\varphi_1' \sqsubset \varphi_2'.
$
\end{axiom}


Finally, the remaining comparison case can be reduced to the previous preference criteria through the following lemma.

\begin{lemma}
\label{lemma:exists_a_third_solution}
Let $(\varphi,\tau)$ be an adaptation problem. Suppose
$
\varphi_1'=\langle\mathcal{A}_1',\mathcal{G}_1'\rangle
$
and
$
\varphi_2'=\langle\mathcal{A}_2',\mathcal{G}_2'\rangle
$
are adaptation solutions satisfying
$
\mathcal{A}_1'\succ\mathcal{A}_2'
$
and
$
\mathcal{G}_1'\prec\mathcal{G}_2'.
$
Then
$
\varphi_3'=\langle\mathcal{A}_2',\mathcal{G}_1'\rangle
$
is also a valid adaptation solution.
\end{lemma}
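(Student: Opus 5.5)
The plan is to verify the four conditions of Definition~\ref{def:rca} for $\varphi_3'=\langle\mathcal{A}_2',\mathcal{G}_1'\rangle$ one at a time. Each condition is inherited either from $\varphi_1'$ or from $\varphi_2'$, using only that both are solutions of $(\varphi,\tau)$ and the hypothesis $\mathcal{A}_2'\prec\mathcal{A}_1'$.

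First, the three purely semantic conditions follow directly.
\begin{itemize}
\item Condition~1, $\tau\models\mathcal{A}_2'$, holds because $\varphi_2'\in AllSolutions(\varphi,\tau)$.
\item Condition~2, $\mathcal{A}\prec\mathcal{A}_2'$, holds for the same reason.
\item Condition~3, $\mathcal{G}\preceq\mathcal{G}_1'$, holds because $\varphi_1'\in AllSolutions(\varphi,\tau)$.
\end{itemize}
None of these uses the ordering hypotheses. They only use the fact that the assumption component is taken from one solution and the guarantee component from the other.

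The substantive step is Condition~4, realisability of $\varphi_3'$. Here I would take a winning system strategy $f^s$ for $\varphi_1'$, which exists since $\varphi_1'$ is realisable, and show that it is also winning for $\varphi_3'$. Let $w$ be any play compliant with $f^s$.
\begin{itemize}
\item If $w\not\models\mathcal{A}_2'$, then $w$ satisfies the assume--guarantee specification $\varphi_3'$ vacuously.
\item If $w\models\mathcal{A}_2'$, then $\mathcal{L}(\mathcal{A}_2')\subset\mathcal{L}(\mathcal{A}_1')$ (from $\mathcal{A}_1'\succ\mathcal{A}_2'$) gives $w\models\mathcal{A}_1'$. Since $f^s$ wins for $\varphi_1'$, we get $w\models\mathcal{G}_1'$.
\end{itemize}
Hence every compliant play satisfies $\mathcal{A}_2'\rightarrow\mathcal{G}_1'$. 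In other words, the winning condition of $\varphi_3'$ is implied by that of $\varphi_1'$: strengthening the antecedent of an implication weakens the implication.

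The step I expect to be the main obstacle is making this argument compatible with \emph{strict} realisability (Definition~\ref{def:gr1-strict-realisability}) rather than the plain implication semantics. Under strict semantics, the safety conjunct $\theta^e\rightarrow\square(\mathbf{H}\rho^e\rightarrow\rho^s)$ is evaluated on finite prefixes. A prefix consistent with $\rho^e_2$ need not obviously be consistent with $\rho^e_1$ when the two are compared only through language inclusion of infinite traces, for instance because of environment deadlocks.

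I would first handle this by arguing at the level of the winning-strategy definition used in the paper, where a play satisfies the specification as a whole. In that reading the monotonicity argument above goes through unchanged. If prefix-level safety must be respected, I would additionally argue prefix-wise: every finite prefix that satisfies $\theta^e_2\land\mathbf{H}\rho^e_2$ and can be extended to an infinite trace in $\mathcal{L}(\mathcal{A}_2')$ is also a prefix of a trace in $\mathcal{L}(\mathcal{A}_1')$. I would then restrict attention to such extendable prefixes, which is the standard well-separation caveat. Any remaining gap would be noted explicitly as an assumption on the specifications.
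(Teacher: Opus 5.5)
Your proof follows the paper's: $\tau\models\mathcal{A}_2'$ and $\mathcal{A}\prec\mathcal{A}_2'$ are inherited from $\varphi_2'$, $\mathcal{G}\preceq\mathcal{G}_1'$ from $\varphi_1'$, and realisability of $\langle\mathcal{A}_2',\mathcal{G}_1'\rangle$ comes from monotonicity of realisability under strengthening of assumptions, which the paper only asserts and you justify by reusing a winning strategy of $\varphi_1'$. Your strict-realisability caveat is a real subtlety that the paper's one-line appeal to monotonicity does not address: $\mathcal{L}(\mathcal{A}_2')\subset\mathcal{L}(\mathcal{A}_1')$ does not make every $\theta^e_2\land\mathbf{H}\rho^e_2$-consistent prefix $\rho^e_1$-consistent, and without an extra hypothesis the lemma can fail under strict semantics (e.g.\ when $\mathcal{A}_2'$ lets the environment move, without breaking $\rho^e_2$, into a region where its justice assumptions are unsatisfiable), so stating the extendability assumption explicitly, as you propose, is the right call.
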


\begin{proof}
Since $\varphi_2'$ is a valid adaptation solution,
$
\tau\models\mathcal{A}_2'
$
and
$
\mathcal{A}\prec\mathcal{A}_2'.
$
Moreover, since
$
\mathcal{A}_2'\prec\mathcal{A}_1'
$
and $\varphi_1'$ is realisable, monotonicity of realisability under strengthening of assumptions implies that
$
\langle\mathcal{A}_2',\mathcal{G}_1'\rangle
$
is also realisable.

Finally, because $\varphi_1'$ is a valid adaptation solution,
$
\mathcal{G}\preceq\mathcal{G}_1'.
$
Therefore,
$
\varphi_3'=\langle\mathcal{A}_2',\mathcal{G}_1'\rangle
$
satisfies all conditions of the adaptation problem and is consequently a valid adaptation solution.
\end{proof}

Lemma~\ref{lemma:exists_a_third_solution} reduces the remaining comparison case to Axiom~\ref{axiom:preference_1}, since $\varphi_3'$ and $\varphi_2'$ differ only in their guarantees.

%
%

\subsubsection{Reasoning about Incomparable Adaptations}
\label{subsubsec:dealing_with_uncomparability}

Logical weakness induces only a partial order over specifications. Consequently, not all adaptation solutions can be compared directly using implication-based preference criteria. In particular, distinct adaptations may permit different, non-overlapping classes of behaviours while remaining equally valid solutions to the adaptation problem.

In such cases, preference cannot generally be established without introducing additional domain knowledge, engineering priorities, or external optimisation criteria. We therefore analyse the following representative cases involving incomparable assumptions and guarantees.

Let
$
\varphi_1=\langle\mathcal{A}_1,\mathcal{G}_1\rangle
$
and
$
\varphi_2=\langle\mathcal{A}_2,\mathcal{G}_2\rangle
$
be two solutions to the adaptation problem $(\varphi,\tau)$. We consider the following cases:
\begin{enumerate}
    \item $\mathcal{A}_1$ and $\mathcal{A}_2$ are incomparable, and $\mathcal{G}_1\equiv\mathcal{G}_2$;

    \item $\mathcal{A}_1$ and $\mathcal{A}_2$ are incomparable, and $\mathcal{G}_1\prec\mathcal{G}_2$;

    \item $\mathcal{A}_1\equiv\mathcal{A}_2$, and $\mathcal{G}_1$ and $\mathcal{G}_2$ are incomparable;

    \item $\mathcal{A}_1\prec\mathcal{A}_2$, and $\mathcal{G}_1$ and $\mathcal{G}_2$ are incomparable;

    \item $\mathcal{A}_1$ and $\mathcal{A}_2$ are incomparable, and $\mathcal{G}_1$ and $\mathcal{G}_2$ are incomparable.
\end{enumerate}

Reasoning about incomparable adaptations may require considering additional candidate solutions beyond the original pair under comparison. This is already evident in the first case above, where the existence of a third adaptation directly influences the preference ordering.

To analyse Case~1, recall that solutions to the adaptation problem satisfy $\mathcal{A}\prec\mathcal{A}_1$ and $\mathcal{A}\prec\mathcal{A}_2$. Since logical weakness corresponds to language inclusion, it follows that $\mathcal{A}\preceq(\mathcal{A}_1 \cup \mathcal{A}_2)$, where the union operator denotes conjunction over the combined assumption sets.

Furthermore, because $\tau\models\mathcal{A}_1$ and $\tau\models\mathcal{A}_2$, the violation trace $\tau$ also satisfies the combined assumption set $\mathcal{A}_1\cup\mathcal{A}_2$. Since $\tau\not\models\mathcal{A}$, we obtain $\mathcal{A}\prec(\mathcal{A}_1\cup\mathcal{A}_2)$.

Consequently, the specification $\varphi'=\langle\mathcal{A}_1\cup\mathcal{A}_2,\mathcal{G}\rangle$ may itself constitute a valid adaptation solution. If realisable, Axiom~\ref{axiom:preference_2} implies that $\varphi'$ is preferred to both $\varphi_1$ and $\varphi_2$, since it preserves strictly stronger assumptions while maintaining the same guarantees. This observation yields the following lemma.

\begin{lemma}
\label{lemma:exists_a_third_solution_uncomparable}
Let $(\varphi,\tau)$ be an adaptation problem. Suppose
$\varphi_1'=\langle\mathcal{A}_1',\mathcal{G}'\rangle$ and
$\varphi_2'=\langle\mathcal{A}_2',\mathcal{G}'\rangle$
are specification adaptations whose assumption sets are incomparable under the weakness relation. Then
$\varphi_3'=\langle\mathcal{A}_1'\cup\mathcal{A}_2',\mathcal{G}'\rangle$
is also a valid specification adaptation.
\end{lemma}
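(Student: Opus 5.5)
The plan is to check directly that $\varphi_3'=\langle\mathcal{A}_1'\cup\mathcal{A}_2',\mathcal{G}'\rangle$ satisfies the four conditions of Definition~\ref{def:rca}. Throughout, I read $\mathcal{A}_1'\cup\mathcal{A}_2'$ in the set representation of Definition~\ref{def:set-representation}, i.e.\ as the conjunction of all formulas in both sets. Two consequences of this reading are used repeatedly. First, the union of two sets of initial, invariant and justice assumptions is again such a set, so $\varphi_3'$ stays in the GR(1) fragment. Second, $\mathcal{L}(\mathcal{A}_1'\cup\mathcal{A}_2')=\mathcal{L}(\mathcal{A}_1')\cap\mathcal{L}(\mathcal{A}_2')$, so $\mathcal{A}_1'\cup\mathcal{A}_2'\preceq\mathcal{A}_1'$ and $\mathcal{A}_1'\cup\mathcal{A}_2'\preceq\mathcal{A}_2'$.

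I would carry out the steps in the following order.
\begin{enumerate}
    \item \emph{Condition~3.} This is immediate, since $\mathcal{G}\preceq\mathcal{G}'$ holds because $\varphi_1'$ is a solution.
    \item \emph{Condition~1.} Both $\tau\models\mathcal{A}_1'$ and $\tau\models\mathcal{A}_2'$ hold. Satisfaction of a set of formulas by the finite trace $\tau$ is evaluated formula by formula, so $\tau$ satisfies every member of $\mathcal{A}_1'\cup\mathcal{A}_2'$, hence $\tau\models\mathcal{A}_1'\cup\mathcal{A}_2'$.
    \item \emph{Condition~4.} I would reuse the monotonicity argument from the proof of Lemma~\ref{lemma:exists_a_third_solution}. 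Strengthening the assumptions while keeping the guarantees fixed preserves realisability: any winning strategy for $\langle\mathcal{A}_1',\mathcal{G}'\rangle$ meets fewer plays in which the premises of the strict-realisability formula of Definition~\ref{def:gr1-strict-realisability} hold, so it is also winning for $\langle\mathcal{A}_1'\cup\mathcal{A}_2',\mathcal{G}'\rangle$.
    \item \emph{Condition~2,} which has two parts.
    \begin{itemize}
        \item For the non-strict inclusion, $\mathcal{A}\prec\mathcal{A}_1'$ and $\mathcal{A}\prec\mathcal{A}_2'$ give $\mathcal{L}(\mathcal{A})\subseteq\mathcal{L}(\mathcal{A}_1')\cap\mathcal{L}(\mathcal{A}_2')$, hence $\mathcal{A}\preceq\mathcal{A}_1'\cup\mathcal{A}_2'$.
        \item For strictness, I would argue as in the discussion preceding the lemma. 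Step~2 gives $\tau\models\mathcal{A}_1'\cup\mathcal{A}_2'$, while $\tau\not\models\mathcal{A}$, so the two assumption sets are not equivalent and $\mathcal{A}\prec\mathcal{A}_1'\cup\mathcal{A}_2'$.
    \end{itemize}
\end{enumerate}

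The main obstacle is the strictness part of Step~4. The weakness relation is defined over $\omega$-languages, but $\tau$ is finite and its satisfaction is judged with the weak finite-trace operators. To be fully rigorous, I would exhibit an infinite trace in $\mathcal{L}(\mathcal{A}_1'\cup\mathcal{A}_2')\setminus\mathcal{L}(\mathcal{A})$.

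\begin{itemize}
    \item \emph{Failure side.} $\tau$ violates some formula of $\mathcal{A}$ within its finite prefix. The violated formula is an initial condition or an invariant, since a justice property cannot be refuted on a finite trace under weak semantics. So every infinite extension of $\tau$ lies outside $\mathcal{L}(\mathcal{A})$.
    \item \emph{Extension side.} I need one extension of $\tau$ that satisfies both $\mathcal{A}_1'$ and $\mathcal{A}_2'$. I would first try to obtain it from the fact that the environment safety part of a GR(1) specification can be assumed non-blocking, i.e.\ a well-separated environment that can always continue.
    \item \emph{Fallback.} If a common extension cannot be justified from the results so far, I would instead adopt the paper's convention, used in the argument just before this lemma. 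That convention treats $\tau\models\mathcal{A}'$ together with $\tau\not\models\mathcal{A}$ as witnessing $\mathcal{A}\prec\mathcal{A}'$, and I would state this assumption explicitly.
\end{itemize}
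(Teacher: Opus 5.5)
Your plan follows the paper's own argument, which is the paragraph preceding the lemma. It reads $\cup$ as conjunction, obtains $\mathcal{A}\preceq\mathcal{A}_1'\cup\mathcal{A}_2'$ from $\mathcal{L}(\mathcal{A})\subseteq\mathcal{L}(\mathcal{A}_1')\cap\mathcal{L}(\mathcal{A}_2')$, gets condition~1 formula by formula, and infers strictness from $\tau\not\models\mathcal{A}$. Your realisability step is sound, and it supplies what the paper only hedges (``if realisable''): the union is a syntactic superset of $\mathcal{A}_1'$, so every conjunct of the strict-realisability formula only gets weaker.

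The genuine gap is the strictness step you flagged. It cannot be closed, because the lemma is false as stated. Take:
\begin{itemize}
\item an environment variable $p$ and a system variable $q$;
\item $\mathcal{A}=\{\Box p\}$, $\mathcal{G}=\mathcal{G}'=\{\Box\lozenge q\}$ and $\tau=[\neg p]$;
\item $\mathcal{A}_1'=\{\Box(p\lor\bigcirc\neg p)\}$ and $\mathcal{A}_2'=\{\Box(p\lor\bigcirc p)\}$, which are consequent weakenings of $\Box p$ from the paper's own search space.
\end{itemize}
The trace with $p$ always false lies in $\mathcal{L}(\mathcal{A}_1')\setminus\mathcal{L}(\mathcal{A}_2')$. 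The trace alternating $\neg p,p$ lies in $\mathcal{L}(\mathcal{A}_2')\setminus\mathcal{L}(\mathcal{A}_1')$. So both sets are strictly weaker than $\mathcal{A}$ and mutually incomparable. Moreover, $\tau$ satisfies both under weak next, and every specification involved is realisable. Yet $\mathcal{A}_1'\cup\mathcal{A}_2'\equiv\Box(p\lor\bigcirc\bot)\equiv\Box p$, so $\varphi_3'$ violates condition~2.

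The same example shows why your extension route fails. Each $\mathcal{A}_i'$ is non-blocking, even well-separated, on its own. Their conjunction, however, deadlocks the environment as soon as $\neg p$ occurs, so $\tau$ has no infinite extension in $\mathcal{L}(\mathcal{A}_1'\cup\mathcal{A}_2')$. Properties of the individual solutions say nothing about their union.

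Switching the finite-trace semantics does not rescue the lemma either. Under an extension-based reading of $\tau\models\cdot$, strictness would follow, but condition~1 breaks on the same example: $\tau$ extends into each $\mathcal{L}(\mathcal{A}_i')$ but not into their intersection. Your fallback is therefore not a harmless convention. The principle ``$\tau\models\mathcal{A}'$ and $\tau\not\models\mathcal{A}$ imply $\mathcal{A}\prec\mathcal{A}'$'' is refuted by $\mathcal{A}'=\mathcal{A}_1'\cup\mathcal{A}_2'$ above, and adopting it makes the proof unsound.

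The correct repair is an extra hypothesis in the statement, for instance that $\tau$ has an infinite extension in $\mathcal{L}(\mathcal{A}_1')\cap\mathcal{L}(\mathcal{A}_2')$. With it, your failure-side argument gives strictness: in the paper's normal form $\bigcirc$ occurs only in consequents, so a violated initial condition or invariant stays violated on every extension. The rest of your plan then goes through unchanged.
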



Lemma~\ref{lemma:exists_a_third_solution_uncomparable} shows that the first incomparable case reduces to the comparable setting addressed by Axiom~\ref{axiom:preference_1}. In particular, the construction of the specification adaptation $\varphi_3'$ yields a specification with strictly stronger assumptions while preserving the same guarantees.
The second incomparable case admits a similar analysis.

%

\begin{lemma}
\label{lemma:exists_a_third_solution_incomparable_2}
Let $(\varphi,\tau)$ be an adaptation problem. Suppose
$\varphi_1'=\langle\mathcal{A}_1',\mathcal{G}_1'\rangle$
and
$\varphi_2'=\langle\mathcal{A}_2',\mathcal{G}_2'\rangle$
are specification adaptations such that the assumption sets $\mathcal{A}_1'$ and $\mathcal{A}_2'$ are incomparable under the weakness relation, and
$\mathcal{G}_1'\prec\mathcal{G}_2'$.
Then
$\varphi_3'=\langle\mathcal{A}_1'\cup\mathcal{A}_2',\mathcal{G}_1'\rangle$
is also a valid specification adaptation.
\end{lemma}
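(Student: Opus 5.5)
The plan is to check the four conditions of Definition~\ref{def:rca} for $\varphi_3'=\langle\mathcal{A}_1'\cup\mathcal{A}_2',\mathcal{G}_1'\rangle$ one at a time. This follows the pattern of Lemma~\ref{lemma:exists_a_third_solution} and the discussion preceding Lemma~\ref{lemma:exists_a_third_solution_uncomparable}. Throughout, the union of assumption sets denotes the conjunction of their formulas. Consequently $\mathcal{L}(\mathcal{A}_1'\cup\mathcal{A}_2')=\mathcal{L}(\mathcal{A}_1')\cap\mathcal{L}(\mathcal{A}_2')$, so $\mathcal{A}_1'\cup\mathcal{A}_2'\preceq\mathcal{A}_1'$ and $\mathcal{A}_1'\cup\mathcal{A}_2'\preceq\mathcal{A}_2'$. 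The hypothesis $\mathcal{G}_1'\prec\mathcal{G}_2'$ is not actually needed; it only tells us afterwards that $\varphi_3'$ is comparable to $\varphi_2'$.

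\textbf{Conditions 1--3.} Since both $\varphi_1'$ and $\varphi_2'$ are solutions, $\tau\models\mathcal{A}_1'$ and $\tau\models\mathcal{A}_2'$. Hence $\tau$ satisfies every formula in $\mathcal{A}_1'\cup\mathcal{A}_2'$, which gives condition~1.

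From $\mathcal{A}\prec\mathcal{A}_i'$ we get $\mathcal{L}(\mathcal{A})\subseteq\mathcal{L}(\mathcal{A}_1')\cap\mathcal{L}(\mathcal{A}_2')$, that is, $\mathcal{A}\preceq\mathcal{A}_1'\cup\mathcal{A}_2'$. The inclusion is strict because $\tau$ satisfies the combined set but $\tau\not\models\mathcal{A}$. This reproduces the argument already given in the text for Case~1 and yields condition~2. Here I would follow the paper's convention of comparing via satisfaction of the finite violation trace: any infinite extension witnessing $\tau\models\mathcal{A}_1'\cup\mathcal{A}_2'$ but violating $\mathcal{A}$ separates the two languages.

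Condition~3, $\mathcal{G}\preceq\mathcal{G}_1'$, is inherited directly from $\varphi_1'$ being a solution.

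\textbf{Condition 4 (realisability).} I expect this to be the only nontrivial step. I would invoke the same monotonicity of realisability under strengthening of assumptions used in the proof of Lemma~\ref{lemma:exists_a_third_solution}. The combined set is stronger than $\mathcal{A}_1'$, and $\langle\mathcal{A}_1',\mathcal{G}_1'\rangle$ is realisable, so $\langle\mathcal{A}_1'\cup\mathcal{A}_2',\mathcal{G}_1'\rangle$ is realisable.

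If a more careful justification is required, I would verify that a winning strategy for $\varphi_1'$ remains winning against the strict realisability formula of Definition~\ref{def:gr1-strict-realisability}. Adding assumptions only conjoins extra formulas to $\theta^e$ and $\rho^e$ and adds extra liveness goals $J^e$. Each of the three conjuncts has the assumption components in antecedent position, with $\rho^e$ appearing under $\mathbf{H}$ in the second conjunct. Therefore every play that satisfies the strengthened antecedent also satisfies the original antecedent. On such a play the original strategy already ensures the consequent, so each implication is preserved.

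The subtle point is that the set representation mixes initial, safety and liveness formulas. I would check that the union sorts back into the components $I^e\cup I^{e\prime}$, $T^e\cup T^{e\prime}$ and $L^e\cup L^{e\prime}$, so that the argument over the three conjuncts applies uniformly. With all four conditions established, $\varphi_3'\in AllSolutions(\varphi,\tau)$.
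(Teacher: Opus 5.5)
Your proposal is correct and follows the paper's argument: the paper proves this lemma by combining the union construction of Lemma~\ref{lemma:exists_a_third_solution_uncomparable}, which gives $\tau\models\mathcal{A}_1'\cup\mathcal{A}_2'$ and $\mathcal{A}\prec\mathcal{A}_1'\cup\mathcal{A}_2'$, with the inherited guarantee bound and monotonicity-of-realisability step of Lemma~\ref{lemma:exists_a_third_solution}. Your explicit check against the strict realisability formula is worth keeping, because it uses only that the union adds conjuncts to $\theta^e$ and $\rho^e$ and extra liveness assumptions $J^e$ (monotonicity under arbitrary semantic strengthening does not hold in general for strict realisability); your strictness step for condition~2, however, inherits from the paper the unproved assumption that $\tau$ extends to an infinite trace in $\mathcal{L}(\mathcal{A}_1'\cup\mathcal{A}_2')\setminus\mathcal{L}(\mathcal{A})$.
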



Lemma~\ref{lemma:exists_a_third_solution_incomparable_2} combines the constructions of Lemmas~\ref{lemma:exists_a_third_solution} and~\ref{lemma:exists_a_third_solution_uncomparable}. Consequently, the second incomparable case also reduces to the comparable setting addressed by Axiom~\ref{axiom:preference_1}. In particular, the construction of $\varphi_3'$ yields a specification adaptation with strictly stronger assumptions than both $\varphi_1'$ and $\varphi_2'$, while preserving the stronger guarantee set $\mathcal{G}_1'$.


Finally, we argue that the remaining cases involving incomparable guarantee sets cannot, in general, be resolved through logical weakness alone. Distinguishing between such specification adaptations requires extending the problem setting with additional information, such as behavioural preferences, domain-specific priorities, or constraints identifying undesirable behaviours. In the absence of such information, no principled basis exists for preferring one incomparable guarantee set over another, and both adaptations must therefore be retained as valid candidate solutions.
This observation motivates the following final axiom.


\begin{axiom}
\label{axiom:preference_4}
Let $(\varphi,\tau)$ be an adaptation problem, and let
$\varphi_1'=\langle\mathcal{A}_1',\mathcal{G}_1'\rangle$
and
$\varphi_2'=\langle\mathcal{A}_2',\mathcal{G}_2'\rangle$
be two specification adaptations.

If the guarantee sets $\mathcal{G}_1'$ and $\mathcal{G}_2'$ are incomparable under the weakness relation, then neither solution is preferred to the other. Formally,
$
\mathcal{G}_1'
\text{ and }
\mathcal{G}_2'
\text{ incomparable}
\Rightarrow
(\varphi_1' \not\sqsubset \varphi_2')
\land
(\varphi_1' \not\sqsupset \varphi_2').
$
\end{axiom}


\begin{example}
Consider the following two solutions for an adaptation problem:
$\varphi_1'=\langle\mathcal{A}',\mathcal{G}_1'\rangle$
and
$\varphi_2'=\langle\mathcal{A}',\mathcal{G}_2'\rangle$,
where
$\mathcal{A}'=\emptyset$,
$\mathcal{G}_1'=\{\square(a\rightarrow \bigcirc c)\}$,
and
$\mathcal{G}_2'=\{\square(b\rightarrow \bigcirc \neg c)\}$.

Suppose these adaptations originate from the adaptation problem generated by the specification $\varphi=\langle\mathcal{A},\mathcal{G}\rangle$, where the violation trace is $\tau=[(a,b,c)]$, the assumptions are $\mathcal{A}=\{\square(\neg a\lor \neg b)\}$, and the original guarantees satisfy $\mathcal{G}=\mathcal{G}_1'\cup\mathcal{G}_2'$. Assume further that $a$ and $b$ are environment variables, while $c$ is a system variable.

At the purely symbolic level, there is insufficient information to determine whether $\varphi_1'$ or $\varphi_2'$ should be preferred. Both adaptations preserve different subsets of the original guarantees, and neither dominates the other under the logical preference criteria introduced previously.

Now consider the following semantic interpretation:
$a=\textit{highwater}$,
$b=\textit{methane}$,
and
$c=\textit{pump}$.
This corresponds to the Minepump specification introduced in Section~\ref{sec:minepump}, where $\textit{highwater}$ indicates dangerous water levels, $\textit{methane}$ indicates the presence of explosive gas, and $\textit{pump}$ denotes activation of the mine pump.

Under this interpretation, a clear preference emerges. Specifically, $\varphi_2'$ becomes preferable because it preserves the safety property preventing pump activation during methane leaks, thereby avoiding a potentially catastrophic explosion. Crucially, however, this preference depends entirely on semantic knowledge external to the formal specification itself. An adaptation algorithm operating purely at the logical level has no access to such contextual information and therefore cannot soundly discard either solution. Consequently, both candidate adaptations should be preserved for subsequent analysis by a domain-aware stakeholder.
\end{example}

We conclude that Axioms~\ref{axiom:preference_1}--\ref{axiom:preference_4} collectively induce a preference ordering over specification adaptations. We summarise these criteria in the following definition.

\begin{definition}[Preference between Specification Adaptations]
\label{def:preference_rint_solutions}
Let $(\varphi,\tau)$ be an adaptation problem, and let
$\varphi_1=\langle\mathcal{A}_1,\mathcal{G}_1\rangle$
and
$\varphi_2=\langle\mathcal{A}_2,\mathcal{G}_2\rangle$
be two semantically distinct specification adaptations in
$AllSolutions((\varphi,\tau))$.

We say that $\varphi_1$ is preferred to $\varphi_2$, denoted
$\varphi_1\sqsubset\varphi_2$,
iff one of the following conditions holds:
\begin{itemize}
    \item $(\mathcal{A}_1\equiv\mathcal{A}_2)
    \land
    (\mathcal{G}_1\prec\mathcal{G}_2)$;

    \item $(\mathcal{A}_1\prec\mathcal{A}_2)
    \land
    (\mathcal{G}_1\equiv\mathcal{G}_2)$;

    \item $(\mathcal{A}_1\prec\mathcal{A}_2)
    \land
    (\mathcal{G}_1\prec\mathcal{G}_2)$.
\end{itemize}

Under all other circumstances, $\varphi_1$ and $\varphi_2$ are considered incomparable with respect to preference.
\end{definition}

%

\subsubsection{Extracting Preferred Specifications}

An adaptation procedure may discover multiple candidate specification adaptations for a given adaptation problem $(\varphi,\tau)$. Let
$
\Phi \subseteq AllSolutions((\varphi,\tau))
$
denote the set of adaptations generated by such a procedure.

Using the preference relation from Definition~\ref{def:preference_rint_solutions}, the corresponding preference structure can be represented as the partially ordered set
$
(\Phi,\sqsubset).
$
Preferred adaptations can then be obtained by extracting the minimal elements of this partial order, corresponding to adaptations for which no strictly preferred alternative exists within $\Phi$.

%

However, Lemmas~\ref{lemma:exists_a_third_solution}, \ref{lemma:exists_a_third_solution_uncomparable}, and~\ref{lemma:exists_a_third_solution_incomparable_2} reveal an important limitation of this approach: the set $\Phi$ may not yet contain all relevant preferred adaptations. In particular, the previous lemmas show that combining assumptions from multiple solutions can yield new solutions that are strictly preferred according to the preference relation.

This observation leads to the following theorem.

\begin{theorem}[Most Preferred Assumption]
Let $\Phi\subseteq AllSolutions((\varphi,\tau))$ be a set of specification adaptations for an adaptation problem $(\varphi,\tau)$, where each solution $\varphi_i'=\langle\mathcal{A}_i',\mathcal{G}_i'\rangle$ belongs to $\Phi$.

Define $\mathcal{A}^*=\bigcup_{i\in|\Phi|}\mathcal{A}_i'$ to be the union of all assumption sets appearing in $\Phi$. We call $\mathcal{A}^*$ the \textit{most preferred assumption}. Then the following properties hold:
\begin{enumerate}
    \item $\mathcal{A}^*$ is stronger than or equivalent to every assumption set appearing in $\Phi$. Formally, $\forall i\in|\Phi|.\;\mathcal{A}^*\preceq\mathcal{A}_i'$;

    \item For every adaptation in $\Phi$, replacing its assumption set with $\mathcal{A}^*$ yields another valid specification adaptation. Formally, $\forall i\in|\Phi|.\;\langle\mathcal{A}^*,\mathcal{G}_i'\rangle \in AllSolutions((\varphi,\tau))$;

    \item For every adaptation in $\Phi$, the specification obtained using $\mathcal{A}^*$ is preferred to or equivalent to the original adaptation. Formally, $\forall i\in|\Phi|.\;\langle\mathcal{A}^*,\mathcal{G}_i'\rangle \sqsubseteq \varphi_i'$.
\end{enumerate}
\end{theorem}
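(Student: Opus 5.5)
We prove the three properties in order, treating an assumption set as the conjunction of its formulas. The basic fact we use throughout is that its language is the intersection of the languages of its members:
\[
\mathcal{L}(\mathcal{A}^*)=\bigcap_i \mathcal{L}(\mathcal{A}_i').
\]

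\textbf{Property 1.} Since $\mathcal{A}_i'\subseteq\mathcal{A}^*$ as sets of formulas, every trace satisfying $\mathcal{A}^*$ satisfies $\mathcal{A}_i'$. Hence $\mathcal{L}(\mathcal{A}^*)\subseteq\mathcal{L}(\mathcal{A}_i')$, which is $\mathcal{A}^*\preceq\mathcal{A}_i'$. This is immediate from the definition of language inclusion. Since $\Phi$ is a finite set of solutions, $\mathcal{A}^*$ is a finite set of GR(1) formulas and hence again a GR(1) assumption set.

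\textbf{Property 2.} We fix $i$ and check the four conditions of the adaptation problem for $\langle\mathcal{A}^*,\mathcal{G}_i'\rangle$, generalising the argument after Lemma~\ref{lemma:exists_a_third_solution_uncomparable}.
\begin{enumerate}
\item We have $\tau\models\mathcal{A}_j'$ for every $j$, so $\tau$ satisfies their conjunction $\mathcal{A}^*$.
\item Each $\mathcal{A}\prec\mathcal{A}_j'$ gives $\mathcal{L}(\mathcal{A})\subseteq\mathcal{L}(\mathcal{A}_j')$ for all $j$, hence $\mathcal{L}(\mathcal{A})\subseteq\mathcal{L}(\mathcal{A}^*)$. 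Strictness follows because $\tau$ satisfies $\mathcal{A}^*$ but not $\mathcal{A}$; this uses the same finite-trace reading of $\tau\models\cdot$ that the paper uses in the discussion before Lemma~\ref{lemma:exists_a_third_solution_uncomparable}.
\item The condition $\mathcal{G}\preceq\mathcal{G}_i'$ is inherited unchanged from $\varphi_i'$.
\item Realisability follows from Property~1: $\mathcal{A}^*\preceq\mathcal{A}_i'$, and $\varphi_i'$ is realisable. We then invoke the monotonicity of realisability under strengthening of assumptions, exactly as in the proof of Lemma~\ref{lemma:exists_a_third_solution}.
\end{enumerate}

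The step needing the most care is this monotonicity step, because strict GR(1) realisability (Definition~\ref{def:gr1-strict-realisability}) is not literally an implication $\mathcal{A}\rightarrow\mathcal{G}$. The assumptions appear in $\theta^e$, in $\mathbf{H}\rho^e$ and in the justice antecedent. We argue that the winning strategy for $\langle\mathcal{A}_i',\mathcal{G}_i'\rangle$ remains winning after $\theta^e$ is strengthened, $\rho^e$ is strengthened and justice assumptions are added. Each change only shrinks the set of plays on which an obligation is imposed: the antecedents become harder to satisfy, while the consequents are unchanged. One might worry that a stronger $\rho^e$ could make $\mathbf{H}\rho^e$ fail earlier and thereby weaken the system's safety obligation in a way that breaks things. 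That only relaxes the obligation, so it cannot break winning. We follow the paper in taking this monotonicity as given.

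\textbf{Property 3.} We compare $\langle\mathcal{A}^*,\mathcal{G}_i'\rangle$ with $\varphi_i'$. The guarantees are identical, so $\mathcal{G}_i'\equiv\mathcal{G}_i'$. By Property~1, $\mathcal{A}^*\preceq\mathcal{A}_i'$, giving two cases. If $\mathcal{A}^*\equiv\mathcal{A}_i'$, the two specifications are semantically equivalent, and $\sqsubseteq$ holds via its equality part. If $\mathcal{A}^*\prec\mathcal{A}_i'$, the second bullet of Definition~\ref{def:preference_rint_solutions} (Axiom~\ref{axiom:preference_2}) gives strict preference. Both adaptations lie in $AllSolutions$ by Property~2, so the definition applies.
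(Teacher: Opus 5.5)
Your argument is the natural $n$-ary version of what the paper relies on. The paper gives no separate proof; the theorem is presented as a consequence of Lemmas~\ref{lemma:exists_a_third_solution}--\ref{lemma:exists_a_third_solution_incomparable_2} and the discussion before Lemma~\ref{lemma:exists_a_third_solution_uncomparable}. Property~1 and conditions 1, 3 and 4 of Property~2 are fine.

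The genuine gap is the strictness $\mathcal{A}\prec\mathcal{A}^*$ in condition~2. This is exactly the step you flagged, and the paper has the same hole. A finite $\tau$ with $\tau\models\mathcal{A}^*$ and $\tau\not\models\mathcal{A}$ does not yield an $\omega$-word in $\mathcal{L}(\mathcal{A}^*)\setminus\mathcal{L}(\mathcal{A})$, because $\tau$ need not extend to any infinite word satisfying $\mathcal{A}^*$. Nor is strictness inherited from the $\mathcal{A}_j'$, since their separating witnesses can be mutually incompatible.

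Concretely, take environment variables $a,b$, $\varphi=\langle\{\Box a\},\emptyset\rangle$ and $\tau=[\emptyset]$. Let $\Phi$ consist of $\langle\{\Box(\neg a\rightarrow\bigcirc b)\},\emptyset\rangle$ and $\langle\{\Box(\neg a\rightarrow\bigcirc\neg b)\},\emptyset\rangle$.
\begin{itemize}
\item Each assumption is strictly weaker than $\Box a$, with witnesses $\emptyset\,\{a,b\}\,\{a\}^\omega$ and $\emptyset\,\{a\}^\omega$.
\item Each is satisfied by $\tau$ under the weak-next reading the paper adopts for finite traces.
\item Both specifications are trivially realisable, so $\Phi\subseteq AllSolutions((\varphi,\tau))$.
\end{itemize}
Yet on infinite traces the union forces $\neg a\rightarrow\bigcirc(b\land\neg b)$ at every position. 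Hence $\mathcal{A}^*\equiv\Box a=\mathcal{A}$, and $\langle\mathcal{A}^*,\emptyset\rangle$ violates condition~2 of Definition~\ref{def:rca}.

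Switching to the reading ``$\tau\models\mathcal{A}'$ iff $\tau$ is a prefix of some word of $\mathcal{L}(\mathcal{A}')$'' would give strictness but break condition~1. In this example $\tau$ extends inside each $\mathcal{A}_j'$ but not inside $\mathcal{A}^*$. So Property~2 is false as stated, and with it Property~3, which needs both specifications in $AllSolutions$.

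Your proof goes through once you add a hypothesis that supplies an $\omega$-witness. One option is $\mathcal{A}^*\not\equiv\mathcal{A}$ directly. Another is that $\tau$ is a prefix of some $w\in\mathcal{L}(\mathcal{A}^*)$; this suffices when the assumptions $\tau$ violates are initial or safety formulas, since under the weak reading their violation on $\tau$ persists in every extension, giving $w\notin\mathcal{L}(\mathcal{A})$.

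On condition~4, the monotonicity you use is valid because $\mathcal{A}_i'\subseteq\mathcal{A}^*$ syntactically: $\theta^e$ and $\rho^e$ only gain conjuncts and the justice list only grows. That is exactly your component-wise argument, and it should be cited instead of ``follows from Property~1''. Purely semantic strengthening need not preserve strict realisability. An unsatisfiable assumption set with no initial or safety part, such as $\{\Box\lozenge\bot\}$, is semantically strongest, yet it turns $\theta^s$ and $\Box\rho^s$ into unconditional obligations. In this respect your version is more careful than the blanket appeal to monotonicity in the proof of Lemma~\ref{lemma:exists_a_third_solution}.
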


This theorem allows every adaptation in $\Phi$ to be rewritten using the most preferred assumption $\mathcal{A}^*$, yielding a new set of solutions $\Phi'\subseteq AllSolutions((\varphi,\tau))$. Consequently, preference reasoning can be reduced to comparisons between guarantee sets alone, for which Axioms~\ref{axiom:preference_1} and~\ref{axiom:preference_4} are sufficient to identify the preferred adaptations.

\input{code/filter_max_algorithm}

We therefore propose Algorithm~\ref{alg:filterPreferred} for extracting the preferred specification adaptations from a candidate set $\Phi$. Lines~\ref{line:filter_asms_init}--\ref{line:filter_for_spec_end} separate the assumption and guarantee sets of each solution into the collections \textit{asms} and \textit{gars}, respectively. The minimal elements of these collections are then extracted in Lines~\ref{line:filter_find_min_asms} and~\ref{line:filter_find_min_gars} using the $\Call{findMinimal}{}$ procedure.


In Line~\ref{line:filter_get_asm_star_as_union}, the most preferred assumption $\mathcal{A}^*$ is constructed through the union of the minimal assumption sets contained in $minAsms$. Subsequently, Lines~\ref{line:filter_filtered_specs_init}--\ref{line:filter_for_min_gar_end} generate the preferred specification adaptations by combining $\mathcal{A}^*$ with each minimal guarantee set in $minGars$, yielding solutions of the form $\langle\mathcal{A}^*,\mathcal{G}'\rangle$. Each generated solution is added to the set $filteredSpecs$, which is returned in Line~\ref{line:filter_return_filtered_specs}.

The procedure $\Call{setUnion}{}$ corresponds to the standard set-union operator $\bigcup$, while $\Call{findMinimal}{ }$, defined in Algorithm~\ref{alg:filterMinimalFormulas}, returns the minimal elements of the partial order $(formulas,\prec)$. Additional implementation details and pseudocode are provided in Appendix~\ref{apx:helper_functions}.


%% file: code/filter_max_algorithm.tex
\begin{algorithm}
\caption{Filter Preferred Specifications}
\label{alg:filterPreferred}
\begin{algorithmic}[1]
\Function{filterPreferred}{$specs$}
\State $asms \gets []$ \label{line:filter_asms_init}
\State $gars \gets []$ \label{line:filter_gars_init}
\For{$spec \in specs$} \label{line:filter_for_spec_start}
    \State $\Call{add}{asms, spec.\mathcal{A}}$ \label{line:filter_add_asm}
    \State $\Call{add}{gars, spec.\mathcal{G}}$
    \label{line:filter_add_gar}
\EndFor \label{line:filter_for_spec_end}
\State $minAsms \gets \Call{findMinimal}{asms}$ \label{line:filter_find_min_asms}
\State $minGars \gets \Call{findMinimal}{gars}$ \label{line:filter_find_min_gars}
\State $\mathcal{A}^*\gets \Call{setUnion}{minAsms}$ \label{line:filter_get_asm_star_as_union}
\State $filteredSpecs \gets []$ \label{line:filter_filtered_specs_init}
\For{$\mathcal{G}' \in minGars$} \label{line:filter_for_min_gar_start}
    \State $\Call{add}{filteredSpecs, \langle\mathcal{A}^*,\mathcal{G}'\rangle}$ \label{line:filter_add_new_preferred_spec}
\EndFor \label{line:filter_for_min_gar_end}
\State \Return $filteredSpecs$ \label{line:filter_return_filtered_specs}
\EndFunction
\end{algorithmic}
\end{algorithm}

%% file: sections/4-reality_integration/4.3-trivial_solution.tex
\subsection{Baseline Adaptation Procedure}
\label{subsec:trivial-solution}

The previous section established the preference criteria governing specification degradations for adaptation problems. Intuitively, preferred adaptations preserve assumptions and guarantees as strongly as possible while still incorporating the observed violation behaviour and maintaining realisability.


To construct specification adaptations satisfying the requirements, we consider a baseline procedure based on removal of formulas responsible for the observed violation or for subsequent unrealisability. The approach is inspired by techniques for computing minimal unsatisfiable subsets of constraints~\cite{liffiton2008algorithms}, and by related work on synthesis under unrealizability~\cite{yang2024synthesizing}.

If removing the assumptions violated by the observed trace immediately yields a realisable specification, then adaptation is straightforward. We therefore focus on the more interesting setting in which the original assumptions are already minimally sufficient with respect to the guarantees (i.e. any assumption removed would make the specification unrealisable). Such assumption sets may, for example, be obtained using techniques such as those proposed in~\cite{shalom2023my}. Under this assumption, removing any violated assumption necessarily produces an unrealisable specification, requiring subsequent degradation of the guarantees to restore realisability.


The first step in adapting a specification is to identify the assumptions violated by the observed trace. These assumptions must be removed or weakened in order for the adapted specification to admit the newly observed behaviour.

\begin{definition}[Violation Set]
\label{def:violation_set}
Given a set of assumptions $\mathcal{A}$ and a violation trace $\tau$, the \textit{violation set} $A_\tau\subseteq\mathcal{A}$ is the subset of assumptions violated by $\tau$. Formally,
$
A_\tau=\{a\in\mathcal{A}\mid \tau\not\models a\}.
$
\end{definition}


After removing the violation set from the assumptions, the resulting specification becomes
$\langle\mathcal{A}\setminus A_\tau,\mathcal{G}\rangle$.
Since the original assumptions are assumed to be minimally sufficient, this specification is necessarily unrealisable. Consequently, restoring realisability requires degrading the guarantee set.

Following the intuition of~\cite{yang2024synthesizing}, we seek adaptations that remove as few guarantees as possible. To identify such adaptations, we first characterise the minimally unfulfillable subsets of guarantees, commonly referred to as unrealisable cores.

\begin{definition}[Unrealisable Core]
Let
$\varphi=\langle\mathcal{A},\mathcal{G}\rangle$
be an unrealisable specification. A set of guarantees
$\mathcal{G}^{uc}\subseteq\mathcal{G}$
is an \textit{unrealisable core} iff:
\begin{enumerate}
    \item the specification
    $\langle\mathcal{A},\mathcal{G}^{uc}\rangle$
    is unrealisable; and

    \item removing any guarantee from $\mathcal{G}^{uc}$ restores realisability. Formally,
    $
    \forall g_i\in\mathcal{G}^{uc}.\;
    \langle\mathcal{A},\mathcal{G}^{uc}\setminus\{g_i\}\rangle
    \text{ is realisable}.
    $
\end{enumerate}
\end{definition}

%

To minimise the number of guarantees removed during degradation, we require the notion of a minimal hitting set over the unrealisable cores.

\begin{definition}[Hitting Set~\cite{liffiton2008algorithms}]
\label{def:hitting_set}
Given a collection of sets $F=\{S_1,S_2,\dots,S_n\}$, a set $\mathcal{H}(F)$ is called a \textit{hitting set} of $F$ iff $\forall S_i\in F.\;\mathcal{H}(F)\cap S_i\neq\emptyset$. That is, $\mathcal{H}(F)$ contains at least one element from every set in $F$.

A hitting set $\mathcal{H}_{min}(F)$ is called \textit{minimal} iff no strict subset $H\subset\mathcal{H}_{min}(F)$ is itself a hitting set.
\end{definition}

%

In our setting, we compute the set of unrealisable cores
$\mathcal{C}=\{\mathcal{G}^{uc}_1,\mathcal{G}^{uc}_2,\ldots,\mathcal{G}^{uc}_k\}\subseteq\mathcal{P}(\mathcal{G})$
using the Punch algorithm \footnote{Experimental results show the implementation of the Punch algorithm may return non-minimal sets of guarantees as unrealisable cores. We contacted the SYNTECH team with respect to this matter, and will treat their implementation as sound for the rest of the paper.} proposed in~\cite{maoz2021unrealizable}. From this collection, we compute a minimal hitting set and remove it from the guarantee set, yielding the degraded guarantees
$\mathcal{G}'=\mathcal{G}\setminus\mathcal{H}_{min}(\mathcal{C})$.
The resulting specification
$\langle\mathcal{A}\setminus A_\tau,\mathcal{G}\setminus\mathcal{H}_{min}(\mathcal{C})\rangle$
is then realisable.

\begin{definition}[Optimal Trivial Degradation]
\label{def:optimal_trivial_solution}
Given the adaptation problem $(\langle \mathcal{A},\mathcal{G}\rangle,\tau)$, an \textit{optimal trivial degradation} is the specification
$\langle\mathcal{A}\setminus A_\tau,\mathcal{G}\setminus\mathcal{H}_{min}(\mathcal{C})\rangle$,
where:
\begin{itemize}
    \item $A_\tau\subseteq \mathcal{A}$ is the violation set induced by $\tau$;

    \item $\mathcal{C}=\{\mathcal{G}^{uc}_1,\mathcal{G}^{uc}_2,\ldots,\mathcal{G}^{uc}_k\}\subseteq\mathcal{P}(\mathcal{G})$ is the set of unrealisable cores of the specification $\langle \mathcal{A}\setminus A_\tau,\mathcal{G}\rangle$;

    \item $\mathcal{H}_{min}(\mathcal{C})$ is a minimal hitting set of $\mathcal{C}$.
\end{itemize}
\end{definition}

%

Note that multiple minimal hitting sets may exist for a given collection of unrealisable cores. Consequently, the adaptation problem may admit multiple optimal trivial solutions.


\input{code/trivial_rint_algorithm}
\input{code/trivial_gw_algorithm}

Algorithm~\ref{alg:trivialRInt} implements the baseline adaptation procedure described above and computes all optimal trivial solutions. Lines~\ref{line:trivialRInt_init_asm_set_empty}--\ref{line:trivialRInt_rename_asms_prime} identify the assumptions violated by the trace and remove them from the original assumption set. At the end of the loop in Line~\ref{line:trivialRInt_for_asm_end}, the set $vAsms$ corresponds to the violation set $A_\tau$, and the assignment in Line~\ref{line:trivialRInt_rename_asms_prime} constructs the degraded assumption set $\mathcal{A}'=\mathcal{A}\setminus A_\tau$.
Lines~\ref{line:trivialRInt_is_realisable}--\ref{line:trivialRInt_return_asm_prime_only} then check whether the resulting specification is realisable. If so, the unique optimal trivial solution is returned directly. Otherwise, Algorithm~\ref{alg:trivialGW} is invoked in Line~\ref{line:trivialRInt_call_trivial_GD} to compute the set of optimal trivial adaptations obtained through guarantee degradation.

Algorithm~\ref{alg:trivialGW} assumes that the degraded assumption set no longer violates the observed trace, but that the resulting specification remains unrealisable. In line~\ref{line:gd_get_ucs}, an external procedure such as the Punch algorithm from the Spectra toolbox~\cite{maoz2021spectra} is used to compute all unrealisable cores of the specification. Subsequently, line~\ref{line:gd_get_min_hitting_sets} computes the corresponding minimal correction sets through a minimal hitting-set procedure.
Lines~\ref{line:gd_for_guc_start}--\ref{line:gd_for_guc_end} then construct a realisable specification adaptation for each discovered correction set by removing the associated guarantees. The resulting set of optimal trivial solutions is returned in line~\ref{line:gd_return_solutions}.


From the discussion above, together with the construction implemented in Algorithm~\ref{alg:trivialRInt}, we obtain the following result.

\begin{theorem}[Soundness and Minimality of the Optimal Trivial Degradation]
\label{thm:soundness_minimality_optimal_trivial_solution}
Let $\langle\mathcal{A}\setminus A_\tau,\mathcal{G}\setminus\mathcal{H}_{min}(\mathcal{C})\rangle$ be an optimal trivial solution to the adaptation problem $(\langle \mathcal{A},\mathcal{G}\rangle,\tau)$ as defined in Definition~\ref{def:optimal_trivial_solution}. Then:
\begin{enumerate}
\renewcommand{\labelenumi}{(\roman{enumi})}
    \item the violation trace $\tau$ satisfies the degraded assumptions, i.e. $\tau\models \mathcal{A}\setminus A_\tau$;

    \item the specification $\langle\mathcal{A}\setminus A_\tau,\mathcal{G}\setminus\mathcal{H}_{min}(\mathcal{C})\rangle$ is realisable;

    \item the pair $\langle A_\tau,\mathcal{H}_{min}(\mathcal{C})\rangle$ is minimal with respect to the number of assumptions and guarantees removed.
\end{enumerate}
\end{theorem}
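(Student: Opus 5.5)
We prove the three claims separately, each following from the corresponding definition.

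For (i), we argue at the level of individual formulas. By Definition~\ref{def:violation_set}, every $a\in\mathcal{A}\setminus A_\tau$ satisfies $\tau\models a$, since otherwise $a$ would lie in $A_\tau$. The set representation interprets an assumption set as the conjunction of its members. Hence $\tau\models\mathcal{A}\setminus A_\tau$. If $\mathcal{A}\setminus A_\tau=\emptyset$, this holds vacuously.

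For (ii), we use the hitting-set structure. Let $\mathcal{G}'=\mathcal{G}\setminus\mathcal{H}_{min}(\mathcal{C})$ and suppose, for contradiction, that $\langle\mathcal{A}\setminus A_\tau,\mathcal{G}'\rangle$ is unrealisable.
\begin{enumerate}
    \item Removing guarantees weakens the guarantee conjunction, and realisability is monotone in this direction (the dual of the monotonicity used in Lemma~\ref{lemma:exists_a_third_solution}).
    \item We shrink $\mathcal{G}'$ one guarantee at a time, keeping the specification unrealisable. Since $\mathcal{G}'$ is finite, this reaches a subset $\mathcal{G}^{uc}\subseteq\mathcal{G}'$ that is unrealisable but becomes realisable after removing any single guarantee.
    \item Such a $\mathcal{G}^{uc}$ is an unrealisable core of $\langle\mathcal{A}\setminus A_\tau,\mathcal{G}\rangle$, so $\mathcal{G}^{uc}\in\mathcal{C}$, assuming, as the paper does, that Punch returns the complete set of cores.
    \item By Definition~\ref{def:hitting_set}, $\mathcal{H}_{min}(\mathcal{C})\cap\mathcal{G}^{uc}\neq\emptyset$. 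This contradicts $\mathcal{G}^{uc}\subseteq\mathcal{G}'$, which is disjoint from $\mathcal{H}_{min}(\mathcal{C})$.
\end{enumerate}
One subtlety is the case where the original specification is already realisable after removing $A_\tau$. Then $\mathcal{C}=\emptyset$, the empty set is the minimal hitting set, and the claim is immediate.

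For (iii), we interpret minimality as inclusion-minimality among removal-based adaptations $\langle\mathcal{A}\setminus R_A,\mathcal{G}\setminus R_G\rangle$. The interpretation matters: minimal hitting sets are inclusion-minimal, not necessarily minimum in cardinality. We therefore state the claim relative to subset inclusion, or read ``number'' as ``no proper sub-removal suffices.''

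On the assumption side, any $R_A$ satisfying property (1) of Definition~\ref{def:rca} must contain every $a$ with $\tau\not\models a$. Hence $A_\tau\subseteq R_A$, so $A_\tau$ is the least admissible assumption removal.

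On the guarantee side, fix $R_A=A_\tau$ and take any $H\subsetneq\mathcal{H}_{min}(\mathcal{C})$. By minimality, $H$ is not a hitting set, so some core $\mathcal{G}^{uc}_j$ is disjoint from $H$. Then $\mathcal{G}^{uc}_j\subseteq\mathcal{G}\setminus H$. Strengthening the guarantees from $\mathcal{G}^{uc}_j$ up to $\mathcal{G}\setminus H$ preserves unrealisability, so removing $H$ does not restore realisability.

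Removing more assumptions than $A_\tau$ could allow fewer guarantee removals. This is excluded under the paper's standing hypothesis that $\mathcal{A}$ is minimally sufficient, which we invoke explicitly. We expect this to be the main obstacle, since the notion of minimality for the pair is otherwise ambiguous. If needed, we would state (iii) as Pareto-minimality, using the lexicographic order with assumptions first.
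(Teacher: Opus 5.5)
Your argument for (i)--(iii) is essentially the paper's proof via Lemma~\ref{lemma:trivial_a} and Lemma~\ref{lemma:trivial_g}. A surviving assumption violated by $\tau$ would belong to $A_\tau$. An unrealisable $\mathcal{G}\setminus\mathcal{H}_{min}(\mathcal{C})$ would contain a core that the hitting set misses. A proper subset of $A_\tau$ leaves a violated assumption, and a proper subset of $\mathcal{H}_{min}(\mathcal{C})$ misses a core. Your extra care is welcome: you extract the core explicitly, treat $\mathcal{C}=\emptyset$, and note that only inclusion-minimality, not minimum cardinality, is established. Your claim $A_\tau\subseteq R_A$ is also slightly stronger than the paper's minimality of $A_\tau$.

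One step in your last paragraph should be corrected, because the worry runs the wrong way. Removing assumptions beyond $A_\tau$ weakens the environment model, which can only make realisability harder, so it never permits fewer guarantee removals. Concretely, suppose $R_A\supseteq A_\tau$ and $\langle\mathcal{A}\setminus R_A,\mathcal{G}\setminus R_G\rangle$ is realisable. Monotonicity under strengthening of assumptions (the fact used in Lemma~\ref{lemma:exists_a_third_solution}) then makes $\langle\mathcal{A}\setminus A_\tau,\mathcal{G}\setminus R_G\rangle$ realisable. Hence no core in $\mathcal{C}$ lies inside $\mathcal{G}\setminus R_G$, so $R_G$ is a hitting set of $\mathcal{C}$ and cannot be a proper subset of $\mathcal{H}_{min}(\mathcal{C})$.

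The minimal-sufficiency hypothesis does not give you this. It only says that each $\langle\mathcal{A}\setminus\{a\},\mathcal{G}\rangle$ with the full guarantee set is unrealisable, so invoking it there is a non sequitur. Replace that appeal with the monotonicity argument. You then get joint inclusion-minimality of $\langle A_\tau,\mathcal{H}_{min}(\mathcal{C})\rangle$ outright, which the paper leaves implicit with ``follows directly''. In fact every admissible removal pair contains $A_\tau$ and some minimal hitting set of $\mathcal{C}$, so the lexicographic fallback is unnecessary.
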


Properties~(i) and~(ii) are sufficient to show that the resulting specification is a valid solution for the adaptation problem, since the degraded assumptions and guarantees are trivially weaker than their original counterparts. We prove property (iii) of Theorem~\ref{thm:soundness_minimality_optimal_trivial_solution} in Appendix~\ref{apx:proof_soudness_trivial_algorithm}.

%% file: code/trivial_rint_algorithm.tex
\begin{algorithm}
\caption{Trivial Degradation}
\label{alg:trivialRInt}
\begin{algorithmic}[1]
\Require $\langle\mathcal{A}, \mathcal{G}\rangle$ (realisable GR(1) specification), $\tau$ (violation trace)
\Ensure $\trivialSolutions$, set of optimal trivial degradations

\Function{\trivialDegradation}{$\langle\mathcal{A},\mathcal{G}\rangle,\tau$}
\State $A_\tau \gets \emptyset$\label{line:trivialRInt_init_asm_set_empty}

\For{$\asm \in \mathcal{A}$}\label{line:trivialRInt_for_asm_start}
    \If{$\tau \not\models \asm$}\label{line:trivialRInt_check_tau_violates_asm}
        \State $A_\tau \gets A_\tau \cup \{\asm\}$\label{line:trivialRInt_add_v_asm_to_set}
    \EndIf
\EndFor\label{line:trivialRInt_for_asm_end}

\State $\mathcal{A}' \gets \mathcal{A}\setminus A_\tau$\label{line:trivialRInt_rename_asms_prime}

\If{$\Call{\isRealisable}{\langle\mathcal{A}',\mathcal{G}\rangle}$}\label{line:trivialRInt_is_realisable}
    \State \Return $[\langle\mathcal{A}',\mathcal{G}\rangle]$ \label{line:trivialRInt_return_asm_prime_only}
\Else
    \State \Return $\Call{\trivialGD}{\langle\mathcal{A}',\mathcal{G}\rangle}$\label{line:trivialRInt_call_trivial_GD}
\EndIf\label{line:trivialRInt_is_realisable_end}

\EndFunction
\end{algorithmic}
\end{algorithm}

%
%

%% file: code/trivial_gw_algorithm.tex
\begin{algorithm}
\caption{Trivial Guarantee Degradation}\label{alg:trivialGW}
\begin{algorithmic}[1]
\Require $\langle\mathcal{A}, \mathcal{G}\rangle$ (unrealisable GR(1) specification)
\Ensure $\trivialSolutions$, set of optimal trivial degradations

\Function{\trivialGD}{$\langle\mathcal{A},\mathcal{G}\rangle$}

\State $\trivialSolutions \gets [\ ]$

\State $\ucsSet \gets \Call{\getUnrealisableCores}{\langle\mathcal{A},\mathcal{G}\rangle}$\label{line:gd_get_ucs}

\State $\minCorrectionSets \gets \Call{\getMinimalHittingSets}{\ucsSet}$\label{line:gd_get_min_hitting_sets}

\For{$\mathcal{G}^{uc} \in \minCorrectionSets$}\label{line:gd_for_guc_start}
    \State $\mathcal{G}' \gets \mathcal{G}\setminus\mathcal{G}^{uc}$\label{line:gd_get_gars_prime}
    \State $\Call{append}{\trivialSolutions,\ \langle\mathcal{A},\mathcal{G}'\rangle}$\label{line:gd_save_solution}
\EndFor\label{line:gd_for_guc_end}

\State \Return $\trivialSolutions$\label{line:gd_return_solutions}

\EndFunction
\end{algorithmic}
\end{algorithm}

%
%
%

%% file: sections/4-reality_integration/4.4-motivating_example.tex
\subsection{Case Study: Minepump}
\label{subsec:motivating_example_minepump_2}

We now illustrate the behaviour and limitations of the baseline adaptation procedure presented in Algorithm~\ref{alg:trivialRInt} using the Minepump case study~\cite{kramer1983conic}, a canonical benchmark in the reactive-systems and formal-methods literature. Although comparatively simple, the Minepump specification captures the core interaction between environmental assumptions, safety-critical guarantees, and unexpected runtime behaviours, making it well suited for studying adaptation under assumption violations.

The Minepump system models a coal mine in which water may accumulate over time and must be removed using a pump. The $\textit{pump}$ constitutes the system to be synthesized; the environment is characterized by the water and methane sensor readings, $\textit{highwater}$ and $\textit{methane}$. However, methane may also accumulate inside the mine, and activating the pump in the presence of methane introduces a risk of explosion. The controller must therefore balance competing safety requirements under changing environmental conditions.

The corresponding GR(1) specification is shown in Table~\ref{tab:minepump}. Under the stated assumptions, the specification is realisable according to the standard GR(1) realisability criteria~\cite{bloem2012synthesis}, and a correct-by-construction controller can therefore be synthesized.

\input{code/minepump_spec}

Suppose now that, during deployment, the system observes a previously unmodelled behaviour in which both methane and dangerous water levels occur simultaneously. Such a trace violates $assumption\_1$, which explicitly excludes this configuration. As a consequence, the correctness guarantees established for the synthesized controller no longer necessarily hold. More fundamentally, the observation motivates the problem of how the specification itself should be adapted in response to newly observed environmental behaviour.

\subsubsection{Baseline Adaptation Applied}

Consider the specification $\varphi=\langle\mathcal{A},\mathcal{G}\rangle$ encoded in Table~\ref{tab:minepump}, where
$\mathcal{A}=\{\neg \textit{highwater} \land \neg \textit{methane},\Box(\neg\textit{highwater}\lor\neg\textit{methane}),\Box(\mathbf{Y}\textit{pump}\land\textit{pump}\rightarrow\bigcirc\neg\textit{highwater})\}$
and
$\mathcal{G}=\{\neg\textit{pump},\Box(\textit{highwater}\rightarrow\textit{pump}),\Box(\textit{methane}\rightarrow\neg\textit{pump})\}$.

For readability, we abbreviate these formulas as
$\mathcal{A}=\{\textit{assumption\_0},\textit{assumption\_1},\textit{assumption\_2}\}$
and
$\mathcal{G}=\{\textit{guarantee\_0},\textit{guarantee\_1},\textit{guarantee\_2}\}$.

Assume the system observes the violation trace
$\tau=[\{\textit{highwater},\textit{methane},\textit{pump}\}]$,
which violates $\textit{assumption\_1}$. Algorithm~\ref{alg:trivialRInt} therefore removes this assumption, producing the intermediate specification
$\varphi'=\langle\mathcal{A}',\mathcal{G}\rangle$,
where
$\mathcal{A}'=\{\textit{assumption\_0},\textit{assumption\_2}\}$.

The resulting specification is unrealisable. Computing the unrealisable cores yields two minimal correction sets:
$\{\textit{guarantee\_1}\}$
and
$\{\textit{guarantee\_2}\}$.
Consequently, the baseline adaptation procedure produces two optimal trivial solutions:
$\varphi_1=\langle\mathcal{A}',\{\textit{guarantee\_0},\textit{guarantee\_1}\}\rangle$
and
$\varphi_2=\langle\mathcal{A}',\{\textit{guarantee\_0},\textit{guarantee\_2}\}\rangle$.

Neither solution is entirely satisfactory after synthesis:
\begin{enumerate}
    \item $\varphi_1$ yields a controller that keeps the pump continuously active;

    \item $\varphi_2$ yields a controller that keeps the pump permanently inactive.
\end{enumerate}

The issue is not merely practical, but structural. Since the baseline procedure operates solely through formula removal, it cannot recover behaviours that remain compatible with the newly observed environment dynamics. As a result, important behavioural structure encoded in the original specification is lost unnecessarily.

Consider $\varphi_2$ as an example. The guarantee
$\Box(\textit{methane}\rightarrow\neg\textit{pump})$
must be preserved, since activating the pump in the presence of methane remains unsafe. At the same time, the original guarantee
$\Box(\textit{highwater}\rightarrow\textit{pump})$
can no longer be maintained universally, because the newly admitted environment behaviour allows both $\textit{highwater}$ and $\textit{methane}$ to hold simultaneously.

However, the original intent of the guarantee can still be partially preserved by strengthening its antecedent, and therefore weakening the overall formula. In particular, the specification
$\varphi_2'=\langle\mathcal{A}',\{\textit{guarantee\_0},\textit{guarantee\_2},\Box(\textit{highwater}\land\neg\textit{methane}\rightarrow\textit{pump})\}\rangle$
remains realisable while preserving substantially more of the intended controller behaviour. An analogous weakening can similarly be applied to $\varphi_1$.

These alternative adaptations preserve much more of the behavioural structure of the original specification while modifying behaviour only in the previously unmodelled edge case introduced by the violation trace. More generally, they illustrate an important limitation of purely removal-based adaptation procedures: optimal solutions may require weakening formulas rather than discarding them entirely.

This observation motivates the richer adaptation methodology developed in Section~\ref{sec:learning_adaptations}, where we consider formula-level weakening as a mechanism for exploring a substantially broader space of specification adaptations.

%% file: code/minepump_spec.tex
\begin{table}[h]
\centering
\begin{tabular}{|p{2.5cm}|p{5.5cm}|p{5.4cm}|}
\hline
\textbf{Type \& ID} & \textbf{Informal requirement} & \textbf{GR(1) formalization} \\
\hline
assumption\_0 & Start with low water and no methane & $\neg \textit{highwater} \land \neg \textit{methane}$ \\
\hline
\textbf{assumption\_1} & There can never be both high levels of water and dangerous levels of methane & 
$\Box(\lnot\textit{highwater} \lor \neg\textit{methane})$ \\
\hline
assumption\_2 & If pump is turned on for two time steps, water levels lower & 
$\Box(\textbf{Y}\textit{pump} \land\textit{pump} \rightarrow \bigcirc\neg\textit{highwater})$ \\
\hline
guarantee\_0 & Start with pump off & 
$\neg\textit{pump}$\\
\hline
\textbf{guarantee\_1} & Whenever water exceeds threshold, turn pump on & 
$\Box(\textit{highwater} \rightarrow \textit{pump})$ \\
\hline
\textbf{guarantee\_2} & Whenever methane levels are dangerous, turn pump off & 
$\Box(\textit{methane} \rightarrow \neg\textit{pump})$ \\
\hline
\end{tabular}
\caption{GR(1) specification for the mine pump example}
\label{tab:minepump}
\end{table}

%% file: sections/5-learning_adaptations/5.0-introduction.tex
Section~\ref{sec:reality_integration} introduced the adaptation problem, formally characterised the space of admissible adaptations, and established preference criteria for comparing alternative solutions. We additionally presented a baseline degradation procedure based on formula removal and showed that it produces sound minimal adaptations restoring realisability after assumption violations.

The removal-based methodology, however, explores only a restricted portion of the degradation space. In many cases, stronger degradations can be avoided by weakening formulas rather than discarding them entirely, thereby preserving substantially more of the original specification behaviour.

In this section, we therefore investigate finer-grained degradation procedures based on formula-level weakening. Section~\ref{subsec:syntactic_degradation_methods} introduces syntactic weakening operators for generating degraded formulas and presents a collection of degradation strategies. Section~\ref{subsec:finer_grained_solutions_rint} then shows how formula-level degradation can be incorporated into a degradation procedure based on the OGIS framework. Finally, Section~\ref{subsec:learning_degradations_for_reality_integration} presents an implementation capable of systematically exploring the resulting degradation space and generating multiple candidate specification adaptations. Proofs of correctness and termination are deferred to Appendix~\ref{apx:sec:proofs}.

%% file: sections/5-learning_adaptations/5.1-syntactic_degradation_methods.tex
\subsection{Syntactic Formula Weakening}
\label{subsec:syntactic_degradation_methods}

The baseline adaptation procedure presented earlier operates exclusively through formula removal. While effective for restoring realisability, such an approach may discard substantially more behavioural structure than necessary. Prior work on specification weakening~\cite{zhang2025behavioral,buckworth2023adapting} has, therefore, considered finer-grained transformations capable of generating weaker formulas through modifications to their syntactic structure. This process is commonly referred to as \textit{syntactic degradation}.

\begin{definition}[Syntactic Degradation]
Given a formula $\varphi$, the process of transforming its syntactic structure to obtain a weaker formula $\varphi'$ such that $\varphi'\succ\varphi$ is called \textit{syntactic formula weakening}.
\end{definition}

To some extent, the baseline degradation procedure introduced in Section~\ref{subsec:trivial-solution} already constitutes a form of syntactic formula weakening. Removing formulas from the assumption or guarantee sets is equivalent to deleting conjuncts from the strict realisability formula represented by the GR(1) specification in set form (cf. Definition~\ref{def:gr1-strict-realisability}).

However, simple formula removal provides only a coarse-grained degradation mechanism. While conjunct deletion can also be applied within invariant and justice formulas, such transformations remain limited in the range of behavioural degradations they can express. Consequently, we instead focus on adaptation procedures based on  weakening formulas through local syntactic modifications rather than removed entirely.

We begin by focusing on the degradation of invariants. The corresponding treatment for justice goals is analogous, and we highlight the relevant differences where appropriate. To enable syntactic weakening, we represent invariants as implications between formulas in disjunctive normal form (DNF) under the scope of the globally operator:
\begin{align*}
    \varphi^{\invariant} = \Box\Big(\bigvee_{i\in m} \bigwedge_{j\in n_i}\alpha_{ij} \rightarrow \bigvee_{k\in p} \bigwedge_{l\in q_k}\beta_{kl}\Big)
\end{align*}
where each $\alpha$ is a literal that may optionally be preceded by the previous operator $\mathbf{Y}$, and each $\beta$ is a literal that may optionally be preceded by the next operator $\mathbf{X}$. This representation forms the basis of the syntactic augmentation procedures used for invariant degradation.

For justice goals, the structure is similar. However, since GR(1) synthesis tools such as Spectra support response-style justice specifications~\cite{maoz2021spectra,maoz2015gr,dwyer1999patterns}, we additionally consider response formulas of the form:
\begin{align*}
    \varphi^{\response} = \Box\Big(\bigvee_{i\in m} \bigwedge_{j\in n_i}\alpha_{ij} \rightarrow \textcolor{red}{\lozenge}\bigvee_{k\in p} \bigwedge_{l\in q_k}\beta_{kl}\Big)
\end{align*}
The only structural difference between $\varphi^{\invariant}$ and $\varphi^{\response}$ is the presence of the eventuality operator $\lozenge$ in the consequent of the response formula.

\subsubsection{Types of Weakening}
\label{sec:weakening-types}

Given a formula $\varphi$, we consider three classes of syntactic degradation: antecedent weakening, consequent weakening, and conversion to response or justice formulas. For each degradation operator introduced below, it follows directly from the rules of LTL entailment that the resulting formula is weaker than the original specification.

\paragraph{Antecedent Weakening}

Consider an invariant of the form $\Box(A\rightarrow B)$. Such a formula becomes weaker whenever the antecedent $A$ becomes stronger, since the implication is then required to hold in fewer situations. Because the antecedent is represented in disjunctive normal form (DNF), we preserve the syntactic structure of the formula by strengthening individual disjuncts through conjunction with additional literals. Intuitively, this restricts the set of behaviours under which the consequent must hold.
We therefore define the following degradation operator.

\begin{definition}[Antecedent Weakening]
\label{def:antecedent_weakening}
Formula $\bar\varphi^{\invariant}=AW(\varphi^{\invariant},x,\gamma)$ is the antecedent weakening of $\varphi^{\invariant}$ obtained by conjoining the unary temporal literal $\gamma$ to the $x$th disjunct of the antecedent of $\varphi^{\invariant}$, where $x\in[1..m]$. The resulting formula has the form:
\begin{align*}
    \bar\varphi^{\invariant} \Coloneqq \Box\Big[\textcolor{red}{(\gamma \land\bigwedge_{j\in n_x}\alpha_{xj})\lor} \big(\bigvee_{i\in m,i\not=x} \bigwedge_{j\in n_i}\alpha_{ij}\big) \rightarrow \bigvee_{k\in p} \bigwedge_{l\in q_k}\beta_{kl}\Big]
\end{align*}
\end{definition}

This methodology extends naturally to response formulas. For justice goals, we observe that every justice formula of the form $\Box\lozenge\varphi$ can equivalently be rewritten as the response formula $\Box(\top\rightarrow\lozenge\varphi)$. Antecedent weakening can then be applied by conjoining a unary temporal literal $\gamma$ to the antecedent $\top$ introduced by this transformation.

\paragraph{Consequent Weakening}

An alternative mechanism for weakening $\varphi^{\invariant}$ is to relax its consequent. Intuitively, adding additional disjuncts to the consequent enlarges the set of behaviours satisfying the implication, thereby weakening the overall specification. We formalise this transformation below.

\begin{definition}[Consequent Weakening]
\label{def:consequent_weakening}
Formula $\bar\varphi^{\invariant}=CW(\varphi^{\invariant},\Gamma)$ is the consequent weakening of $\varphi^{\invariant}$ obtained by introducing the conjunction of unary temporal literals $\Gamma$ as an additional disjunct in the consequent of $\varphi^{\invariant}$. The resulting formula has the following form:
\begin{align*}
    \bar\varphi^{\invariant} \Coloneqq \Box\Big(\bigvee_{i\in m} \bigwedge_{j\in n_i}\alpha_{ij} \rightarrow \big(\bigvee_{k\in p} \bigwedge_{l\in q_k}\beta_{kl}\big)\textcolor{red}{\;\lor\;\Gamma} \Big)
\end{align*}
\end{definition}

The same transformation applies analogously to response and justice formulas, with the additional disjunct introduced within the scope of the eventuality operator $\lozenge$.

\paragraph{Conversion to Response or Justice}

During the development of our degradation framework, and in particular while analysing the Arbiter case study, we identified the need for a third class of weakening operator. Unlike antecedent and consequent weakening, this transformation applies specifically to invariants and captures situations in which behaviour originally specified as an immediate obligation is more appropriately interpreted as an eventual response.

Intuitively, some invariants may be overly restrictive because they require the consequent to hold immediately after the antecedent becomes true. In practice, however, the intended behaviour may only require eventual satisfaction. This motivates a degradation operator that relaxes invariants into response-style properties by introducing an eventuality operator in the consequent.

\begin{definition}[Conversion to Response or Justice]
\label{def:conversion_to_response_or_justice}
We define the formula $\bar\varphi^{\invariant}=EV(\varphi^{\invariant})$ as the conversion of the invariant $\varphi^{\invariant}$ into a response formula by prepending the consequent of $\varphi^{\invariant}$ with the eventuality operator $\lozenge$. The resulting formula has the form:
\begin{align*}
    \bar\varphi^{\invariant} \Coloneqq \Box\Big(\bigvee_{i\in m} \bigwedge_{j\in n_i}\alpha_{ij} \rightarrow \textcolor{red}{\lozenge}\bigvee_{k\in p} \bigwedge_{l\in q_k}\beta_{kl}\Big)
\end{align*}

We refer to this transformation as \textit{conversion to justice} in the special case where the invariant $\varphi^{\invariant}$ has no antecedent. In such cases, the resulting formula has the form:
\begin{align*}
    \bar\varphi^{\invariant} \Coloneqq \Box \textcolor{red}{\lozenge}\bigvee_{k\in p} \bigwedge_{l\in q_k}\beta_{kl}
\end{align*}
\end{definition}

\subsubsection{Weakening Search Space}

The degradation operators introduced in Definitions~\ref{def:antecedent_weakening}, \ref{def:consequent_weakening}, and~\ref{def:conversion_to_response_or_justice} collectively induce a search space of weaker formulas for a given GR(1) specification component.

\begin{definition}[Weakening Search Space]
\label{def:weakening_search_space}
Let $\varphi$ be a formula following one of the syntactic structures introduced in Section~\ref{subsec:syntactic_degradation_methods} (that is, either $\varphi^{\invariant}$ or $\varphi^{\response}$). We define the weakening search space of $\varphi$, denoted $Space(\varphi)$, as the set of formulas $\varphi'\succ\varphi$ obtainable through applications of the weakening operators $AW()$, $CW()$, and $EV()$. Formally:
\begin{align*}
    Space(\varphi)=&
    \{\varphi'\mid \forall \gamma,\forall x\in[1..m].\ \varphi'=AW(\varphi,x,\gamma)\}\ \cup\\
    &\{\varphi'\mid \forall \Gamma.\ \varphi'=CW(\varphi,\Gamma)\}\ \cup\\
    &\{EV(\varphi)\}
\end{align*}
\end{definition}

The search space $Space(\varphi)$ provides the foundation for exploring alternative specification adaptations beyond simple formula removal. Importantly, however, the degradation operators defined above are neither unique nor exhaustive. They represent only one possible family of weakening transformations and do not guarantee discovery of the globally optimal degradation for a given adaptation problem.

Consequently, the challenge is not merely to generate weaker formulas, but rather to systematically explore the resulting degradation space in a manner that prioritises preferred adaptations according to the preference framework introduced in Section~\ref{sec:reality_integration}. The following sections develop a degradation procedure capable of leveraging these operators to discover higher-quality solutions than those produced by the baseline removal-based approach.

%% file: sections/5-learning_adaptations/5.2-finer_grained_rint_solutions.tex
\subsection{Specification Degradation via Formula Weakening}
\label{subsec:finer_grained_solutions_rint}

Recall that a solution for an adaptation problem $(\langle\mathcal{A},\mathcal{G}\rangle,\tau)$ is a realisable specification $\langle\mathcal{A}',\mathcal{G}'\rangle$ such that $\tau\models\mathcal{A}'$, where the degraded assumptions and guarantees are weaker than, or equivalent to, the original specification components. The baseline procedure in Algorithm~\ref{alg:trivialRInt} exploited these properties by constructing adaptations through formula removal from the sets $\mathcal{A}$ and $\mathcal{G}$.

As illustrated in the Minepump case study, however, purely removal-based adaptations explore only a restricted portion of the degradation space. In many situations, substantially better adaptations can be obtained by weakening formulas directly rather than discarding them entirely. This motivates the introduction of formula-level degradation operators within specifications.
We therefore introduce the following notation.

\begin{definition}[Formula-level Degradation]
Let $\mathcal{A}$ be a set of formulas. Suppose $a\in\mathcal{A}$ and let $a'$ be a degradation of $a$ such that $a\prec a'$. We define the weakening operation
\[
\Call{weaken}{\mathcal{A},a,a'}
=
\mathcal{A}\setminus\{a\}\cup\{a'\}
=
\mathcal{A}'.
\]
Intuitively, the operation replaces the original formula $a$ with a weaker formula $a'$, thereby generating a degraded specification component $\mathcal{A}'$. In our implementation, we usually select $a'$ from the space of weaker formulas from Definition~\ref{def:weakening_search_space}, so $a'\in Space(a)$.
\end{definition}

The introduction of formula-level weakening reveals that the baseline procedure in Algorithm~\ref{alg:trivialRInt} implicitly solves two distinct adaptation tasks: assumption adaptation and guarantee adaptation. We formalise these tasks separately below.

\begin{definition}[Assumption Adaptation]
\label{def:rint_a}
Let $\varphi=\langle\mathcal{A},\mathcal{G}\rangle$ be a specification and let $\tau$ be a violation trace. The assumption adaptation task consists of constructing a weakened assumption set $\mathcal{A}'$ such that:
\begin{enumerate}
    \item $\tau\models\mathcal{A}'$;

    \item $\mathcal{A}\prec\mathcal{A}'$.
\end{enumerate}
\end{definition}

The resulting specification $\langle\mathcal{A}',\mathcal{G}\rangle$ may still be unrealisable. However, the weakened assumptions at least incorporate the observed violation trace into the environment model. The remaining task is therefore to construct a suitable weakened guarantee set restoring realisability, which motivates the complementary adaptation task below.

\begin{definition}[Guarantee Adaptation]
\label{def:rint_g}
Let $\varphi=\langle\mathcal{A},\mathcal{G}\rangle$ be a specification, let $\tau$ be a violation trace, and let $\mathcal{A}'$ be a solution to the assumption adaptation task. The guarantee adaptation task consists of constructing a weakened guarantee set $\mathcal{G}'$ such that:
\begin{enumerate}
    \item $\mathcal{G}\preceq\mathcal{G}'$;

    \item the specification $\langle\mathcal{A}',\mathcal{G}'\rangle$ is realisable.
\end{enumerate}
\end{definition}

This decomposition highlights that the adaptation problem can be viewed as the sequential composition of assumption adaptation followed by guarantee adaptation.

\subsubsection{Learning Assumption Degradations}

The objective of assumption adaptation is to construct a weakened assumption set $\mathcal{A}'$ that admits the observed violation trace while preserving as much of the original environment model as possible. 


To construct a weakened assumption set using only degradation operations, we first identify the assumptions violated by the trace, exactly as in Section~\ref{subsec:trivial-solution}. Weakening assumptions already satisfied by the violation trace would be unnecessary, since they already admit the observed behaviour. Consequently, the adaptation procedure focuses exclusively on formulas belonging to the violation set $\mathcal{A}_\tau$ from Definition~\ref{def:violation_set}.

The viability of this approach follows from the observation below.

\begin{lemma}[Existence of Assumption Degradations]
\label{lem:each_asm_has_rint_degradation_solution}
Given a violation set $\mathcal{A}_\tau\subseteq\mathcal{A}$ for a violation trace $\tau$, every violated assumption admits a weaker formula satisfied by the trace. Formally,
\[
\forall a\in\mathcal{A}_\tau.\ \exists a'\succ a.\ \tau\models a'.
\]
\end{lemma}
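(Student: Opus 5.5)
The plan is to exhibit, for each violated assumption $a\in\mathcal{A}_\tau$, an explicit weaker formula $a'$ with $\tau\models a'$. The canonical candidate is $a'=\top$. Since $\mathcal{L}(\top)=(2^{\mathcal{V}})^\omega\supseteq\mathcal{L}(a)$, we get $a'\succeq a$ immediately, and $\tau\models\top$ holds under any reading of satisfaction for finite traces. The only remaining point is strictness, $a'\succ a$, which requires $\mathcal{L}(a)\subsetneq(2^{\mathcal{V}})^\omega$, i.e.\ that $a$ is not valid.

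For strictness I use the violation itself. Because $a\in\mathcal{A}_\tau$, Definition~\ref{def:violation_set} gives $\tau\not\models a$. I would argue that a formula violated by some finite trace cannot be equivalent to $\top$. For the GR(1) formula shapes in $\mathcal{A}$ (initial conditions and invariants $\mathbf{G}t^e$), a finite violation is a bad prefix: every infinite extension of $\tau$ falsifies $a$. Hence $\mathcal{L}(a)$ misses at least one infinite word, and $\top\succ a$.

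The main obstacle is the finite-trace semantics. The paper never fixes what $\tau\not\models a$ means for a finite $\tau$ (weak operators are mentioned in the background), and justice assumptions $\mathbf{GF}j^e$ are never violated by finite prefixes under the weak interpretation. I would therefore state explicitly that the finite trace is interpreted with the weak operators $X^{\mathcal{W}},W,F^{\mathcal{W}}$. Under that reading, $\tau\not\models a$ forces a genuine violation at some position of $\tau$, and that violation persists in every infinite extension. If this bad-prefix argument is considered too semantics-dependent, a fallback still yields strictness: $\tau\not\models a$ alone shows that $a$ is not semantically $\top$ in the finite setting, and any non-tautological assumption has $\top\succ a$.

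Finally, I would remark that the same argument also allows a less drastic witness, namely $a'=a\lor\psi_\tau$, where $\psi_\tau$ characterises the prefix $\tau$ (or an antecedent weakening from $Space(a)$ ruling out the violating step). These witnesses matter for the learning procedure later in the paper. For the existence claim, however, $\top$ suffices.
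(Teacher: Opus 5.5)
Your proposal is correct and matches the paper, which proves the lemma in one line by taking the trivial weakening $a'=\top$ (satisfied by every trace, including $\tau$); the paper never checks strictness $\top\succ a$, so your bad-prefix argument under the weak finite-trace reading, for initial conditions and invariants, is a genuine addition. Drop the fallback, though: failing on a finite trace does not by itself give $\mathcal{L}(a)\neq(2^{\mathcal{V}})^\omega$. For example, under strong next $\Box(\bigcirc p\lor\bigcirc\neg p)$ fails on every finite trace but is valid over infinite words, so only the bad-prefix argument actually establishes strictness.
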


The lemma follows immediately from the observation that every assumption admits a trivial weakening to $\top$, which is satisfied by any trace, including $\tau$. This observation further extends to the assumption set as a whole, leading to the theorem below.


\begin{theorem}[Existence of Violating-Assumption Degradations]
\label{th:deg_all_violating_assumptions}
Given a set of assumptions $\mathcal{A}$, a violation trace $\tau\not\models\mathcal{A}$, and the corresponding violation set $\mathcal{A}_\tau$, there exists a set of degraded assumptions $\mathcal{A}_\tau'=\{a_1',\dots,a_n'\}$
such that:
\begin{enumerate}
    \item for every violated assumption $a_i\in\mathcal{A}_\tau$, the corresponding degraded assumption satisfies $a_i\prec a_i'$;

    \item the degraded assumption set satisfies the violation trace, i.e.,   $\tau\models\mathcal{A}_\tau'$.
\end{enumerate}
\end{theorem}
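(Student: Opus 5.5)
The plan is to build $\mathcal{A}_\tau'$ pointwise from Lemma~\ref{lem:each_asm_has_rint_degradation_solution} and then lift to the set. Enumerate the violation set as $\mathcal{A}_\tau=\{a_1,\dots,a_n\}$; it is finite because $\mathcal{A}$ is a finite set of GR(1) formulas. It is nonempty: since $\tau\not\models\mathcal{A}$ and $\mathcal{A}$ is read conjunctively, some conjunct is violated by $\tau$. For each $a_i$, Lemma~\ref{lem:each_asm_has_rint_degradation_solution} yields a formula $a_i'\succ a_i$ with $\tau\models a_i'$. Set $\mathcal{A}_\tau'=\{a_1',\dots,a_n'\}$. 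Property~1 then holds by construction.

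For Property~2, use the set-as-conjunction reading of Definition~\ref{def:set-representation}. A trace satisfies a set of formulas iff it satisfies each member. Each $a_i'$ was chosen with $\tau\models a_i'$, so $\tau\models\bigwedge_i a_i'$, that is, $\tau\models\mathcal{A}_\tau'$.

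The one real subtlety is strictness, $a_i\prec a_i'$, and what satisfaction means for a finite trace. Taking $a_i'=\top$ is the canonical witness. It is satisfied by every (finite or infinite) trace under the weak-operator semantics of~\cite{de2014reasoning}. Moreover, $\top\succ a_i$ strictly, because $\tau\not\models a_i$ shows $a_i$ is not valid, so $\mathcal{L}(a_i)\subset\mathcal{L}(\top)$. I should state that this argument carries over to finite $\tau$ provided unsatisfiability on the prefix implies $a_i$ excludes some infinite extension, which holds for the safety-style violations considered.

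If one wants the $a_i'$ drawn from $Space(a_i)$ rather than simply $\top$, I would note that this is not required by the statement. Existence only needs a single witness per assumption, so choosing $\top$ uniformly suffices and no search argument is needed.
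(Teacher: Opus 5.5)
Your proposal is correct and follows the paper's own argument: apply Lemma~\ref{lem:each_asm_has_rint_degradation_solution} to each violated assumption with the trivial witness $a_i'=\top$, then lift to the set by reading $\mathcal{A}_\tau'$ as a conjunction. Your remark on why $a_i\prec\top$ is strict for a finite $\tau$ under the weak-operator semantics fills in a detail the paper leaves implicit.
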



Lemma~\ref{lem:each_asm_has_rint_degradation_solution} shows that every violated assumption admits at least one satisfying degradation. Consequently, assumption adaptation can be formulated as the task of discovering an appropriate degraded replacement for each formula in the violation set $\mathcal{A}_\tau$, thereby constructing a weakened assumption set satisfying the observed trace.

More specifically, the objective is to construct a degraded assumption set of the form
$\mathcal{A}'=
\mathcal{A}\setminus\mathcal{A}_\tau \cup \mathcal{A}_\tau'$,
where every formula in $\mathcal{A}_\tau'$ is a degradation of a corresponding violated assumption in $\mathcal{A}_\tau$, and the resulting assumption set satisfies the violation trace, that is, $\tau\models\mathcal{A}'$.

\subsubsection{Learning Guarantee Degradations}

Unlike assumption adaptation, guarantee adaptation cannot be guided directly by trace satisfaction alone. While assumption degradations can be validated locally against the observed violating traces, guarantee degradations must additionally preserve realisability of the resulting specification, introducing a substantially more difficult global synthesis problem.

Oracle-Guided Inductive Synthesis (OGIS)~\cite{jha2017theory} is a synthesis framework in which a learner proposes candidate solutions from a hypothesis space, and an oracle checks whether each candidate satisfies the desired correctness condition. If a candidate fails, the oracle returns information, typically in the form of a counterexample \cite{alur2013counter}, that guides the next synthesis step.
In this work, we employ OGIS as the underlying framework for solving the assumption adaptation problem. It is particularly well suited to our setting because it separates candidate generation from formal verification, allowing degradation candidates to be refined iteratively through interaction with correctness oracles.

Our setting, however, differs from the classical OGIS formulation in an important respect. Traditional OGIS approaches typically assume a monotonic learning process, in which newly acquired examples progressively reduce the admissible synthesis space. In contrast, adaptation under runtime assumption violations is inherently non-monotonic: newly observed traces may invalidate previously preferred degradations, requiring the adaptation procedure to revisit and revise earlier weakening decisions.

Consequently, rather than learning degradations from a fixed set of examples in a single step, our methodology incrementally accumulates violating traces and counterexamples throughout execution. The guarantee adaptation task can therefore be viewed as an OGIS loop over the degradation search space introduced in Section~\ref{subsec:syntactic_degradation_methods}. The inductive component generates weakened guarantees through applications of the syntactic degradation operators and the $\Call{weaken}{}$ transformation, while the oracle component verifies whether the resulting guarantees satisfy the accumulated violation traces and preserve the structural requirements of the specification. Counterexamples returned by the oracle are then incorporated into subsequent OGIS iterations, progressively refining the degradation search process.


Consider a specification $\varphi'=\langle\mathcal{A}',\mathcal{G}\rangle$ obtained after adapting the assumptions of some specification $\varphi$ in response to a violation trace $\tau$. Suppose further that $\varphi'$ is unrealisable. From the discussion in Section~\ref{subsec:trivial-solution}, unrealisability implies the existence of at least one unrealisable core within the guarantee set.

Each unrealisable core corresponds to one or more counter-strategies demonstrating how the environment can force the system to violate the specification. Intuitively, a counter-strategy identifies behaviours under which the current guarantees cannot all be enforced simultaneously. Guided by the OGIS paradigm, we therefore use counter-strategies as oracle feedback: given a counter-strategy $\cs$, the OGIS loop searches for guarantee degradations that eliminate the ability of $\cs$ to force a guarantee violation.

We therefore focus on degrading guarantees so that the behaviours enforced by a counter-strategy are no longer considered violating behaviours. To achieve this, we extract a finite counter-play $\rho\in\Sigma^*$ compliant with the selected counter-strategy $\cs$, such that $\rho\not\models\mathcal{G}$, and search for a degraded guarantee set $\mathcal{G}'\succ\mathcal{G}$ satisfying $\rho\models\mathcal{G}'$.

This leads to the following learning problem.

\begin{definition}[Counter-Play-Guided Guarantee Adaptation]\label{def:rint_g_ct}
Let $\mathcal{G}$ be the guarantee set of an unrealisable specification $\varphi=\langle\mathcal{A}',\mathcal{G}\rangle$. Furthermore, let $C\subset\Sigma^*$ be a finite set of counter-plays representing behaviours that the adapted specification should permit.

The counter-play-guided guarantee adaptation task consists of constructing a degraded guarantee set $\mathcal{G}'$ such that:
\begin{enumerate}
    \item $\mathcal{G}\prec\mathcal{G}'$;

    \item $\forall \rho\in C.\ \rho\models\mathcal{G}'$.
\end{enumerate}
\end{definition}

We allocate the task of discovering one or more counter-plays to the oracle component of the OGIS loop. Once extracted, these counter-plays guide the guarantee adaptation process in a manner closely analogous to the role of violation traces during assumption adaptation. In particular, counter-plays identify guarantees whose current behaviour prevents the specification from admitting the behaviours enforced by the environment counter-strategy.

This observation allows us to derive the following dual of Lemma~\ref{lem:each_asm_has_rint_degradation_solution}.

\begin{lemma}[Existence of Guarantee Degradations]
\label{lem:each_gar_has_rint_degradation_solution}
Given a counter-play violation set $\mathcal{G}_C\subseteq\mathcal{G}$ for a set of counter-plays $C$, every violating guarantee admits a weaker formula satisfied by all counter-plays in $C$. Formally,
\[
\forall g\in\mathcal{G}_C.\ \exists g'\succ g.\ \forall \rho\in C.\ \rho\models g'.
\]
\end{lemma}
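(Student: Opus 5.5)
The plan mirrors the argument given for Lemma~\ref{lem:each_asm_has_rint_degradation_solution}: for every violating guarantee we exhibit one uniform witness, the trivial weakening $g'=\top$. We use $\mathcal{G}_C=\{g\in\mathcal{G}\mid \exists\rho\in C.\ \rho\not\models g\}$, the natural analogue of Definition~\ref{def:violation_set}. We then check two facts in turn: that $\top\succ g$ holds strictly, and that $\rho\models\top$ holds for every $\rho\in C$.

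The second fact is immediate. $\top$ is satisfied at every position of every trace, so $\rho\models\top$ for every counter-play, whether we use the infinite-trace semantics or the finite-trace reading with the weak operators from Section~\ref{subsec:linear_temporal_logic}. Since the choice of $g'$ does not depend on $\rho$, the quantifier order $\exists g'.\,\forall\rho\in C$ causes no difficulty.

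The first fact is where membership in $\mathcal{G}_C$ is used, and it is the only real subtlety. For $g\in\mathcal{G}_C$ we fix a witness $\rho\in C$ with $\rho\not\models g$. Since $\rho\models\top$, the formula $\top$ admits a behaviour that $g$ rejects, so $g\not\equiv\top$. Together with the trivial inclusion $\mathcal{L}(g)\subseteq\mathcal{L}(\top)$, this gives $\top\succ g$ in the sense of Definition~\ref{def:weakness_ltl}.

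The main obstacle is that $\rho$ is a finite counter-play, whereas $\succ$ is defined over $\omega$-languages. To move from $\rho\not\models g$ to strict language inclusion, we would argue that a finite counter-play violating a GR(1) guarantee does so through an initial or safety conjunct. Such a violation is witnessed by a bad prefix, so every infinite extension of $\rho$ also violates $g$; any such extension is then in $\mathcal{L}(\top)\setminus\mathcal{L}(g)$. If $g$ is a liveness or response guarantee, $\rho$ being finite means this bad-prefix argument is unavailable. In that case we fall back on $g$ being non-valid: under the convention that $\rho\not\models g$ is evaluated over the infinite play compliant with the counter-strategy, that infinite play itself is the separating trace. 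The remaining step is routine: for each $g\in\mathcal{G}_C$, the choice $g'=\top$ then witnesses $\exists g'\succ g.\ \forall\rho\in C.\ \rho\models g'$. If a witness inside $Space(g)$ is wanted instead, the same reasoning applies to $CW(g,\top)$, which is equivalent to $\top$.
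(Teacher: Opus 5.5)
Your proposal is correct and follows the paper's argument: the paper also uses the trivial weakening $g'=\top$ as a single witness satisfied by every counter-play in $C$, and states this in one line. Your extra argument that $\top\succ g$ holds strictly, using a counter-play $\rho\in C$ with $\rho\not\models g$ and either the weak finite-trace operators or the infinite compliant play, fills in a step the paper leaves implicit; as you note, that step depends on the weak reading of finite counter-plays.
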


The lemma follows immediately from the observation that every guarantee admits a trivial weakening to $\top$, which is satisfied by every counter-play. This observation extends naturally to the entire guarantee set, leading to the theorem below.

%
%

\begin{theorem}[Existence of Violating-Guarantee Degradations]
\label{th:deg_all_violating_guarantees}
Given a set of guarantees $\mathcal{G}$, a set of counter-plays $C$ such that $\forall \rho\in C.\ \rho\not\models\mathcal{G}$, and the corresponding counter-play violation set $\mathcal{G}_C$, there exists a set of degraded guarantees $\mathcal{G}_C'$ such that:
\begin{enumerate}
    \item for every violating guarantee $g_i\in\mathcal{G}_C$, the corresponding degraded guarantee satisfies $g_i\prec g_i'$;

    \item the degraded guarantee set satisfies every counter-play in $C$, that is, $$\forall \rho\in C.\ \rho\models\mathcal{G}_C'.$$
\end{enumerate}
\end{theorem}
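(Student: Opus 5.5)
The plan is to build $\mathcal{G}_C'$ one formula at a time using Lemma~\ref{lem:each_gar_has_rint_degradation_solution}, mirroring the argument behind Theorem~\ref{th:deg_all_violating_assumptions}. Here $\mathcal{G}_C=\{g\in\mathcal{G}\mid \exists\rho\in C.\ \rho\not\models g\}$ is the set of guarantees violated by at least one counter-play. Enumerate it as $\mathcal{G}_C=\{g_1,\dots,g_n\}$; it is finite because $\mathcal{G}$ is. For each $g_i$ the lemma gives a $g_i'\succ g_i$ with $\rho\models g_i'$ for every $\rho\in C$. Set $\mathcal{G}_C'=\{g_1',\dots,g_n'\}$. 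Condition~(1) then holds by construction.

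For condition~(2), I use that satisfaction of a set of formulas means satisfaction of their conjunction. Fix $\rho\in C$. Then $\rho\models g_i'$ for each $i$, so $\rho\models\bigwedge_i g_i'$, that is, $\rho\models\mathcal{G}_C'$. The only quantifier exchange is from ``for every $i$, for every $\rho$'' to ``for every $\rho$, for every $i$'', which is harmless because the lemma already gives a single $g_i'$ that works uniformly for all of $C$.

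The step I expect to need the most care is making the lemma's witness strictly weaker, which is what condition~(1) demands. Concretely, I will take $g_i'=\top$, as the paper does in its justification of the lemma. The question is then whether $g_i\prec\top$ rather than $g_i\equiv\top$. This follows from membership in $\mathcal{G}_C$: some counter-play $\rho\in C$ has $\rho\not\models g_i$, while $\rho\models\top$. Hence $g_i$ is not valid and $\mathcal{L}(g_i)\subset\mathcal{L}(\top)$. One subtlety is that counter-plays are finite words, interpreted with the weak finite-trace operators of Section~\ref{subsec:linear_temporal_logic}, whereas $\prec$ is defined over $\omega$-languages. I would bridge this by extending $\rho$ to an infinite trace that still falsifies $g_i$. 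This works because GR(1) guarantees falsified on a finite prefix under weak semantics are safety violations (an initial or invariant conjunct fails at some position), so every infinite extension also falsifies them. A justice guarantee can only be weakly satisfied on a finite prefix and hence cannot be in $\mathcal{G}_C$ in the first place.

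Finally, I would remark on what the theorem does and does not claim. Replacing every violated guarantee by $\top$ is only an existence witness. In practice the learner would search $Space(g_i)$, since $C$ is finite and antecedent weakening with literals that exclude the finitely many offending positions yields nontrivial witnesses. The theorem also says nothing about the untouched guarantees $\mathcal{G}\setminus\mathcal{G}_C$. These are satisfied by every $\rho\in C$ by the definition of $\mathcal{G}_C$, so $(\mathcal{G}\setminus\mathcal{G}_C)\cup\mathcal{G}_C'$ also satisfies all of $C$ and solves the task of Definition~\ref{def:rint_g_ct}.
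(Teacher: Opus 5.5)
Your proposal is correct and follows the paper's own argument, which likewise weakens each violated guarantee to $\top$ (the justification given for Lemma~\ref{lem:each_gar_has_rint_degradation_solution}) and lifts this to the whole set $\mathcal{G}_C$. Your check that $g_i\prec\top$ is strict fills in a step the paper leaves implicit: you use a counter-play that falsifies $g_i$ and extend it to an infinite trace, relying on the fact that GR(1) guarantees falsifiable on a finite prefix under the weak semantics are safety violations. Only your closing side remark overreaches, namely that antecedent weakening always yields a nontrivial witness, since the only literal falsified at every offending position can be one that makes the weakened formula equivalent to $\top$; the theorem does not depend on that remark.
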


Note that the construction above does not necessarily imply that the resulting specification $\varphi''=\langle\mathcal{A}',\mathcal{G}'\rangle$ is realisable. If $\varphi''$ remains unrealisable, the guarantee adaptation procedure can be applied iteratively over the degraded guarantee set $\mathcal{G}'$, using a newly generated set of counter-plays $C'$ returned by the oracle component.

More specifically, the OGIS loop searches for a further degradation $\mathcal{G}''\succ\mathcal{G}'$ such that the specification $\varphi'''=\langle\mathcal{A}',\mathcal{G}''\rangle$ becomes realisable. If realisability is achieved, the OGIS loop terminates. Otherwise, the process continues iteratively until a sufficiently weak guarantee set is discovered that restores realisability, potentially including the empty guarantee set in the worst case.

A proof of soundness and termination for this iterative OGIS loop is provided in Appendix~\ref{apx:proof_soundness_cegis}.

%% file: sections/5-learning_adaptations/5.3-learning_degradations_for_reality_integration.tex
\subsection{Synthesis of Preferred Degradations}
\label{subsec:learning_degradations_for_reality_integration}
Recall Definition~\ref{def:preference_rint_solutions}, which characterises preference between degraded specifications. In particular, preferred solutions correspond to specifications preserving the strongest possible guarantees while remaining incomparable only when semantically unavoidable.

Consequently, the OGIS loop must explore the degradation space sufficiently thoroughly to identify the strongest incomparable guarantee degradations obtainable through syntactic weakening. Since different degradation paths may preserve different subsets of system behaviour, restricting the search prematurely risks overlooking preferred adaptations and converging toward unnecessarily weak specifications.

The objective of the OGIS loop presented in this section is therefore not merely to restore realisability, but to systematically explore the degradation space generated by the syntactic weakening operators introduced earlier, while using the preference ordering to guide the extraction of the most desirable degraded specifications.

\input{code/learning_rint_algorithm} 

We now introduce Algorithm~\ref{alg:learningRInt}, which employs a breadth-first OGIS-based degradation loop to compute preferred degraded specifications.

In line~\ref{line:ogis_learn_asm_deg}, the learner generates candidate assumption degradations satisfying the assumption adaptation conditions from Definition~\ref{def:rint_a}. Lines~\ref{line:ogis_gen_aw_start}--\ref{line:ogis_gen_aw_end} then partition the generated specifications into two sets: the set \textit{learnedSolutions}, containing specifications that are already realisable after assumption adaptation, and the set \textit{unrealisableSpecifications}, containing the remaining unrealisable candidates.

In line~\ref{line:ogis_check_early_stop}, the algorithm checks whether any realisable specifications were discovered. If so, further guarantee adaptation is unnecessary. The algorithm therefore returns, in line~\ref{line:ogis_early_stop}, the merged preferred realisable degraded specifications extracted using the $\Call{merge}{}$ procedure in Appendix~\ref{apx:helper_functions} in Algorithm~\ref{alg:mergeSpecs} and the $\Call{filterPreferred}{}$ procedure introduced in Algorithm~\ref{alg:filterPreferred}.

If no such specifications are discovered, the algorithm extracts the strongest intermediate specification in line~\ref{line:ogis_first_ispec}. Observe that, at this stage, a call to $\Call{filterPreferred}{}$ returns only a single specification, since all candidates still preserve the original guarantee set $\mathcal{G}$. Because this intermediate specification is unrealisable, the oracle component then extracts multiple sets of counter-plays in line~\ref{line:ogis_first_set_cp}, following the optimisation strategy discussed in Section~\ref{subsec:finer_grained_solutions_rint}. 

In lines~\ref{line:ogis_init_queue_start}--\ref{line:ogis_init_queue_end}, the algorithm constructs the initial synthesis frontier by inserting tuples composed of an unrealisable specification $spec$ together with an associated set of counter-plays $C$ into the $candidateQueue$.

Starting from line~\ref{line:ogis_cpgis_start}, the algorithm enters the counter-play-guided guarantee adaptation loop. While there is some candidate tuple, we extract it on line~\ref{line:ogis_pop_tuple} from the queue,
then apply on line~\ref{line:ogis_learn_guarantee_degradation} a counter-play guided guarantee degradation step over the intermediate specification $iSpec$ and the set of counter-plays $C$, as per Definition~\ref{def:rint_g_ct}, to learn a new set of possible solutions to the adaptation problem. 

Lines~\ref{line:ogis_learned_specs_check_start}--\ref{line:ogis_learned_specs_check_end} then evaluate each specification in the resulting set \textit{learnedSpecs}. If a specification is realisable, it is added to the set \textit{learnedSolutions} in line~\ref{line:ogis_add_learned_solution}. Otherwise, the oracle component extracts new counter-play sets for the unrealisable specification in line~\ref{line:ogis_cpgis_get_set_cp}. Finally, lines~\ref{line:ogis_iterate_cps_start}--\ref{line:ogis_iterate_cps_end} extend the synthesis frontier by inserting newly generated specification/counter-play tuples back into the queue, allowing the OGIS loop to continue refining guarantee degradations iteratively.

Eventually, the OGIS loop terminates once all relevant degradations have been explored, at which point the queue becomes empty after line~\ref{line:ogis_cpgis_end}. As proved in Appendix~\ref{apx:proof_soundness_cegis}, 
every exploration branch eventually converges to a realisable specification. 
Finally, line~\ref{line:ogis_end} applies the $\Call{filterPreferred}{}$ and $\Call{merge}{}$ procedures in succession to the accumulated set \textit{learnedSolutions}, returning the merged preferred degraded specifications.

The OGIS loop can be further optimised by exploiting the preference ordering introduced earlier. Since the objective is to discover specifications preserving the strongest distinct guarantee sets, the exploration queue need not retain candidates that cannot contribute to preferred adaptations.

More specifically, the $candidateQueue$ can be extended with a redundancy check in line~\ref{line:ogis_redundancy_check}, preventing reinsertion of candidate tuples composed of equivalent intermediate specifications together with identical, or sufficiently similar, counter-play sets. In such cases, the resulting OGIS branches are expected to explore equivalent degradation regions, making repeated exploration unnecessary.

Additionally, if an intermediate specification is already weaker than some specification contained in $learnedSolutions$, then it need not be inserted into the queue in line~\ref{line:ogis_add_queue}. Any adaptations generated from that specification would eventually be discarded by the $\Call{filterPreferred}{}$ procedure, since stronger preferred adaptations have already been discovered.

These optimisations substantially reduce redundant exploration while preserving completeness with respect to the preferred degraded specifications returned by the OGIS loop.

%% file: code/learning_rint_algorithm.tex
\newcommand\doubleplus{+\kern-1.3ex+\kern0.8ex}
\newcommand\mdoubleplus{\ensuremath{\mathbin{+\mkern-10mu+}}}

\begin{algorithm}
\caption{OGIS-Based Synthesis of Preferred Degradations}
\label{alg:learningRInt}
\begin{algorithmic}[1]
\Require $\langle\mathcal{A}, \mathcal{G}\rangle$ (realisable GR(1) specification), $\tau$ (violation trace), $AW$ and $CW$ (syntactic degradation operators)
\Ensure $\preferredDegradations=\{\langle\mathcal{A}',\mathcal{G}'\rangle\}$, a set of preferred realisable degraded specifications

\Function{synthesisePreferredDegradations}{$\langle\mathcal{A},\mathcal{G}\rangle,\tau$} 
\State $\learnedSpecs \gets \Call{\learnAssumptionDegradations}{\langle\mathcal{A},\mathcal{G}\rangle,\tau}$ \label{line:ogis_learn_asm_deg}
\State $\learnedSolutions \gets [\ ]$ \label{line:ogis_gen_aw_start}
\State $\unrealisableSpecifications \gets [\ ]$

\For{$\spec \in \learnedSpecs$}
    \If{$\Call{\isRealisable}{\spec}$}
        \State \Call{append}{$\learnedSolutions,\ \spec$}
    \Else
        \State \Call{append}{$\unrealisableSpecifications,\ \spec$}
    \EndIf
\EndFor \label{line:ogis_gen_aw_end}

\If{$\learnedSolutions \neq \emptyset$}\label{line:ogis_check_early_stop}
    \State \Return $\textsc{merge}\left(\Call{filterPreferred}{\learnedSolutions}\right)$\label{line:ogis_early_stop}
\EndIf

\State $\spec \gets \Call{filterPreferred}{\unrealisableSpecifications}[0]$\label{line:ogis_first_ispec}
\State $\newCounterPlaySets \gets \Call{\extractCounterPlays}{\spec}$\label{line:ogis_first_set_cp}
\State $\candidateQueue \gets [\ ]$\label{line:ogis_init_queue_start}

\For{$C \in \newCounterPlaySets$}
    \State \Call{append}{$\candidateQueue,\ (\spec, C)$}
\EndFor\label{line:ogis_init_queue_end}

\While{$\candidateQueue \neq \emptyset$}\label{line:ogis_cpgis_start}
    \State $(iSpec, C) \gets \Call{pop}{\candidateQueue}$\label{line:ogis_pop_tuple}
    \State $\learnedSpecs \gets \Call{\learnGuaranteeDegradations}{iSpec, C}$\label{line:ogis_learn_guarantee_degradation}

    \For{$\spec \in \learnedSpecs$}\label{line:ogis_learned_specs_check_start}
        \If{$\Call{\isRealisable}{\spec}$}
            \State \Call{append}{$\learnedSolutions,\ \spec$}\label{line:ogis_add_learned_solution}
        \Else
            \State $\newCounterPlaySets \gets \Call{\extractCounterPlays}{\spec}$\label{line:ogis_cpgis_get_set_cp}
            \For{$C \in \newCounterPlaySets$} \label{line:ogis_iterate_cps_start}
                \If{$\neg\Call{\isRedundantCandidate}{\candidateQueue, \learnedSolutions, \spec, C}$}\label{line:ogis_redundancy_check}
                    \State \Call{append}{$\candidateQueue,\ (\spec, C)$}\label{line:ogis_add_queue}
                \EndIf
            \EndFor\label{line:ogis_iterate_cps_end}
        \EndIf
    \EndFor \label{line:ogis_learned_specs_check_end}
\EndWhile \label{line:ogis_cpgis_end}

\State \Return $\textsc{merge}\left(\Call{filterPreferred}{\learnedSolutions}\right)$ \label{line:ogis_end}
\EndFunction
\end{algorithmic}
\end{algorithm}

%% file: sections/6-evaluation/6.0-introduction.tex
In Sections~\ref{sec:reality_integration} and~\ref{sec:learning_adaptations}, we introduced the adaptation problem for GR(1) specifications together with two adaptation methodologies: a removal-based degradation approach, referred to as the \textit{trivial degradation algorithm}, and a finer-grained syntactic degradation approach based on OGIS loops, and post-processing of the results. Both approaches were shown to compute at least one satisfying degraded specification for the adaptation problem.

In this section, we evaluate the practical benefits and trade-offs of the proposed methodologies. Section~\ref{subsub:methodology_of_evaluation} presents the evaluation methodology, including the preference criteria used to assess the proposed approach. Section~\ref{subsec:experiments} introduces the case studies, describes the experimental setup, and discusses the results obtained. Finally, Section~\ref{subsec:discussion_eval} summarises the findings and discusses their implications.


%% file: sections/6-evaluation/6.1-methodology_of_evaluation.tex
\subsection{Methodology}
\label{subsub:methodology_of_evaluation}

We refer to the degraded specifications computed by the trivial degradation procedure from Section~\ref{subsec:trivial-solution} as \textit{trivial degradations}. Similarly, we refer to the degraded specifications computed by the OGIS-based syntactic degradation approach from Section~\ref{sec:learning_adaptations} as \textit{learned degradations}.


The methodologies introduced in Sections~\ref{sec:reality_integration} and~\ref{sec:learning_adaptations} are guaranteed to produce satisfying degradations for the adaptation problem. Consequently, the evaluation focuses not on correctness, but on assessing the relative quality and usefulness of the degraded specifications produced by each approach. Specifically, we investigate whether the learned degradations are
preferable, according to
Definition~\ref{def:preference_rint_solutions}, to trivial degradations.



To this end, we evaluate the proposed approaches on a collection of case studies adapted from the degradation benchmarks of Buckworth et al. \cite{buckworth2023adapting}. Each case study consists of:
\begin{enumerate}
    \item a specification $\varphi=\langle\mathcal{A},\mathcal{G}\rangle$;
    \item a violation trace $\tau$ such that $\tau\not\models\mathcal{A}$.
\end{enumerate}

For each benchmark, we compute two sets of degraded specifications:
\begin{enumerate}
 \item $\Phi_t=\Call{\trivialDegradation}{\varphi,\tau}$, the set of trivial degradations;
\item $\Phi_l=\Call{\textit{learnPreferredDegradations}}{\varphi,\tau}$, the set of learned degradations.

\end{enumerate}

We are interested in determining whether, for every element in the set of trivial solutions, there exists a preferred specification in the set of learned solutions. Formally, we seek to establish the validity of the following property:
\begin{align*}
\forall\varphi_t\in\Phi_t.\exists\varphi_l\in\Phi_l.\varphi_l\sqsubset \varphi_t \qquad  (\textit{preference criteria})
\end{align*}

\noindent
where $\sqsubset$ denotes preference between two specifications, and is defined in Definition~\ref{def:preference}. If the formula above is not true in any case study, then we are looking at some form of failure of our learning-based algorithm in Section~\ref{sec:learning_adaptations}, which we will investigate on a case-by-case basis.

We encode the specifications in the GR(1) language Spectra, and leverage the tools and code base of the SYNTECH lab for synthesis and diagnosis \cite{maoz2021spectra}. This choice infers an implementation detail of note, with respect to the formulation of response patterns $\Box (p\rightarrow \lozenge s)$. When discussing pure GR(1) syntax, this behaviour is unsupported by the original rules of realisability \cite{bloem2012synthesis}, which would disallow our third syntactic degradation methodology from Definition~\ref{def:conversion_to_response_or_justice}. 

%% file: sections/6-evaluation/6.2-experiments.tex
\subsection{Experimental Results}
\label{subsec:experiments}


The results of the evaluation are presented in Tables~\ref{tab:rq1} and~\ref{tab:ordering}. In Table~\ref{tab:rq1}, the columns \textit{NumT} and \textit{NumL} report the number of degradations returned by the trivial and learning-based adaptation algorithms, respectively. Across all case studies, the learning-based algorithm identifies a single preferred degradation.


Table~\ref{tab:ordering} provides a qualitative comparison between the trivial and learned degradations. The learned adaptations avoid guarantee weakening entirely for the Arbiter and Traffic Single case studies, demonstrating that fine-grained assumption weakening can preserve the complete set of system guarantees. Furthermore, the \textit{Assumptions Order} column of Table~\ref{tab:ordering} shows that the learned degradations consistently retain stronger assumptions than their trivial counterparts. This reflects the locality of the observed assumption violations: since the violation trace affects only a subset of the behaviour captured by the original assumptions, the learning-based approach weakens only the affected portion rather than discarding the assumption altogether.


The \textit{Preference} column of Table~\ref{tab:ordering} reports the ordering between the trivial and learned degradations. In every case study, the learned degradation is preferred according to Definition~\ref{def:preference_rint_solutions}, thereby satisfying the preference criteria. The complete degraded specifications are provided in Appendices~\ref{apx:trivial_solutions} and~\ref{apx:preferred_solutions}.

\begin{table}[h]
\centering

\begin{minipage}[t]{0.33\textwidth}
\centering
\scriptsize
\setlength{\tabcolsep}{3pt}
\begin{tabular}{@{}lcc c@{}}
\toprule
Case Study & NumT & NumL & Outcome \\
\midrule
Arbiter         & 1 & 1 & Success \\
Lift            & 1 & 1 & Success \\
Minepump        & 2 & 1 & Success \\
Traffic Single  & 1 & 1 & Success \\
Traffic Updated & 1 & 1 & Success \\
\bottomrule
\end{tabular}
\caption{Adaptation counts and preference-criteria outcome per case study.}
\label{tab:rq1}
\end{minipage}\hfill
\begin{minipage}[t]{0.63\textwidth}
\centering
\scriptsize
\setlength{\tabcolsep}{2pt}
\begin{tabular*}{\linewidth}{@{\extracolsep{\fill}}lccc@{}}
\toprule
Case study & Assumptions order & Guarantees order & Preference \\
\midrule
Arbiter &
$\mathcal{A}_{o}\prec\mathcal{A}_{l}\prec\mathcal{A}_{t}$ &
$\mathcal{G}_{o}\equiv\mathcal{G}_{l}\prec\mathcal{G}_{t}$ & $\varphi_l\sqsubset \varphi_t$ \\
Lift &
$\mathcal{A}_{o}\prec\mathcal{A}_{l}\prec\mathcal{A}_{t}$ &
$\mathcal{G}_{o}\equiv\mathcal{G}_{l}\equiv\mathcal{G}_{t}$ & $\varphi_l\sqsubset \varphi_t$ \\
Traffic Single &
$\mathcal{A}_{o}\prec\mathcal{A}_{l}\prec\mathcal{A}_{t}$ &
$\mathcal{G}_{o}\equiv\mathcal{G}_{l}\prec\mathcal{G}_{t}$ & $\varphi_l\sqsubset \varphi_t$ \\
Traffic Updated &
$\mathcal{A}_{o}\prec\mathcal{A}_{l}\prec\mathcal{A}_{t}$ &
$\mathcal{G}_{o}\equiv\mathcal{G}_{l}\equiv\mathcal{G}_{t}$ & $\varphi_l\sqsubset \varphi_t$ \\
\hline
Minepump &
$\mathcal{A}_{o}\prec\mathcal{A}_{l}\prec\mathcal{A}_{t}$ &
$\mathcal{G}_{o}\equiv\mathcal{G}_{l}\prec\mathcal{G}_{t1}$ & $\varphi_l\sqsubset \varphi_t$ \\
&
 &
$\mathcal{G}_{o}\equiv\mathcal{G}_{l}\prec\mathcal{G}_{t2}$ & $\varphi_l\sqsubset \varphi_t$ \\
\bottomrule
\end{tabular*}
\caption{Relative weakness ordering between original, learned and trivial specification components, alongside preference between learned and trivial solution(s).}
\label{tab:ordering}
\end{minipage}

\end{table}
We have to note separately on the fact that the minepump example returns a single solution that, semantically, yields stronger guarantees than both trivial solutions we discover. We illustrate the relationship between the trivial specifications and the learned specification in Figure~\ref{fig:minepump_solution_comparison}. We identify the two trivial solutions as $\varphi_{t1}=\langle\mathcal{A}_{t1},\mathcal{G}_{t1}\rangle$ and $\varphi_{t2}=\langle\mathcal{A}_{t2},\mathcal{G}_{t2}\rangle$, and similarly identify the learned solution as $\varphi_{l}=\langle\mathcal{A}_{l},\mathcal{G}_{l}\rangle$. The interested reader may find the raw solutions in Appendix~\ref{apx:trivial_solution_minepump} and \ref{apx:preferred_learned_minepump} for the trivial and learned solutions respectively. We see that the sets of assumptions are comparable to each other, that the assumptions are equivalent between $\varphi_{t1}$ and $\varphi_{t2}$. Finally, we see that the learned assumptions are strictly stronger for the learned solutions, and the learned guarantees are strictly stronger than either of the trivial guarantees. This satisfies the preference criteria, since we needed to show there is a stronger (or equivalent) set of guarantees in the learned solutions for each trivial set of guarantees. Since the learned solution is strictly preferable to every specification in the set of trivial solutions, then we consider the minepump case study a success for the preference criteria, and report it in Table~\ref{tab:rq1}.

\begin{figure}[t]
\centering
\begin{tikzpicture}[
    spec/.style={
        draw,
        rounded corners,
        minimum width=1.6cm,
        minimum height=0.7cm,
        inner sep=2pt,
        align=center
    },
    dom/.style={->, thick}
]

\coordinate (top) at (0,0);
\coordinate (botL) at (-2.2,-2.2);
\coordinate (botR) at (2.2,-2.2);

\node[spec] (Al) at ($(top)+(-0.95,0)$) {$\mathcal{A}_{l}$};
\node[spec, right=3mm of Al] (Gl) {$\mathcal{G}_{l}$};

\node[spec] (Gt1) at ($(botL)+(-0.95,0)$) {$\mathcal{G}_{t1}$};
\node[spec, right=3mm of Gt1] (At1) {$\mathcal{A}_{t1}$};

\node[spec] (At2) at ($(botR)+(-0.95,0)$) {$\mathcal{A}_{t2}$};
\node[spec, right=3mm of At2] (Gt2) {$\mathcal{G}_{t2}$};

\draw[very thick] (At1) -- node[above] {$\equiv$} (At2);

\draw[dom] (Al) -- node[left] {$\prec$} (At1);
\draw[dom] (Al) -- node[right] {$\prec$} (At2);

\draw[dom] (Gl) -- node[left] {$\prec$} (Gt1);
\draw[dom] (Gl) -- node[right] {$\prec$} (Gt2);

\node[above=2mm of $(Al.north)!0.5!(Gl.north)$]
    {Learned specification $\varphi_l$};

\node[below=4mm of $(Gt1.south)!0.5!(At1.south)$]
    {Trivial specification $\varphi_{t1}$};

\node[below=4mm of $(At2.south)!0.5!(Gt2.south)$]
    {Trivial specification $\varphi_{t2}$};

\end{tikzpicture}
\caption{Comparison between learned and trivial solutions of the minepump case study. Solid arrows denote weakness-based ordering ($\prec$), vertical solid lines denote semantic equivalence ($\equiv$), and lack of any line denotes incomparability.}
\label{fig:minepump_solution_comparison}
\end{figure}

\subsubsection{Discussion on scale}
The procedure we developed in Algorithm~\ref{alg:learningRInt} identifies, over the course of the run, five types of specifications:
\begin{enumerate}
    \item Unrealisable specifications, during the BFS exploration phase, when the algorithm considers a degradation that yields a new syntactically unique specification that is not realisable;
    \item Realisable specifications, during the BFS exploration, when a degradation step modifies either the original specification, or an unrealisable intermediate specification, into a realisable specification. We also call this the set of solutions;
    \item Unique specifications, filtered after the exploration phase. Any group of semantically equivalent specifications is trimmed down to one. This procedure returns a set of pair-wise semantically distinct specifications;
    \item Preferred specifications, when non-preferred solutions (as per Definition~\ref{def:preference_rint_solutions}) are being filtered out of the pool of semantically unique solutions;
    \item Final specifications, when the assumptions and guarantees of the preferred solutions are merged and manipulated to return the best set of solutions.
\end{enumerate}
For each case study, we report the number of specifications encountered of each category in Table~\ref{tab:total}. Every column coincides with a category above. The last column, "Trivial", reports on the amount of specifications our baseline adaptation procedure discovers, for the sake of comparing.

\begin{table}[h]
\centering
\begin{tabular}{l|rr|rrr|c}
\toprule
\multirow{2}*{Case Study} & \multicolumn{2}{c|}{Explored} & \multirow{2}*{Unique} & \multirow{2}*{Preferred} & \multirow{2}*{Final} & \multirow{2}*{Trivial}\\
 & Unrealisable & Realisable &  &  & &  \\
\midrule
Lift & 0 & 21 & 17 & 11 & 1 & 1\\
Traffic Updated & 0 & 21 & 17 & 10 & 1 & 1 \\
Traffic Single & 9 & 3 & 2 & 2 & 1 & 1\\
Arbiter & 285 & 546 & 308 & 6 & 1 & 1 \\
Minepump & 809 & 3762 & 1218 & 17 & 1 & 2 \\
\bottomrule
\end{tabular}
\caption{Amount of specifications discovered at each stage in the algorithm.}
\label{tab:total}
\end{table}

We observe the learning procedure, for each case study, starts with a large set of explored solutions (i.e. realisable specifications), only to converge onto one or a few final solutions. The Explored Unrealisable column highlights the potential of the algorithm to discover a set of satisfiable solutions immediately, through the innate filtering that the ILP backbone of the algorithm does in order to degrade the set of assumptions in a way that the violation trace is satisfied. This fact is shown in the Lift and Traffic Updated rows, where the set of explored unrealisable specifications is $0$. Other than the simple case study of Traffic Single, we see an increase on the realisable specifications side compared to the unrealisable side, with a significant increase for the complex case studies of Arbiter and Minepump. This is likely caused by the traversal methodology: for each syntactically distinct unrealisable specification discovered, there could be several solutions discovered, and several continuations from that point. We produce a snippet of our visualisation program highlighting this fact in Figure~\ref{fig:learning_split}, where at least 3 realisable specifications and at least 3 unrealisable specifications are discovered from node 81 during traversal.
    
\begin{figure}
    \centering
    \includegraphics[width=0.5\linewidth]{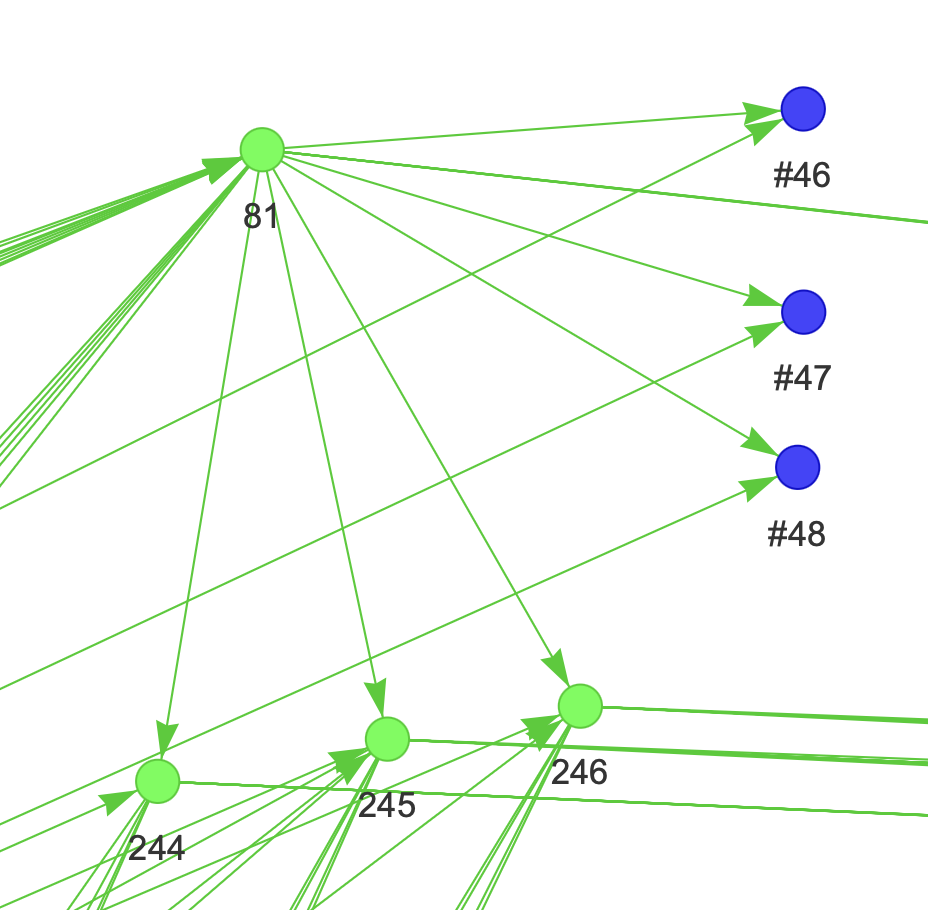}
    \caption{A snippet of the exploration phase during learning for the Minepump case study. Every green node represents a new unrealisable specification discovered, every blue node represents a new realisable specification discovered. An arrow between two nodes represents a learning step.}
    \label{fig:learning_split}
\end{figure}
Out of the set of realisable specifications explored, we find a relatively small drop is followed when filtering out all extraneous specifications, those being any syntactically distinct, but semantically equivalent specification. We see the number reported in the "Unique" column. Although we see a smaller drop in the simpler case studies, for the complex studies, we see a sharper drop between the unique specifications, and the preferred maximal specifications. For all case studies, this number is around $10$ or lower, with the exception of Minepump, which is the only case study containing preferred specifications which require guarantee degradation. Finally, we observe that all case studies can be merged into a single final most preferred specification. 

%% file: sections/6-evaluation/6.3-discussion.tex
\subsection{Summary} 
\label{subsec:discussion_eval}

We observe that, for each case study, the solutions produced by the learning-based solver yield solutions of higher preference compared to the trivial solutions that can be found. By prioritising solutions using the preference definition from Section~\ref{subsec:in_search_for_an_optimal_solution_to_the_rint_problem}, the learning method always produces a solution preferred to the trivial solution(s) discovered. There are two highlights in our results. The first one is the behaviour observed in the Arbiter and Traffic Single case studies, where the baseline trivial solution would require guarantee weakening, yet due to the fine grained learning-based repair method we employ, we discovered guarantee weakening can be skipped, allowing the system to resume its behaviour safely. The second highlight is the final solution in Minepump, which surpasses both trivial solutions in terms of preference, bypassing the necessity of stakeholder involvement by avoiding the need for making a choice.

Although the learning-based algorithm reliably discovers a more preferred solution compared to the trivial solution(s) for every case study, this comes at the cost of exploring a large space of candidate specifications, as shown in Table~\ref{tab:total}. For the more complex case studies in particular, the number of realisable specifications explored before converging to a final solution is substantial. The current methodology is best suited to scenarios where there are no hard time constraints for computing an adaptation. More constraining scenarios motivate the need for heuristics capable of guiding the search towards preferred solutions more directly.

%% file: sections/6-evaluation/6.z-conclusion.tex

%% file: sections/7-related_work/index.tex

Our work is rooted in behaviour-based adaptation as opposed to configuration-based~\cite{braberman2015morph} since we focus on modifying statements that refer to the intended behaviour of the system or the assumed behaviour of the  environment, in terms of safety and liveness properties, in order to salvage as much of the original requirements as possible. 

We continue in the line of work of syntactic degradation-based adaptations of specifications, earlier explored in \cite{buckworth2023adapting,zhang2025behavioral}. Both of their works enable controllers synthesised from a GR(1) specification to function in the context of encountering an environment violation. However, to the best of our knowledge, we are the first to investigate what makes a specification preferred in the context of this problem. 

Although the quality of assume-guarantee specifications has been analysed from a robustness-during-deployment perspective \cite{shalom2023my,zhang2025behavioral}, which correctly suggests that weakest environment assumptions are always preferred \cite{cobleigh2003learning,lomuscio2013assume,nam2006learning} with respect to the logic that the specified system may operate in as broad a set of environment dynamics as possible, the adaptation community seems to have overlooked the potential for over-fitting (too weak) environmental assumptions, and how online adaptation can lead to robustness vulnerabilities. In the avenue where formal methods literature has started learning environment model dynamics from data \cite{keegan2020control,gaaloul2020mining}, and generating assumptions (through learning \cite{cobleigh2003learning}, or by hand \cite{ma2023using}) to enable specific guarantees (rather than the ideal methodology, where guarantees of the system/machine are based on assumptions of the environment/world \cite{jackson1995world}), the chances of faults are increased. And as we illustrated in Section~\ref{subsec:in_search_for_an_optimal_solution_to_the_rint_problem} and Figure~\ref{fig:overfitting_assumption_weakening}, such faults lead to divergences in the perceived model of the world and the actual. These divergences can cascade towards a representation that, although it becomes inclusive of (most) observed environment behaviour, may lead to degrading the system guarantees unnecessarily.

Other than Buckworth's work built on the OGIS framework \cite{jha2017theory}, Cavezza et al.'s work on assumption refinement has been a strong influence for the specification preference criteria \cite{cavezza2017interpolation}, as well as their methodology of search space exploration \cite{cavezza2020minimal}. The refinement algorithm they propose also searches for the minimal refinements, explored in a BFS manner to discover the best solutions early during the run of the algorithm. We observe that our preference criteria for degraded assumptions is the symmetric of their work, and extend the minimal BFS exploration to cascade to guarantee degradation (when necessary).

Work in the evaluation of the strength of a solution for refinement and degradation problem has always dealt with the following context-related issue: there is no trivial way of representing the quality of a specification numerically. There have been several avenues and angles for the fair and complete analysis of such approaches.

One such avenue is the analysis of \textit{robustness} under assumption violations. Bloem et al.~\cite{bloem2014synthesizing} have developed the notion of k-robustness, that attempts to minimise the cost of failure of the guarantees of the system by transforming the specification into an \textit{error specification}, where each formula is given a cost function, and the overall cost of system errors needs to be smaller than k times the cost of environment errors. This design has the potential of generating resilient controllers in the face of failures, that are able to rebound should failures occur. Our work differs from this method, in the sense that we use the failures to learn an explicit new specification, never allowing or expecting deviations from the written specification to occur. Future work could look into leveraging the controller encoding in situations where learning takes longer, and potentially using the cost automata formulation for our own formulas, in order to only weaken behaviour that yields higher error cost. We believe cost automata has the potential to relay deeper meaning during the process of learning a weakening, as opposed to providing a rank to each individual requirement \cite{alur2008ranking}.

An alternative controller resilience problem in the face of assumption violations has been proposed in Ehlers et al.~\cite{ehlers2014resilience}, where $(k,b)$-resilience defines the concept that a controller of a GR(1) specification is resilient to glitches (environment failures) for at most k steps, separated by periods of at least b steps of no glitches, and the amount of glitches is finite. Since we address environment errors that may occur infinitely often, we cannot define our solutions in terms of $(k,b)$-sanity.

The closest approach to the quantitative measurement angle we consider is the \textit{weakness measurement} of Cavezza et al.~\cite{cavezza2021weakness}. They express a specification's set of assumptions as a triple of elements: the entropy of the invariants, the Hausdorff dimension of the invariants, and the Hausdorff dimension of the fairness complement of the assumptions. This measurement has the quality that it preserves the ordering between two (or more) specifications, in terms of which formula is weaker/stronger than the other. The problem with this measurement comes in its structure and practicality. Unfortunately, the format of a lexicographical ordering on a triple makes it impossible to accurately aggregate information on multiple solutions at once. Furthermore, it has not been shown that the valuation would have any correlation to an order of magnitude of the weakness between solutions, although a case could be made that the usage of entropy and/or Hausdorff dimensions can be used as a metric over nested LTL formulas (i.e. $\varphi\rightarrow \psi$ or $\psi\rightarrow \varphi$). Nonetheless, a meaningful quantitative metric to represent and compare non-nested LTL formulas is still a literature gap, which consequently, prevents the creation of a quantitative metric for assume-guarantee specifications based on LTL, like GR(1).

The field of runtime adaptation of systems is well explored in the space of requirement engineering \cite{calinescu2012self, calinescu2017engineering,chu2024integrating}, where quantitative model checking techniques \cite{filieri2011run, kwiatkowska2007quantitative} are employed to verify models that satisfy a set of requirements written in probabilistic computational-tree logic (PCTL) or similar. 

%% file: sections/8-conclusion/index.tex
Reactive systems synthesised from GR(1) specifications depend on the behaviour of the environment to hold. During deployment, a failure in environment behaviour has the potential to silently break most guaranteed behaviour of the system. We highlight the tuple consisting of an assume/guarantee specification and a behaviour trace violating the assumptions of the specification as an important adaptation problem. We propose a preference-based ordering with respect to what makes a good solution to this problem, then propose two algorithms for the discovery of solutions. The first algorithm represents a simple baseline methodology, where each individual formula is treated like a component in a set-theoretical problem, and a solution can be found through the removal of formulas from the specification. We show this algorithm removes the smallest amount of formulas from the original specification. The second algorithm is a finer grained solution, that leverages syntactic manipulation and inductive learning to iteratively degrade the violating formulas until they reach a realisable specification that no longer violates the observed environment behaviour. We show the second algorithm discovers solutions of higher quality than the baseline algorithm.

This work has made a case in Section~\ref{subsec:in_search_for_an_optimal_solution_to_the_rint_problem} for the desirability of the strongest degraded specification as the result of adaptation. To the best of our knowledge, the suggestion that the strongest possible set of assumptions would be preferred has not been previously proposed. Moreover, it appears to contradict the logic of robustness: that a system that is able to achieve its goals (strongest system guarantees) in the highest amount of scenarios (weakest environment assumptions) is more robust. We propose a finer grained definition to robustness, that considers partial failure as a certainty that a system needs to be ready for: an adaptive system that is able to salvage the widest set of goal-related behaviours (strongest possible system guarantees) in \textit{any} scenario (any environment condition) is more robust.

Future work should further analyse the notion of robustness considered in this work. This work showcased adapted results as being favourable under the preference criteria, therefore merely hinting at an extended desirability in the wider context of adaptive systems. Further study should focus on the development of realistic scenarios that systems can leverage to validate their ability to adapt, and their ability to retain their operational guarantees.

One thing to note regarding the preference criteria is how maximal system guarantees are equally preferred. This was intentional, since discriminating between equally preferred guarantees requires additional information beyond the logical formalism alone. Future work will consider ways for a stakeholder to provide discrimination methods for system guarantees when adaptation is necessary. Criticality of guarantees has been explored in the context of synthesis under uncertainty \cite{yang2024synthesizing}, and in the context where all guarantees are equally critical, a prioritisation or weighting method could be employed.

The problem with priorities and weights when applied to guarantees is that, in more complex scenarios, specifically industrial ones, the space of guarantees is vast, and the stakeholders involved in the process may not be sufficiently aware of the low level details of the adaptation process to propose a satisfying weighting or ordering of the guarantees. A relatively novel branch of work has the potential to circumvent that by introducing semantic meaning to the formulas within the specification \cite{bolt2019interacting} with respect to a set of principles that can be translated from the natural language of the stakeholders \cite{kaushik2025universal} (i.e. the client of the system, the manufacturer's priorities, the laws of the country etc.). This aggregated knowledge, including the current specification and the violation trace, creates a complete context for an adaptive system to generate its own set of priorities, ground in the priorities of the relevant stakeholders.

Once a priority between system guarantees is in place, the next step will be to use it to guide the search towards only such solutions that follow the priority system. The learning-based algorithm, in its current state, generates and explores a vast space of potential adaptation candidates, which takes a significant amount of time. This inhibits the application of our methodology at runtime, meaning it can only be used proactively in practical scenarios, by estimating future failures, and pre-computing suitable adapted specifications before there is a need to adapt. To allow this level of controlled adaptation to occur reactively, rules and heuristics need to be developed, with the purpose of salvaging the most important guarantees, and finding a realisable solution in the least amount of steps. We have already discovered that the choice of counter-trace and counter-strategy allows us to manipulate in advance what system behaviour we want to salvage, and therefore, which guarantee is considered for degradation by the learner during OGIS. This methodology, alongside others from classical graph traversal theory (e.g. Monte-Carlo Tree Search), may be used to enable our adaptation methodology to function in reactive contexts, and therefore, enable systems to act in a principled manner under any circumstance.

%% file: appendices/A-proofs/A.1-counter_example_disjunction_of_assumptions.tex
\subsection{Counter Example Disjunction of Assumptions}
\label{apx:counter_example}

We present below a counter-example to the following claim.

\begin{quote}
For any two solutions to an adaptation problem $(\varphi,\tau)$, $\varphi_1 = \langle \mathcal{A}_1, \mathcal{G} \rangle$ and $\varphi_2 = \langle \mathcal{A}_2, \mathcal{G} \rangle$, if their guarantees are equivalent, then there exists a third solution $\varphi' = \langle \mathcal{A}', \mathcal{G} \rangle$ such that $\mathcal{A}' \equiv \mathcal{A}_1 \lor \mathcal{A}_2$, and $\varphi'$ is also a solution to the same adaptation problem.
\end{quote}

Formally, the claim can be stated as follows:
\[
\varphi_1, \varphi_2 \in \mathrm{AllSolutions}(Adapt(\varphi, \tau)) \;\Rightarrow\;
\varphi' \in \mathrm{AllSolutions}(Adapt(\varphi, \tau)),
\]
where $\varphi' = \langle \mathcal{A}_1 \lor \mathcal{A}_2, \mathcal{G} \rangle$.

\begin{proof}

We take note that we only need to show that there is such a case where $\mathcal{A}_1\lor \mathcal{A}_2$ does not have an equivalent for the syntactical structure of the set of assumptions of a GR(1) specification.

We propose the following specification trace pair as our $\varphi$ and $\tau$ respectively:
\label{apx:counter_example_spec_v0}
\input{code/spectra_specs/counter_example/v0}
\fbox{%
\begin{minipage}{\textwidth}
\[
\tau = [(\textcolor{red}{ \neg a,\neg b},\textcolor{green!60!black}{ c})]
\]%
\end{minipage}
}
\captionof{figure}{Counter example specification and trace}

We do highlight that, naturally, this spec by itself has very little practicality, but it would not be outside of the real of possibility for this spec to belong to a more complex specification, handling a larger amount of reactive interactions. This is the smallest self contained example we envisioned at the time of writing this paper. We note the specification above is realisable, a fact that has been verified with the usage of the SYNTECH toolbox plugins within the Eclipse development environment. We further note that assumption a1 obviously violates the trace $\tau$.

There are two immediate solutions that may be proposed without relying on guarantee degradation. $\varphi_1$ is presented below:

\label{apx:counter_example_spec_v1}
\input{code/spectra_specs/counter_example/v1}
\captionof{figure}{Counter example specification repair 1}

And $\varphi_2$ is presented below:

\label{apx:counter_example_spec_v2}
\input{code/spectra_specs/counter_example/v2}
\captionof{figure}{Counter example specification repair 2}

We have verified again realisability using the SYNTECH toolbox. We identify the two proposed assumption degradations are $G(a\lor b)$ and $GF(a)$ respectively, which can be visualised in Figure~\ref{fig:counterexample-structure}, where we highlight the relationship between the different repair proposals.

We now return to the claim we started this section with, that there is always a disjunction between the assumptions. We clearly see though that $G(a\lor b)\lor GF(a)$ cannot be meaningfully decomposed or rewritten to satisfy the GR(1) syntactic structure. Therefore, we have a contradiction to the claim, invalidating it.

\begin{figure}[t]
\centering
\input{diagrams/counterexample-diagram-tikz}

\caption{Implication-based ordering of the counter-example specifications.}
\label{fig:counterexample-structure}
\end{figure}

As a bonus, let us try to find some formula weaker than both $\mathcal{A}_1$ and $\mathcal{A}_2$. The strongest one we may find that is still GR(1) compliant is $GF(a\lor b)$. The relationship is visualised in Figure~\ref{fig:counterexample-structure}, and the resulting specification $\varphi'$ is illustrated below:

\label{apx:counter_example_spec_v3}
\input{code/spectra_specs/counter_example/v3}
\captionof{figure}{Counter example specification repair 3 - UNREALISABLE}

We find, through verification, that this specification is no longer realisable, therefore requiring guarantee weakening for a complete repair. Therefore, we highlight using a clear example the danger of attempting to degrade our assumption set prematurely and excessively.

\end{proof}

%% file: code/spectra_specs/counter_example/v0.tex
\begin{lstlisting}[style=SpectraStyle]
module CounterExample_v0

env boolean a;
env boolean b;
sys boolean c;

assumption -- a1
	G(a);
guarantee -- g1.1
	G((!a&!b)|(!a&PREV(!b))->!c);
guarantee -- g1.2
	G(a->c);
guarantee -- g2
	G(PREV(b)&b->c);
guarantee -- g3
	GF(c);
\end{lstlisting}

%% file: code/spectra_specs/counter_example/v1.tex
\begin{lstlisting}[style=SpectraStyle]
module CounterExample_v1

env boolean a;
env boolean b;
sys boolean c;

assumption -- a1
	G(a|b);
guarantee -- g1.1
	G((!a&!b)|(!a&PREV(!b))->!c);
guarantee -- g1.2
	G(a->c);
guarantee -- g2
	G(PREV(b)&b->c);
guarantee -- g3
	GF(c);
\end{lstlisting}

%% file: code/spectra_specs/counter_example/v2.tex
\begin{lstlisting}[style=SpectraStyle]
module CounterExample_v2

env boolean a;
env boolean b;
sys boolean c;

assumption -- a1
	GF(a);
guarantee -- g1.1
	G((!a&!b)|(!a&PREV(!b))->!c);
guarantee -- g1.2
	G(a->c);
guarantee -- g2
	G(PREV(b)&b->c);
guarantee -- g3
	GF(c);
\end{lstlisting}

%% file: diagrams/counterexample-diagram-tikz.tex
\begin{tikzpicture}[
    node distance=0.8cm and 1.2cm,
    every node/.style={draw, rounded corners, minimum width=2.5cm, minimum height=0.9cm, align=center},
    arr/.style={->, thick},
    unrealisable/.style={->, thick,color=red},
    realisable/.style={->, thick,color=green},
    labelnode/.style={draw=none, rectangle, inner sep=1pt}
]

\node (Ap) {$G(a)$};

\node (A1pp) [below left=of Ap] {$G(a\lor b)$};
\node (A2pp) [below right=of Ap] {$GF(a)$};

\node (Appp) [below=1.8cm of $(A1pp)!0.5!(A2pp)$] {$G(a\lor b)\lor GF(a)$};

\node (Apppp) [below=of Appp] {$GF(a\lor b)$};

\node (G) [right=4cm of Appp] {$\mathcal{G}$};
\node (Gp) [right=4cm of Apppp] {$\mathcal{G}'$};

\draw[arr] (Ap) -- (A1pp);
\draw[arr] (Ap) -- (A2pp);
\draw[arr] (A1pp) -- (Appp);
\draw[arr] (A2pp) -- (Appp);
\draw[arr] (Appp) -- (Apppp);

\draw[arr] (G) -- (Gp);

\draw[realisable] (Appp) -- node[labelnode, above] {realisable} (G);
\draw[unrealisable] (Apppp) -- node[labelnode, sloped, above, allow upside down] {unrealisable} (G);
\draw[realisable] (Apppp) -- node[labelnode, above] {realisable} (Gp);

\node[labelnode, left=0.4cm of $(Ap)$] {$\varphi$};
\node[labelnode, left=0.4cm of $(A1pp)$] {$\varphi_1$};
\node[labelnode, left=0.4cm of $(A2pp)$] {$\varphi_2$};
\node[labelnode, above=0.4cm of $(Appp)$] {\textcolor{red}{not GR(1)}};
\node[labelnode, left=0.4cm of $(Apppp)$] {$\varphi'$};


\end{tikzpicture}

%% file: code/spectra_specs/counter_example/v3.tex
\begin{lstlisting}[style=SpectraStyle]
module CounterExample_v3_unrealisable

env boolean a;
env boolean b;
sys boolean c;

assumption -- a1
	GF(a|b);
guarantee -- g1.1
	G((!a&!b)|(!a&PREV(!b))->!c);
guarantee -- g1.2
	G(a->c);
guarantee -- g2
	G(PREV(b)&b->c);
guarantee -- g3
	GF(c);
\end{lstlisting}

%% file: appendices/A-proofs/A.2-proof_of_soundness_of_baseline_algorithm.tex
\subsection{Proof of soundness of baseline algorithm}
\label{apx:proof_soudness_trivial_algorithm}
We prove Theorem~\ref{thm:soundness_minimality_optimal_trivial_solution} through the intermediate results stated in Lemmas~\ref{lemma:trivial_a} and~\ref{lemma:trivial_g} below.

\begin{lemma}[Minimal Correction of Assumptions]
\label{lemma:trivial_a}
Let $A_\tau\subseteq \mathcal{A}$ be the violation set induced by $\tau$. Then:
\begin{enumerate}
\renewcommand{\labelenumi}{(\roman{enumi})}
    \item $\tau \models \mathcal{A}\setminus A_\tau$;

    \item $A_\tau$ is minimal with respect to this property.
\end{enumerate}
\end{lemma}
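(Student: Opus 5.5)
Part (i) follows directly from Definition~\ref{def:violation_set}. Satisfaction of a set of assumptions by a trace means satisfaction of each member, since the set representation of Definition~\ref{def:set-representation} is read conjunctively. For every $a\in\mathcal{A}\setminus A_\tau$ we have $a\notin A_\tau$, so by definition $\tau\models a$. Hence $\tau$ satisfies every formula in $\mathcal{A}\setminus A_\tau$, that is, $\tau\models\mathcal{A}\setminus A_\tau$. No assumption about finite versus infinite semantics is needed beyond using the same satisfaction relation $\tau\models a$ that defines the violation set. This is the relation Algorithm~\ref{alg:trivialRInt} evaluates in Line~\ref{line:trivialRInt_check_tau_violates_asm}.

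For (ii), I first fix the meaning of minimality. Among all sets $R\subseteq\mathcal{A}$ with $\tau\models\mathcal{A}\setminus R$, the set $A_\tau$ should be minimal. The argument proves the stronger fact that $A_\tau$ is the \emph{least} such set, i.e.\ $A_\tau\subseteq R$ for every such $R$. Least-ness implies minimality both under inclusion and in cardinality, which is the form needed for property~(iii) of Theorem~\ref{thm:soundness_minimality_optimal_trivial_solution}. The proof goes by contradiction. Suppose $\tau\models\mathcal{A}\setminus R$ and there is some $a\in A_\tau\setminus R$. Then $a\in\mathcal{A}\setminus R$, so $\tau\models a$, contradicting $a\in A_\tau$, i.e.\ $\tau\not\models a$. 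Hence $A_\tau\subseteq R$. Two consequences follow. First, no strict subset of $A_\tau$ works: removing one element $a$ from $A_\tau$ leaves $a$ in the residual set, and $\tau\not\models a$. Second, $|A_\tau|\le|R|$ for every admissible $R$.

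The only potential obstacle is conceptual rather than technical: the meaning of ``minimal'' must be pinned down. I will state it explicitly as inclusion-minimality together with cardinality-minimality, and derive both from the least-element property above. Once that is done, both parts are one-line consequences of the definition of $A_\tau$, and the lemma feeds directly into the assumption half of Theorem~\ref{thm:soundness_minimality_optimal_trivial_solution}.
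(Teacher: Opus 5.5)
Your proof is correct. It uses the same key observation as the paper, but in part (ii) it proves a strictly stronger statement.

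Part (i) is the paper's argument run directly rather than by contradiction. The paper supposes some $a_i\in\mathcal{A}\setminus A_\tau$ with $\tau\not\models a_i$ and notes that this forces $a_i\in A_\tau$. You simply observe that $a\notin A_\tau$ gives $\tau\models a$.

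For (ii), the core observation is shared: any violated assumption left in the residual set falsifies it. The scope differs, though.
\begin{itemize}
\item The paper considers only strict subsets $A'\subset A_\tau$ and shows each one fails. Read literally, this is inclusion-minimality among subsets of $A_\tau$.
\item You quantify over every $R\subseteq\mathcal{A}$ with $\tau\models\mathcal{A}\setminus R$ and show $A_\tau\subseteq R$. This makes $A_\tau$ the \emph{least} admissible removal set, and hence also the unique minimal one.
\end{itemize}

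Your version buys something concrete. It rules out admissible removal sets that are incomparable with $A_\tau$, and it yields $|A_\tau|\le|R|$ for every admissible $R$. That is exactly the cardinality reading of ``minimal with respect to the number of assumptions removed'' in property~(iii) of Theorem~\ref{thm:soundness_minimality_optimal_trivial_solution}. The paper leaves this step implicit, even though the same one-line argument covers it. Uniqueness also shows that, on the assumption side, no choice arises, unlike the several minimal hitting sets on the guarantee side.

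Both proofs rely equally on reading $\tau\models\mathcal{A}'$ conjunctively over the members of $\mathcal{A}'$, which you state explicitly.
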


\begin{proof}
\label{proof:trivial_a}
Let $\mathcal{A}$ be a set of assumptions and let $\tau$ be a violation trace such that $\tau\not\models\mathcal{A}$. We show that the violation set $A_\tau\subseteq\mathcal{A}$ is the minimal set of assumptions whose removal yields a specification satisfied by $\tau$.

\begin{enumerate}
\renewcommand{\labelenumi}{(\roman{enumi})}

\item Show that $\tau\models\mathcal{A}\setminus A_\tau$.

Assume, for contradiction, that $\tau\not\models\mathcal{A}\setminus A_\tau$. Then there exists some assumption $a_i\in\mathcal{A}\setminus A_\tau$ such that $\tau\not\models a_i$. By definition of the violation set, $a_i\in A_\tau$, which contradicts the assumption that $a_i\in\mathcal{A}\setminus A_\tau$. Therefore, $\tau\models\mathcal{A}\setminus A_\tau$.

\item Show that $A_\tau$ is minimal.

Let $A'\subset A_\tau$. Then there exists some assumption $a_i\in A_\tau$ such that $a_i\notin A'$. Since $a_i\in A_\tau$, by definition of the violation set we have $\tau\not\models a_i$. Consequently, $\tau\not\models\mathcal{A}\setminus A'$, since $a_i$ remains in the degraded assumption set. Hence no strict subset of $A_\tau$ satisfies the required property, proving minimality.
\end{enumerate}
\end{proof}

\begin{lemma}[Minimal Correction of Guarantees]
\label{lemma:trivial_g}
Let $\mathcal{C}=\{\mathcal{G}^{uc}_1,\ldots,\mathcal{G}^{uc}_k\}$ be the set of unrealisable cores of $\langle \mathcal{A}\setminus A_\tau,\mathcal{G}\rangle$, and let $\mathcal{H}_{min}(\mathcal{C})$ be a minimal hitting set of $\mathcal{C}$. Then:
\begin{enumerate}
\renewcommand{\labelenumi}{(\roman{enumi})}
    \item the specification $\langle \mathcal{A}\setminus A_\tau,\mathcal{G}\setminus\mathcal{H}_{min}(\mathcal{C})\rangle$ is realisable;

    \item $\mathcal{H}_{min}(\mathcal{C})$ is minimal with respect to this property.
\end{enumerate}
\end{lemma}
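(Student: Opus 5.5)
The plan is to prove both parts by relating realisability of guarantee subsets to the unrealisable cores in $\mathcal{C}$. Throughout, write $\mathcal{A}'=\mathcal{A}\setminus A_\tau$ and $\mathcal{H}=\mathcal{H}_{min}(\mathcal{C})$.

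The one fact I will use repeatedly is \emph{monotonicity of realisability in the guarantee set}. If $\mathcal{G}_1\subseteq\mathcal{G}_2$ and $\langle\mathcal{A}',\mathcal{G}_2\rangle$ is realisable, then $\langle\mathcal{A}',\mathcal{G}_1\rangle$ is realisable. The reason is that $\mathcal{G}_2\preceq\mathcal{G}_1$, so every winning strategy for the former also wins the latter; this is the guarantee-side dual of the monotonicity already used in the proof of Lemma~\ref{lemma:exists_a_third_solution}. Contrapositively, any superset of an unrealisable guarantee set is unrealisable. I also rely on $\mathcal{C}$ being the complete set of unrealisable cores of $\langle\mathcal{A}',\mathcal{G}\rangle$, as stipulated in Definition~\ref{def:optimal_trivial_solution}.

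For (i), I argue by contradiction. Suppose $\langle\mathcal{A}',\mathcal{G}\setminus\mathcal{H}\rangle$ is unrealisable. Since $\mathcal{G}\setminus\mathcal{H}$ is finite, I shrink it greedily: whenever some guarantee can be removed while keeping the specification unrealisable, remove it. This process terminates in a set $S\subseteq\mathcal{G}\setminus\mathcal{H}$ that is unrealisable but such that removing any single element restores realisability. By definition $S$ is an unrealisable core of $\langle\mathcal{A}',\mathcal{G}\rangle$, because the core definition refers only to $\mathcal{A}'$ and the subset itself. Hence $S\in\mathcal{C}$, so $\mathcal{H}\cap S\neq\emptyset$. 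This contradicts $S\subseteq\mathcal{G}\setminus\mathcal{H}$.

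For (ii), take any strict subset $H'\subset\mathcal{H}$. By minimality of the hitting set (Definition~\ref{def:hitting_set}), $H'$ is not a hitting set. So there is a core $S\in\mathcal{C}$ with $S\cap H'=\emptyset$, that is, $S\subseteq\mathcal{G}\setminus H'$. Since $\langle\mathcal{A}',S\rangle$ is unrealisable, monotonicity makes $\langle\mathcal{A}',\mathcal{G}\setminus H'\rangle$ unrealisable. Thus no strict subset of $\mathcal{H}$ restores realisability.

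The same argument shows more generally that every set $H$ whose removal restores realisability must be a hitting set of $\mathcal{C}$. Choosing a cardinality-minimum hitting set therefore also yields the count-minimality stated in Theorem~\ref{thm:soundness_minimality_optimal_trivial_solution}(iii), and combining this with Lemma~\ref{lemma:trivial_a} handles the assumption side.

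The main obstacle is not the combinatorics but justifying the two semantic ingredients. The first is monotonicity of strict realisability under removal of guarantees in the GR(1) setting. For initial, safety and justice guarantees this should follow directly from the form of $\phi$ in Definition~\ref{def:gr1-strict-realisability}, since dropping a conjunct of $\theta^s$, $\rho^s$ or the $J^s_j$ only weakens $\phi$. The second is the completeness of $\mathcal{C}$, which I take as given from the Punch procedure, as the paper does.
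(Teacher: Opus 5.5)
Your proposal is correct and follows essentially the same route as the paper: (i) by contradiction, locating an unrealisable core inside $\mathcal{G}\setminus\mathcal{H}_{min}(\mathcal{C})$ that the hitting set must intersect, and (ii) by using inclusion-minimality of the hitting set to find a core disjoint from any strict subset, then concluding unrealisability by monotonicity. Beyond the paper, you justify its unstated step that every unrealisable guarantee set contains a core (via greedy shrinking, which makes (i) independent of monotonicity), and you state explicitly the monotonicity of realisability under guarantee removal that the paper uses only implicitly in (ii).
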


\begin{proof}
\label{proof:trivial_g}
\begin{enumerate}
\renewcommand{\labelenumi}{(\roman{enumi})}

\item Show that $\langle \mathcal{A}\setminus A_\tau,\mathcal{G}\setminus\mathcal{H}_{min}(\mathcal{C})\rangle$ is realisable.

Assume, for contradiction, that
$\langle \mathcal{A}\setminus A_\tau,\mathcal{G}\setminus\mathcal{H}_{min}(\mathcal{C})\rangle$
is unrealisable. Then there exists some unrealisable subset
$\mathcal{G}'\subseteq\mathcal{G}\setminus\mathcal{H}_{min}(\mathcal{C})$
such that
$\langle \mathcal{A}\setminus A_\tau,\mathcal{G}'\rangle$
is unrealisable.

Since $\mathcal{G}'$ is unrealisable, it must contain some unrealisable core
$\mathcal{G}^{uc}_i\in\mathcal{C}$.
However, by definition of a hitting set,
$\mathcal{H}_{min}(\mathcal{C})\cap\mathcal{G}^{uc}_i\neq\emptyset$.
Therefore,
$\mathcal{G}^{uc}_i\nsubseteq\mathcal{G}\setminus\mathcal{H}_{min}(\mathcal{C})$,
contradicting the assumption that
$\mathcal{G}'\subseteq\mathcal{G}\setminus\mathcal{H}_{min}(\mathcal{C})$.
Hence,
$\langle \mathcal{A}\setminus A_\tau,\mathcal{G}\setminus\mathcal{H}_{min}(\mathcal{C})\rangle$
must be realisable.

\item Show that $\mathcal{H}_{min}(\mathcal{C})$ is minimal.

Let
$\mathcal{H}'\subset \mathcal{H}_{min}(\mathcal{C})$.
Since $\mathcal{H}_{min}(\mathcal{C})$ is a minimal hitting set, there exists some unrealisable core
$\mathcal{G}^{uc}_i\in\mathcal{C}$
such that
$\mathcal{G}^{uc}_i\cap\mathcal{H}'=\emptyset$.
Consequently,
$\mathcal{G}^{uc}_i\subseteq\mathcal{G}\setminus\mathcal{H}'$.

Since
$\langle \mathcal{A}\setminus A_\tau,\mathcal{G}^{uc}_i\rangle$
is unrealisable, it follows that
$\langle \mathcal{A}\setminus A_\tau,\mathcal{G}\setminus\mathcal{H}'\rangle$
is also unrealisable. Therefore, no strict subset of
$\mathcal{H}_{min}(\mathcal{C})$
is sufficient to restore realisability, proving minimality.
\end{enumerate}

Together, Properties~(i) and~(ii) establish that
$\mathcal{H}_{min}(\mathcal{C})$
is the minimal set of guarantees whose removal restores realisability of
$\langle \mathcal{A}\setminus A_\tau,\mathcal{G}\setminus\mathcal{H}_{min}(\mathcal{C})\rangle$.
\end{proof}

From Lemmas~\ref{lemma:trivial_a} and~\ref{lemma:trivial_g}, Theorem~\ref{thm:soundness_minimality_optimal_trivial_solution} follows directly.

%% file: appendices/A-proofs/A.3-finiteness_of_semantic_equivalence_classes.tex
\subsection{Finiteness of Semantic Equivalence Classes}
\label{sec:finiteness_of_semantic_equivalence_classes}
As part of our proof that the adaptation procedure in Algorithm~\ref{alg:learningRInt} terminates, we propose the following theorem:

\begin{theorem}\label{thm:finite-semantic-classes}
The set $\mathcal{F}$ of semantic equivalence classes of assertions over a finite set of Boolean variables $\mathcal{X}$ is finite.
\end{theorem}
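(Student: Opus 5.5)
The approach is to identify each semantic equivalence class of assertions with a Boolean function, and then count those functions. An \emph{assertion} here is a propositional (state) formula over the finite variable set $\mathcal{X}$, i.e.\ a formula built from $\top$, $\bot$, variables and the Boolean connectives, with no temporal operators. Its truth at a position of a trace depends only on the current state. Two assertions are semantically equivalent exactly when they agree on every state in $S = 2^{\mathcal{X}}$. This agrees with the language-based equivalence of Definition~\ref{def:weakness_ltl}: if two assertions disagree on some state $s$, then any trace starting with $s$ separates their languages. Conversely, if they agree on all states, they have identical languages.

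First, I will define the map $\llbracket\cdot\rrbracket$ sending an assertion $\alpha$ to its satisfying set
\[
\llbracket \alpha \rrbracket = \{ s \in 2^{\mathcal{X}} \mid s \models \alpha \}.
\]
This is well defined by structural induction on the grammar: $\llbracket \top\rrbracket = 2^{\mathcal{X}}$, $\llbracket \bot \rrbracket = \emptyset$, $\llbracket p\rrbracket = \{s \mid p\in s\}$, $\llbracket\neg\alpha\rrbracket = 2^{\mathcal{X}}\setminus\llbracket\alpha\rrbracket$, and $\llbracket\alpha\land\beta\rrbracket = \llbracket\alpha\rrbracket\cap\llbracket\beta\rrbracket$.

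Second, I will show that $\alpha\equiv\beta$ if and only if $\llbracket\alpha\rrbracket=\llbracket\beta\rrbracket$, using the trace argument above. This gives an injective map from $\mathcal{F}$ into $\mathcal{P}(2^{\mathcal{X}})$.

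Third, I conclude that $|\mathcal{F}|\le 2^{2^{|\mathcal{X}|}}$, which is finite since $\mathcal{X}$ is finite. Equality in fact holds, because every subset of states is the satisfying set of some DNF assertion, but finiteness only needs the injection.

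The only delicate point is the second step. Equivalence in the paper is defined on infinite traces, so I must argue explicitly that a non-temporal assertion evaluated at position $0$ depends only on $s_0$, again by induction on structure. If the theorem is later applied to assertions that contain $\mathbf{Y}$ or $\bigcirc$ literals, as in the invariant DNF representation, the same argument goes through by replacing $\mathcal{X}$ with the finite set $\mathcal{X}\cup\mathcal{X}'$ (or its primed/past copies). The bound becomes $2^{2^{2|\mathcal{V}|}}$, which is still finite.
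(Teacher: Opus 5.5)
Your proposal is correct and follows essentially the paper's route: identify each equivalence class with the Boolean function (equivalently, the satisfying set) it induces on $2^{\mathcal{X}}$, then count the $2^{2^{|\mathcal{X}|}}$ such functions. You are more careful than the paper, which asserts $|\mathcal{F}| = 2^{2^n}$ without justifying that trace-language equivalence coincides with agreement on states, or that every Boolean function is realised (your DNF remark), and you rightly note that finiteness needs only injectivity; the one slip is in your aside, where an invariant body mixing $\mathbf{Y}$ and $\bigcirc$ literals depends on three consecutive states, so the exponent should be $3|\mathcal{V}|$ rather than $2|\mathcal{V}|$, though this does not affect finiteness.
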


\begin{proof}
    A formula $\varphi$ over $\mathcal{X}$ defines a Boolean function:
    \[
        f_\varphi : 2^{\mathcal{X}} \to \{0,1\}.
    \]
    Since $|\mathcal{X}| = n$, there are $2^n$ possible truth assignments, so there are:
    \[
        |\{ f : 2^{\mathcal{X}} \to \{0,1\} \}| = 2^{2^n}
    \]
    Boolean functions. Two formulas $\varphi, \psi$ are semantically equivalent if and only if they define the same Boolean function, so:
    \[
        |\mathcal{F}| = 2^{2^n},
    \]
    which is finite.
\end{proof}

\subsubsection{Finiteness of Weaker Assertions}

\begin{theorem}[Finiteness of Weaker Assertions]\label{thm:finite_assertions}
Given a GR(1) specification $\langle A, G \rangle$ over a finite set of Boolean variables $\mathcal{X}$,
the sets of semantically distinct weaker environment assumptions than $A$ and weaker system guarantees than $G$ are finite.
\end{theorem}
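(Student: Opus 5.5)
The plan is to reduce the statement to Theorem~\ref{thm:finite-semantic-classes} by showing that every GR(1) specification over $\mathcal{X}$ is determined, up to semantic equivalence, by finitely many Boolean functions. As before, $\mathcal{X}$ stands for the full finite variable set $\mathcal{V}$.

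\textbf{Decomposition.} By Definition~\ref{def:set-representation}, any assumption set $\mathcal{A}'$ collapses semantically to a triple $\langle\theta^e,\rho^e,\{J^e_i\}\rangle$.
\begin{itemize}
\item $\theta^e$ is the conjunction of the initial conjuncts $I^e$, an assertion over $\mathcal{X}$.
\item $\rho^e$ is the conjunction of the transition conjuncts $T^e$. It is an assertion over the finite set $\mathcal{V}\cup\mathcal{V}'$, or over $\mathcal{V}_{\mathbf{Y}}\cup\mathcal{V}\cup\mathcal{V}'$ if the one-step past/next literals used by the weakening operators are allowed.
\item Each liveness goal $J^e_i$ is an assertion over $\mathcal{V}$.
\end{itemize}
Applying Theorem~\ref{thm:finite-semantic-classes} to each of these finite variable sets bounds the number of semantic classes of $\theta^e$ and of $\rho^e$.

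\textbf{Liveness part.} A set of justice goals can be arbitrarily large syntactically, but only its set of semantic classes matters. $\Box\lozenge J$ depends only on the class of $J$, and duplicate conjuncts are idempotent. So each liveness component corresponds to a subset of the finite set $\mathcal{F}$, giving at most $2^{|\mathcal{F}|}$ possibilities. The guarantee side is handled identically.

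\textbf{Response formulas.} Spectra-style response formulas $\Box(p\rightarrow\lozenge q)$, produced by $EV$, must also be covered. They are determined by the pair of semantic classes of $p$ and $q$, so there are again finitely many classes, and subsets of them remain finite.

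\textbf{Counting and restriction.} The map from syntactic specifications to tuples of component classes is well defined on semantics: two specifications with identical component classes have identical languages. Hence the number of semantically distinct assumption sets is at most the finite product of these bounds. Restricting to sets weaker than $\mathcal{A}$, or to guarantee sets weaker than $\mathcal{G}$, only takes a subset, which is still finite.

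\textbf{Main obstacle.} The delicate step is justifying that the representation is closed under the operators used ($AW$, $CW$, $EV$ with $\mathbf{Y}$/$\mathbf{X}$ literals and $\lozenge$). Every formula must stay within a fixed finite vocabulary of atomic temporal literals, so that Theorem~\ref{thm:finite-semantic-classes} applies. I would first fix this vocabulary explicitly as unprimed, primed and $\mathbf{Y}$-shifted variables, and argue that each operator only conjoins or disjoins such literals.

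Arbitrary weaker LTL formulas (e.g.\ nested until) are not finite in number, so the statement must be read as ranging over weakenings expressible in the GR(1)/Spectra syntax. I would state this restriction explicitly at the outset.
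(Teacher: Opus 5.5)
Your proposal is correct and follows essentially the same route as the paper. Both arguments reduce every component of a GR(1) assumption or guarantee module to a Boolean function over a finite vocabulary, invoke Theorem~\ref{thm:finite-semantic-classes}, and conclude by counting; your version is tighter, because the paper only asserts that each module has finitely many formula slots, each a propositional formula over $\mathcal{X}$, whereas you collapse initial and transition conjuncts into single assertions, treat justice and response goals as subsets of finitely many classes, and state explicitly that the claim holds only for weakenings expressible in GR(1)/Spectra syntax over a fixed vocabulary of unprimed, primed and $\mathbf{Y}$-shifted literals.
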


\begin{proof}
    A GR(1) assumption module $A$ and guarantee module $G$ each consist of:
    \begin{itemize}
        \item Initial conditions: $\{\varphi^{init}_i\}$
        \item Safety constraints: $\{\mathbf{G} \varphi^{safe}_j\}$
        \item Liveness constraints: $\{\mathbf{GF} \varphi^{live}_k\}$
    \end{itemize}
    where each $\varphi$ is a propositional formula over the finite variable set $\mathcal{X}$.

    By Theorem~\ref{thm:finite-semantic-classes}, the set of semantically distinct propositional formulas over $\mathcal{X}$ is finite.

    Weaker assumptions correspond to replacements:
    \[
        \forall \sigma \in \Sigma^\omega: \sigma \models A \implies \sigma \models A',
    \]
    and weaker guarantees correspond to:
    \[
        \forall \sigma \in \Sigma^\omega: \sigma \models G' \implies \sigma \models G.
    \]
    These correspond to semantic weakening of each $\varphi$.

    Since:
    \begin{itemize}
        \item There are finitely many semantically distinct propositional formulas.
        \item Each GR(1) module has a finite number of formula slots.
    \end{itemize}
    The total number of semantically distinct weakenings of $A$ and $G$ is finite.
\end{proof}

%% file: appendices/A-proofs/A.4-soundness_of_algorithm.tex
\subsection{Soundness of Algorithm}
\label{apx:proof_soundness_cegis}

To show our algorithm is sound, we prove the following theorem:

\begin{theorem}[Termination of Algorithm~\ref{alg:learningRInt}]
    \label{th:rint}
    Given the tuple $(\varphi,\tau)$, where $\varphi=\langle \mathcal{A},\mathcal{G}\rangle$ is a realisable specification, and $\tau\in \Sigma^*$ is a violation trace of the specification, the $\Call{adapt}{\varphi,\tau}$ method will always terminate, with the solution $\varphi'=\langle\mathcal{A}',\mathcal{G}'\rangle$, conforming to the properties at Definition~\ref{def:rca}.
\end{theorem}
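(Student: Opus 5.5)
The proof has two parts: termination of Algorithm~\ref{alg:learningRInt}, and conformance of every returned specification to Definition~\ref{def:rca}. I would handle conformance by an invariant over the algorithm's data structures and termination by a well-founded measure on the OGIS exploration.

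\textbf{Invariant.} Every specification placed in \learnedSpecs, \unrealisableSpecifications, \candidateQueue{} or \learnedSolutions{} satisfies Properties~1--3 of Definition~\ref{def:rca}:
\begin{itemize}
\item $\tau\models\mathcal{A}'$;
\item $\mathcal{A}\prec\mathcal{A}'$;
\item $\mathcal{G}\preceq\mathcal{G}'$.
\end{itemize}
For the output of \learnAssumptionDegradations, this follows from Definition~\ref{def:rint_a} and Theorem~\ref{th:deg_all_violating_assumptions}. Each violated assumption is replaced via \Call{weaken}{} by some $a'\in Space(a)$ with $a\prec a'$ and $\tau\models a'$. Strictness of $\mathcal{A}\prec\mathcal{A}'$ holds because $\tau\models\mathcal{A}'$ but $\tau\not\models\mathcal{A}$.

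\learnGuaranteeDegradations{} never touches assumptions. By Definition~\ref{def:rint_g_ct} it only yields $\mathcal{G}''\succ\mathcal{G}'$, so transitivity of $\preceq$ preserves the invariant along every branch. A specification enters \learnedSolutions{} only after an \isRealisable{} check, so every element also has Property~4. Finally, I would check that \Call{filterPreferred}{} and \Call{merge}{} preserve membership in $AllSolutions$. For \Call{filterPreferred}{} this is item~2 of the Most Preferred Assumption theorem, which swaps the assumptions for $\mathcal{A}^*$. For \Call{merge}{} I would argue analogously via Lemmas~\ref{lemma:exists_a_third_solution_uncomparable} and~\ref{lemma:exists_a_third_solution_incomparable_2}.

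\textbf{Termination.} Each iteration pops one tuple, and every newly enqueued specification is strictly weaker in its guarantees than its parent. By Theorem~\ref{thm:finite_assertions}, there are only finitely many semantically distinct weakenings of $\mathcal{G}$. Hence every chain $\mathcal{G}\prec\mathcal{G}_1\prec\mathcal{G}_2\prec\cdots$ along a branch has bounded length, so the exploration tree has finite depth. For finite branching, I would argue that \extractCounterPlays{} returns finitely many counter-play sets and that \learnGuaranteeDegradations{} returns finitely many candidates, since the syntactic operators over finitely many literals yield finitely many formulas. König's lemma then makes the tree finite, so the queue empties.

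Non-emptiness of the output needs a separate argument. A branch can only stop being extended if its specification is realisable. If it is unrealisable, a counter-strategy exists, hence a counter-play violating $\mathcal{G}'$. Theorem~\ref{th:deg_all_violating_guarantees} then guarantees a strictly weaker degradation that admits it. The bottom element is $\top$, i.e.\ the empty guarantee set, which is trivially realisable. So every maximal branch ends in a realisable specification, and \learnedSolutions{} is non-empty.

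\textbf{Main obstacle.} The hard step is reconciling the finiteness argument with the richer syntax used by the degradation operators. Theorem~\ref{thm:finite_assertions} speaks of propositional formulas, but the operators introduce $\mathbf{Y}$, $\mathbf{X}$ and $\lozenge$, and $AW/CW$ may be applied repeatedly. I would first restrict literals to depth one in $\mathbf{Y}$ and $\mathbf{X}$. Over the finite alphabet $\mathcal{V}\cup\mathbf{Y}\mathcal{V}\cup\mathbf{X}\mathcal{V}$, each invariant or response slot is then determined up to equivalence by a pair of Boolean functions plus one bit recording whether $\lozenge$ is present. Theorem~\ref{thm:finite-semantic-classes} then applies to this extended alphabet.

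A second subtlety is that the redundancy check must not discard the only realisable descendant of a branch. I would argue that it only prunes tuples that are equivalent to queued ones, or that are weaker than an existing solution, so non-emptiness of \learnedSolutions{} is preserved.
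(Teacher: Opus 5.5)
\textbf{Gap in the conformance part.} Your justification for \textsc{merge} does not work. Lemmas~\ref{lemma:exists_a_third_solution_uncomparable} and~\ref{lemma:exists_a_third_solution_incomparable_2} only conjoin \emph{assumption} sets, and they keep one of the given guarantee sets. \textsc{merge} (Algorithm~\ref{alg:mergeSpecs}) also takes the union of the \emph{guarantee} sets, and a conjunction of individually realisable guarantee sets need not be realisable.

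The paper's own Minepump example is a counterexample. The specifications $\langle\mathcal{A}',\{\textit{guarantee\_0},\textit{guarantee\_1}\}\rangle$ and $\langle\mathcal{A}',\{\textit{guarantee\_0},\textit{guarantee\_2}\}\rangle$ are both solutions with incomparable guarantees, so \textsc{filterPreferred} keeps both. Merging them gives back $\langle\mathcal{A}',\mathcal{G}\rangle$, which is unrealisable.

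The correct argument uses the explicit realisability test inside \textsc{merge}. The merged specification is returned only if it is realisable. Otherwise \textit{trivialGD} is called, and its outputs are realisable by Lemma~\ref{lemma:trivial_g}(i), with $\mathcal{C}$ now the cores of the merged specification. They also still satisfy $\mathcal{G}\preceq\mathcal{G}_{merged}\setminus\mathcal{H}_{min}(\mathcal{C})$, because $\mathcal{G}\preceq\mathcal{G}_{merged}$ and deleting conjuncts only weakens. The assumptions are untouched, so Properties~1--2 carry over.

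Your non-emptiness step has a related gap. You claim a branch stops only at a realisable specification, but it also stops if \textit{learnGuaranteeDegradations} returns nothing from its restricted space $Space(\cdot)$. Theorem~\ref{th:deg_all_violating_guarantees} is a pure existence statement, witnessed by $\top$. You therefore need to assume explicitly that the learner is complete for a hypothesis space that contains such a witness. The paper leaves this assumption implicit too.

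\textbf{Comparison of the termination arguments.} Apart from this, your termination proof is sound and takes a different route from the paper's.

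The paper reduces the problem to the guarantee phase and proves Lemma~\ref{lem:rint_gw2}. With counter-plays accumulated across iterations, the set of admissible guarantee degradations strictly shrinks at every OGIS step, because the new counter-play excludes the current candidate. This happens inside a space that is finite by Theorem~\ref{thm:finite_assertions}. It argues about a single branch ``without loss of generality'' and does not treat the BFS queue, \textsc{filterPreferred} or \textsc{merge}.

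Your measure is strictly increasing chains of guarantee languages in a finite poset, combined with finite branching. It rests on the same two ingredients, strict progress and finitely many semantic classes, but it fits Algorithm~\ref{alg:learningRInt} better. The algorithm does not accumulate counter-plays: each tuple carries only the freshly extracted $C$. Strictness must therefore come from each child being strictly weaker than its parent, plus transitivity, exactly as you use it. Your explicit branching argument also covers the queue.

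Your concern about temporal literals is justified as well. Theorem~\ref{thm:finite_assertions} is only proved for propositional slots. Restricting literals to the alphabet $\mathcal{V}\cup\mathbf{Y}\mathcal{V}\cup\mathbf{X}\mathcal{V}$, with one bit per slot recording $\lozenge$, is what makes the theorem applicable to the formulas the degradation operators actually produce.
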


We show in Section~\ref{sec:learning_adaptations} that an immediate solution exists for the discovery of a specification $\varphi'=\langle \mathcal{A}',\mathcal{G}\rangle$ as a solution to the assumption-only reality integration problem. Therefore, to prove Theorem~\ref{th:rint}, we only need to prove the following lemma:

\begin{lemma}[Termination for the second half of Algorithm~\ref{alg:learningRInt} using Guarantee Weakening]
    \label{lem:rint-gw}
    Given the tuple $(\varphi',\tau)$, where $\varphi'=\langle \mathcal{A}',\mathcal{G}\rangle$ is an unrealisable specification, and $\tau\in \Sigma^*$ is a violation trace of specification $\varphi$, the Guarantee Weakening task of the $\Call{adapt}{\varphi,\tau}$ method will always terminate, with the solution $\varphi''=\langle\mathcal{A}',\mathcal{G}'\rangle$, conforming to the properties at Definition~\ref{def:rca}.
\end{lemma}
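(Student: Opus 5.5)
The plan is to prove Lemma~\ref{lem:rint-gw} with a well-founded descent argument over the guarantee-degradation loop (lines~\ref{line:ogis_cpgis_start}--\ref{line:ogis_cpgis_end} of Algorithm~\ref{alg:learningRInt}), followed by a separate check that every returned specification satisfies Definition~\ref{def:rca}.

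\textbf{Correctness of what is returned.} Every specification added to \learnedSolutions{} is kept only after an explicit \isRealisable{} check, so Property~4 holds. Its assumption set is the fixed $\mathcal{A}'$ produced by \learnAssumptionDegradations{}, which by Theorem~\ref{th:deg_all_violating_assumptions} satisfies $\tau\models\mathcal{A}'$ and $\mathcal{A}\prec\mathcal{A}'$; this gives Properties~1 and~2. Each guarantee step is an instance of Definition~\ref{def:rint_g_ct} and therefore yields $\mathcal{G}\prec\mathcal{G}'$. Since $\prec$ is transitive along every branch, we get $\mathcal{G}\preceq\mathcal{G}'$, which is Property~3. Finally, \Call{filterPreferred}{} and \Call{merge}{} only replace assumptions by $\mathcal{A}^*$, so by the Most Preferred Assumption theorem the outputs remain in $AllSolutions$.

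\textbf{Termination of each branch.} Each call to \learnGuaranteeDegradations{} produces a guarantee set that is strictly weaker than that of its parent: it is weaker by Definition~\ref{def:rint_g_ct}, and the weakening is genuine because the new set satisfies counter-plays $\rho\in C$ that the parent violated. Theorem~\ref{thm:finite_assertions} says there are only finitely many semantically distinct weakenings of $\mathcal{G}$, and these are partially ordered by $\prec$. Hence every branch of the exploration is a strictly ascending $\prec$-chain in a finite poset, and its length is bounded by the height $N$ of that poset. The top of the chain is the class of $\top$, i.e.\ the empty guarantee set, and $\langle\mathcal{A}',\emptyset\rangle$ is trivially realisable. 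So every branch must end in a realisable specification, which is appended to \learnedSolutions, and is not re-enqueued. This also gives the claim that every branch converges to a solution.

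\textbf{Termination of the whole search.} Termination of each branch is not enough, because the BFS tree could still be infinite if its branching were infinite. Finite height is already established, so I need finite branching:
\begin{itemize}
\item \extractCounterPlays{} returns finitely many counter-play sets, and each is a finite set of finite plays.
\item \learnGuaranteeDegradations{} returns finitely many candidates. This holds because the syntactic operators $AW$, $CW$ and $EV$ from Definition~\ref{def:weakening_search_space} act over the finite alphabet of unary temporal literals over $\mathcal{V}$. I would bound the size of the added literal sets, or argue that only semantically distinct outputs matter by Theorem~\ref{thm:finite-semantic-classes}.
\end{itemize}
König's lemma then gives that the tree is finite, so the queue empties. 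The \isRedundantCandidate{} check only prunes nodes and cannot break this argument.

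I expect the main obstacle to be the strictness step: ensuring each learned $\mathcal{G}'$ is semantically, not merely syntactically, strictly weaker. This matters because repeated equivalent nodes would break the descent argument. I would handle it by the counter-play argument above. For extra safety, I would also observe that Definition~\ref{def:rint_g_ct} explicitly requires $\mathcal{G}\prec\mathcal{G}'$, so the learner is constrained to produce it.

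A second delicate point is that response formulas produced by $EV$ lie outside the finitely many propositional slots counted in Theorem~\ref{thm:finite_assertions}. I would extend that counting to formulas of the form $\Box(p\rightarrow\lozenge q)$ with propositional $p,q$; this family is still finite up to equivalence.
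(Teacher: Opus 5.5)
Your proof is correct, but it uses a different termination measure from the paper's. The paper fixes one refinement path and runs a CEGIS-style version-space argument. It considers the set $Adapt_{GW}(\varphi,\tau,cts)$ of degradations consistent with the \emph{accumulated} counter-plays. Lemma~\ref{lem:rint_gw2} shows that appending a new counter-play $ct'$ with $ct'\not\models\mathcal{G}'$ yields a subset, by transitivity of weakness and inheritance of $cts$. The subset is strict because $ct'$ excludes the current candidate. Termination then follows from the finiteness in Theorem~\ref{thm:finite_assertions}.

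You instead track the position of the current guarantee set in the finite weakness poset, with strictness coming from each parent--child step. Base that strictness on condition~1 of Definition~\ref{def:rint_g_ct}, not on the counter-plays: finite-trace satisfaction of $\rho$ does not by itself separate $\omega$-languages.

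Your route has three advantages. It does not need counter-plays to be accumulated, which fits Algorithm~\ref{alg:learningRInt} better, since each queue entry carries only the freshly extracted set $C$. It covers the whole breadth-first tree via finite branching, rather than one path taken ``without loss of generality''. It also correctly flags that Theorem~\ref{thm:finite_assertions}, as proved, counts only propositional slots and must be extended to the response formulas produced by $EV$. The paper's route, in exchange, makes explicit the OGIS reading that each counterexample eliminates at least the current hypothesis.

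Two small repairs are needed. First, ``the top of the poset is $\top$'' only shows that every branch ends in a realisable specification if every unrealisable node has a child. That requires three things:
\begin{itemize}
\item \extractCounterPlays{} returns a non-empty set;
\item the learner actually returns some degradation (one exists by Lemma~\ref{lem:each_gar_has_rint_degradation_solution});
\item \isRedundantCandidate{} discards a child only when an equivalent candidate is already queued or a stronger solution is already known.
\end{itemize}
State these as assumptions on the oracle and learner; the paper leaves them implicit too.

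Second, \textsc{merge} does not merely substitute $\mathcal{A}^*$: it also unions the guarantee sets and may call \trivialGD{}. The outputs are still in $AllSolutions(\varphi,\tau)$, but the reason is different. $\mathcal{G}\preceq\mathcal{G}_i'$ for all $i$ gives $\mathcal{G}\preceq\bigcup_i\mathcal{G}_i'$. Removing formulas only weakens further. The argument of Lemma~\ref{lemma:trivial_g} then restores realisability.
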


\begin{proof}
    We consider, without loss of generality, an intermediary step within the Guarantee Weakening problem, and use the structure of the Counter-Play Guided GW problem at \ref{def:rint_g_ct}:
    \[
        (\varphi',\tau,cts)
    \]
    Since our solution space is composed of specifications whose guarantees are strictly weaker than $\varphi'$, we know that our solution space is always finite, as proven for Theorem~\ref{thm:finite_assertions}. Therefore, in order to show termination, we will prove the following lemma:
    \begin{lemma}
        \label{lem:rint_gw2}
        the set of solutions of the intermediate problem is always strictly larger than the set of solutions for any future intermediate problem:
        \begin{align*}
            \forall \varphi'\in Adapt_{GW}(\varphi,\tau,cts).\forall ct'\in\Sigma^* \text{ s.t. } ct'\models\mathcal{A}' \text{ and } ct'\not\models\mathcal{G}'\\Adapt_{GW}(\varphi',\tau,cts\doubleplus ct')\subset Adapt_{GW}(\varphi,\tau,cts)
        \end{align*}
    \end{lemma}

    \begin{proof}
        Take an arbitrary $\varphi''=\langle\mathcal{A}',\mathcal{G}''\rangle$ such that:
        $$\varphi''\in Adapt_{GW}(\varphi',\tau,cts\doubleplus ct')$$
        If $\varphi''$ is realisable, the solution is found and there is no need for further iteration, so without loss of generality, let us assume $\varphi'$ is unrealisable. Then, since it is a solution of a CTGW problem, the following properties hold, as per Definition~\ref{def:rint_g_ct}:
        \begin{enumerate}
            \item $\tau \models \mathcal{A}'$
            \item $\forall \sigma \in \Sigma^\omega$, $\sigma\models\mathcal{G}'\rightarrow\sigma\models\mathcal{G}''$
            \item $\forall \tau^c \in cts \doubleplus ct$, $\tau^c\models \mathcal{G}''$
        \end{enumerate}
        Due to property 3, it also holds that
        $$
        \forall \tau^c \in cts, \tau^c\models \mathcal{G}''
        $$
        Furthermore, due to property 2, and the equivalent property of $\varphi'\in Adapt_{GW}(\varphi',\tau,cts\doubleplus ct')$, we can extract the following property:
        $$
        \forall \sigma \in \Sigma^\omega\text{, }\sigma\models\mathcal{G}\rightarrow\sigma\models\mathcal{G}''
        $$
        Therefore, alongside property 1, we have shown that
        $$\varphi''\in Adapt_{GW}(\varphi,\tau,cts)$$
        Which proves that
        $$
        Adapt_{GW}(\varphi',\tau,cts\doubleplus ct')\subseteq Adapt_{GW}(\varphi,\tau,cts).
        $$
        However, since we know $ct'\not\models \mathcal{G}'$, but  $ct'\models \mathcal{G}''$, we conclude that
        $$
        \varphi'\not\in Adapt_{GW}(\varphi',\tau,cts\doubleplus ct'),
        $$
        which proves Lemma~\ref{lem:rint_gw2}
    \end{proof}

The proof above, in turn, proves Lemma~\ref{lem:rint-gw}.
\end{proof}

%% file: appendices/B-code/B.1-helper_functions.tex
\subsection{Helper Functions}
\label{apx:helper_functions}
\input{code/filter_minimal_formulas}
\input{code/merge_rint_algorithm}

%% file: code/filter_minimal_formulas.tex
\begin{algorithm}[H]
\caption{Filter Minimal Formulas}
\label{alg:filterMinimalFormulas}
\begin{algorithmic}[1]
\Function{filterMinimal}{$formulas$}
\State $filteredFormulas \gets formulas[0]$
\For{$formula \in formulas[1:]$}
    \State $isMinimal \gets \top$
    \For{$filteredFormula \in filteredFormulas$}
        \If{$formula \prec filteredFormula$}
            \State $\Call{remove}{filteredFormulas, filteredFormula}$
            \State $isMinimal \gets \bot$
            \State break;
        \EndIf
    \EndFor
    \If{$isMinimal$}
        \State $\Call{add}{filteredFormulas, formula}$
    \EndIf
\EndFor
\State \Return $filteredFormulas$
\EndFunction
\end{algorithmic}
\end{algorithm}

%% file: code/merge_rint_algorithm.tex
\begin{algorithm}
\caption{Merging Preferred Specifications}
\label{alg:mergeSpecs}
\begin{algorithmic}[1]
\Function{merge}{$specs$}

\State $mergedSpec \gets \langle\top,\top\rangle$

\For{$spec \in specs$}
    \State $mergedSpec.{\mathcal{A}} \gets mergedSpec.{\mathcal{A}} \cup spec.{\mathcal{A}}$
    \State $mergedSpec.{\mathcal{G}} \gets mergedSpec.{\mathcal{G}} \cup spec.{\mathcal{G}}$
\EndFor

\If{$\Call{isRealisable}{mergedSpec}$}
    \State \Return $[mergedSpec]$
\Else
    \State \Return $\Call{trivialGW}{mergedSpec}$
\EndIf
\EndFunction
\end{algorithmic}
\end{algorithm}

%% file: appendices/C-case_studies/C.1-size_of_specifications.tex
\subsection{Size of Specifications}
\label{apx:size_of_specifications}
There are 5 specifications we use in our evaluation, each specification is realisable, and has a pairing of a violation trace. We present in Table~\ref{tab:spec_sizes} the sizes of the specifications in a similar way to \cite{firman2020performance}. The sizes are given as follows:
\begin{itemize}
    \item $|\mathcal{X}|$ - the number of environment variables
    \item $|I^e|$ - the number of environment invariants
    \item $|L^e|$ - the number of justice assumptions
    \item $|\mathcal{Y}|$ - the number of system variables
    \item $|I^s|$ - the number of system invariants
    \item $|L^s|$ - the number of justice guarantees
\end{itemize}

\begin{table}[h]
\centering
\begin{tabularx}{\textwidth}{l|RRR|RRR}
\toprule

\multirow{2}{*}{Case Study} 
& \multicolumn{3}{l|}{Assumptions} 
& \multicolumn{3}{l}{Guarantees} \\

& $|\mathcal{X}|$ & $|I^e|$ & $|L^e|$
& $|\mathcal{Y}|$ & $|I^s|$ & $|L^s|$ \\

\midrule

arbiter & 3 & 1 & 0 & 2 & 4 & 0 \\
lift & 4 & 11 & 0 & 3 & 6 & 0 \\
minepump & 2 & 2 & 0 & 1 & 2 & 0 \\
traffic single & 3 & 2 & 1 & 1 & 1 & 1 \\
traffic updated & 3 & 4 & 1 & 2 & 4 & 0 \\
\bottomrule
\end{tabularx}
\caption{Specification sizes}
\label{tab:spec_sizes}
\end{table}

%% file: appendices/C-case_studies/C.2-specifications_and_violation_traces.tex
\subsection{Specifications \& Violation Traces}
\label{apx:specifications_violation_traces}

We showcase below the exact case studies evaluated in this work in pairs of:
\begin{enumerate}
    \item The GR(1) specification, written in Spectra, using Dwyer patterns to represent GR(1) equivalent patterns that are not syntactically allowed by Spectra (e.g. $G(a\rightarrow F(b))$ will be represented in Spectra as $pRespondsToS(a,b)$);
    \item The violation trace that intentionally violates one (or multiple) environment assumptions.
\end{enumerate}

\subsubsection{Arbiter}
\label{apx:specification_violation_arbiter}
\input{code/spectra_specs/arbiter_spec}
\fbox{%
\begin{minipage}{\textwidth}
\[
\tau = [(\textcolor{red}{\neg a,\neg r1,\neg r2},
\textcolor{green!60!black}{\neg g1,\neg g2})]
\]%
\end{minipage}
}
\captionof{figure}{Arbiter case study}

\subsubsection{Lift}
\label{apx:specification_violation_lift}
\input{code/spectra_specs/lift_spec}
\fbox{%
\begin{minipage}{\textwidth}
\[
\tau = [(\textcolor{red}{\neg b1,\neg b2,\neg b3,c},
\textcolor{green!60!black}{f1,\neg f2,\neg f3}),
(\textcolor{red}{b1, b2, b3,\neg c},
\textcolor{green!60!black}{f1,\neg f2,\neg f3})]
\]%
\end{minipage}
}
\captionof{figure}{Lift case study}

\subsubsection{Minepump}
\label{apx:specification_violation_minepump}
\input{code/spectra_specs/minepump_spec}
\fbox{%
\begin{minipage}{\textwidth}
\[
\tau = [(\textcolor{red}{\neg highwater,\neg methane},\textcolor{green!60!black}{\neg pump}),(\textcolor{red}{highwater,methane},\textcolor{green!60!black}{\neg pump})]
\]%
\end{minipage}
}
\captionof{figure}{Minepump case study}

\subsubsection{Traffic Single}
\label{apx:specification_violation_traffic_single}
\input{code/spectra_specs/traffic-single_spec}
\fbox{%
\begin{minipage}{\textwidth}
\[
\tau = [(\textcolor{red}{ car,\neg emergency, police},\textcolor{green!60!black}{ green}),(\textcolor{red}{ car,\neg emergency, \neg police},\textcolor{green!60!black}{\neg green})]
\]%
\end{minipage}
}
\captionof{figure}{Traffic single case study}

\subsubsection{Traffic Updated}
\label{apx:specification_violation_updated}
\input{code/spectra_specs/traffic-updated_spec}
\fbox{%
\begin{minipage}{\textwidth}
\begin{align*}
\tau = [&(\textcolor{red}{ carA,carB, emergency},\textcolor{green!60!black}{ \neg greenA, \neg greenB}),\\ &(\textcolor{red}{ \neg carA,\neg carB, \neg emergency},\textcolor{green!60!black}{ \neg greenA, \neg greenB})]
\end{align*}
\end{minipage}
}
\captionof{figure}{Traffic updated case study}

%% file: code/spectra_specs/arbiter_spec.tex
\begin{lstlisting}[style=SpectraStyle]
module Arbiter

env boolean a;
env boolean r1;
env boolean r2;
sys boolean g1;
sys boolean g2;

assumption -- a_always
	alw(a);
guarantee -- guarantee1
	pRespondsToS(r1,g1); -- equiv to G(r1->F(g1));
guarantee -- guarantee2
	pRespondsToS(r2,g2); -- equiv to G(r2->F(g2));
guarantee -- guarantee3
	alw(!a->!g1&!g2);
guarantee -- guarantee4
	alw(!g1 | !g2);

pattern pRespondsToS(s, p) {
  var { S0, S1} state;

  // initial assignments: initial state
  ini state=S0;

  // safety this and next state
  alw ((state=S0 & ((!s) | (s & p)) & next(state=S0)) |
  (state=S0 & (s & !p) & next(state=S1)) |
  (state=S1 & (p) & next(state=S0)) |
  (state=S1 & (!p) & next(state=S1)));

  // equivalence of satisfaction
  alwEv (state=S0);
}
\end{lstlisting}

%% file: code/spectra_specs/lift_spec.tex
\begin{lstlisting}[style=SpectraStyle]
module Lift

env boolean b1;
env boolean b2;
env boolean b3;
env boolean c;
sys boolean f1;
sys boolean f2;
sys boolean f3;

assumption -- initial_assumptions
	!b1 & !b2 & !b3;
assumption -- button1_off_at_floor1
	alw(b1 & f1 -> next(!b1));
assumption -- button2_off_at_floor2
	alw(b2 & f2 -> next(!b2));
assumption -- button3_off_at_floor3
	alw(b3 & f3 -> next(!b3));
assumption -- button1_stays_on
	alw(b1 & !f1 & !c -> next(b1));
assumption -- button2_stays_on
	alw(b2 & !f2 & !c -> next(b2));
assumption -- button3_stays_on
	alw(b3 & !f3 & !c -> next(b3));
assumption -- one_floor
	alw(f1 | f2 | f3);
assumption -- one_floor_only1
	alw(f1 -> !f2 & !f3);
assumption -- one_floor_only2
	alw(f2 -> !f1 & !f3);
assumption -- one_floor_only3
	alw(f3 -> !f1 & !f2);
assumption -- not_all_buttons_on
	alw(!b1 | !b2 | !b3);
    
guarantee -- initial_guarantees
	f1 & !f2 & !f3;
guarantee -- move_one_max1
	alw(f1 -> (next(f1) & !b2 & !b3) | (next(f2) & b2) | (next(f2) & b3) | next(f1) & c);
guarantee -- move_one_max2
	alw(f2 -> (next(f1) & b1) | (next(f2) & !b1 & !b3) | (next(f3) & b3) | next(f2) & c);
guarantee -- move_one_max3
	alw(f3 -> (next(f3) & !b1 & !b2) | (next(f2) & b1) | (next(f2) & b2) | next(f3) & c);
guarantee -- button1_answered
	pRepondsToS(b1, (f1 | c)); -- equivalent to G(b1 -> F(f1 | c));
guarantee -- button2_answered
	pRepondsToS(b2, (f2 | c)); -- equivalent to G(b2 -> F(f2 | c));
guarantee -- button3_answered
	pRepondsToS(b3, (f3 | c)); -- equivalent to G(b3 -> F(f3 | c));

-- pattern omitted
\end{lstlisting}

%% file: code/spectra_specs/minepump_spec.tex
\begin{lstlisting}[style=SpectraStyle]
module Minepump

env boolean highwater;
env boolean methane;
sys boolean pump;

assumption -- initial_assumption
    !highwater & !methane;
assumption -- assumption1
	alw(PREV(pump)&pump->next(!highwater));
assumption -- assumption2
	alw(!highwater|!methane);
guarantee -- initial_guarantee
    !pump;
guarantee -- guarantee1
	alw(highwater->next(pump));
guarantee -- guarantee2
	alw(methane->next(!pump));
\end{lstlisting}

%% file: code/spectra_specs/traffic-single_spec.tex
\begin{lstlisting}[style=SpectraStyle]
module TrafficE2

env boolean car;
env boolean emergency;
env boolean police;
sys boolean green;

assumption -- car_idle_when_red
	alw(car & !green -> next(car));
assumption -- car_moves_when_green
	alw(car & green -> next(!car));
assumption -- not_police_often
	alwEv(!police);
guarantee -- green_often
	pRepondsToS(car,green); -- equivalent to G(car -> F(green));
guarantee -- no_car_often
	alwEv(!car);

-- pattern omitted
\end{lstlisting}

%% file: code/spectra_specs/traffic-updated_spec.tex
\begin{lstlisting}[style=SpectraStyle]
module TrafficE2

env boolean carA;
env boolean carB;
env boolean emergency;
sys boolean greenA;
sys boolean greenB;

assumption -- carA_idle_when_red
	alw(carA & !greenA -> next(carA));
assumption -- carB_idle_when_red
	alw(carB & !greenB -> next(carB));
assumption -- carA_moves_when_green
	alw(carA & greenA -> next(!carA));
assumption -- carB_moves_when_green
	alw(carB & greenB -> next(!carB));
assumption -- no_emergency_often
	alwEv(!emergency);

guarantee -- lights_not_both_red
	alw(!greenA | !greenB);
guarantee -- carA_leads_to_greenA
	pRespondsToS(carA,greenA); -- equiv to G(carA -> F(greenA));
guarantee -- carB_lead_to_greenB
	pRespondsToS(carB,greenB); -- equiv to G(carB -> F(greenB));
guarantee -- red_when_emergency
	alw(emergency -> !greenA & !greenB);

-- pattern omitted
\end{lstlisting}

%% file: appendices/C-case_studies/C.3-trivial_solutions.tex
\subsection{Trivial Solutions}
\label{apx:trivial_solutions}

We showcase below the trivial solutions for each case study.

\subsubsection{Arbiter}
\label{apx:trivial_solution_arbiter}
\input{code/spectra_specs/trivial/arbiter_0}

\subsubsection{Minepump}
\label{apx:trivial_solution_minepump}
\input{code/spectra_specs/trivial/minepump_0}
\input{code/spectra_specs/trivial/minepump_1}

\subsubsection{Lift}
\label{apx:trivial_solution_lift}
\input{code/spectra_specs/trivial/lift_0}

\subsubsection{Traffic Single}
\label{apx:trivial_solution_traffic_single}
\input{code/spectra_specs/trivial/traffic_single_0}

\subsubsection{Traffic Updated}
\label{apx:trivial_solution_traffic_updated}
\input{code/spectra_specs/trivial/traffic_updated_0}

%% file: code/spectra_specs/trivial/arbiter_0.tex
\begin{lstlisting}[style=SpectraStyle]
module Arbiter

env boolean a;
sys boolean g1;
sys boolean g2;
env boolean r1;
env boolean r2;

guarantee -- guarantee1_1
	pRespondsToS(r1=true,g1=true); -- equivalent to G((r1->F(g1)));

guarantee -- guarantee2_1
	pRespondsToS(r2=true,g2=true); -- equivalent to G((r2->F(g2)));

guarantee -- guarantee4
	alw((g1=false|g2=false));

-- pattern omitted
\end{lstlisting}

%% file: code/spectra_specs/trivial/minepump_0.tex
\begin{lstlisting}[style=SpectraStyle]
module Minepump

env boolean highwater;
env boolean methane;
sys boolean pump;

assumption -- initial_assumption
	(highwater=false&methane=false);

guarantee -- initial_guarantee
	pump=false;

guarantee -- guarantee1_1
	alw((highwater=true->next(pump=true)));

assumption -- assumption1_1
	alw(((PREV(pump=true)&pump=true)->next(highwater=false)));

\end{lstlisting}

%% file: code/spectra_specs/trivial/minepump_1.tex
\begin{lstlisting}[style=SpectraStyle]
module Minepump

env boolean highwater;
env boolean methane;
sys boolean pump;

assumption -- initial_assumption
	(highwater=false&methane=false);

guarantee -- initial_guarantee
	pump=false;

guarantee -- guarantee2_1
	alw((methane=true->next(pump=false)));

assumption -- assumption1_1
	alw(((PREV(pump=true)&pump=true)->next(highwater=false)));

\end{lstlisting}

%% file: code/spectra_specs/trivial/lift_0.tex
\begin{lstlisting}[style=SpectraStyle]
module Lift

env boolean b1;
env boolean b2;
env boolean b3;
env boolean c;
sys boolean f1;
sys boolean f2;
sys boolean f3;

assumption -- initial_assumptions
	((b1=false&b2=false)&b3=false);

assumption -- button1_off_at_floor1
	alw(((b1=true&f1=true)->next(b1=false)));

assumption -- button2_off_at_floor2
	alw(((b2=true&f2=true)->next(b2=false)));

assumption -- button3_off_at_floor3
	alw(((b3=true&f3=true)->next(b3=false)));

assumption -- button1_stays_on
	alw((((b1=true&f1=false)&c=false)->next(b1=true)));

assumption -- button2_stays_on
	alw((((b2=true&f2=false)&c=false)->next(b2=true)));

assumption -- button3_stays_on
	alw((((b3=true&f3=false)&c=false)->next(b3=true)));

guarantee -- initial_guarantees
	((f1=true&f2=false)&f3=false);

assumption -- one_floor
	alw(((f1=true|f2=true)|f3=true));

assumption -- one_floor_only1
	alw((f1=true->(f2=false&f3=false)));

assumption -- one_floor_only2
	alw((f2=true->(f1=false&f3=false)));

assumption -- one_floor_only3
	alw((f3=true->(f1=false&f2=false)));

guarantee -- move_one_max1
	alw((f1=true->(((((next(f1=true)&b2=false)&b3=false)|(next(f2=true)&b2=true))|(next(f2=true)&b3=true))|(next(f1=true)&c=true))));

guarantee -- move_one_max2
	alw((f2=true->((((next(f1=true)&b1=true)|((next(f2=true)&b1=false)&b3=false))|(next(f3=true)&b3=true))|(next(f2=true)&c=true))));

guarantee -- move_one_max3
	alw((f3=true->(((((next(f3=true)&b1=false)&b2=false)|(next(f2=true)&b1=true))|(next(f2=true)&b2=true))|(next(f3=true)&c=true))));

guarantee -- button1_answered
	pRespondsToS(b1=true,(f1=true|c=true)); -- equivalent to G((b1->F((f1|c))));

guarantee -- button2_answered
	pRespondsToS(b2=true,(f2=true|c=true)); -- equivalent to G((b2->F((f2|c))));

guarantee -- button3_answered
	pRespondsToS(b3=true,(f3=true|c=true)); -- equivalent to G((b3->F((f3|c))));

-- pattern omitted
\end{lstlisting}

%% file: code/spectra_specs/trivial/traffic_single_0.tex
\begin{lstlisting}[style=SpectraStyle]
module TrafficE2

env boolean car;
env boolean emergency;
sys boolean green;
env boolean police;

guarantee -- green_often
	pRespondsToS(car=true,green=true); -- equivalent to G((car->F(green)));

assumption -- not_police_often
	alwEv(police=false);

assumption -- car_idle_when_red
	alw(((car=true&green=false)->next(car=true)));

-- pattern omitted
\end{lstlisting}

%% file: code/spectra_specs/trivial/traffic_updated_0.tex
\begin{lstlisting}[style=SpectraStyle]
module TrafficE2

env boolean carA;
env boolean carB;
env boolean emergency;
sys boolean greenA;
sys boolean greenB;

assumption -- no_emergency_often
	alwEv(emergency=false);

guarantee -- lights_not_both_red
	alw((greenA=false|greenB=false));

guarantee -- carA_leads_to_greenA
	pRespondsToS(carA=true,greenA=true); -- equivalent to G((carA->F(greenA)));

guarantee -- carB_lead_to_greenB
	pRespondsToS(carB=true,greenB=true); -- equivalent to G((carB->F(greenB)));

guarantee -- red_when_emergency
	alw((emergency=true->(greenA=false&greenB=false)));

assumption -- carA_moves_when_green
	alw(((carA=true&greenA=true)->next(carA=false)));

assumption -- carB_moves_when_green
	alw(((carB=true&greenB=true)->next(carB=false)));

-- pattern omitted
\end{lstlisting}

%% file: appendices/C-case_studies/C.4-preferred_learned_solutions.tex
\subsection{Preferred Learned Solutions}
\label{apx:preferred_solutions}

We showcase below the solutions discovered when using Algorithm~\ref{alg:learningRInt} for each case study.

\subsubsection{Arbiter}
\label{apx:preferred_learned_arbiter}
\input{code/spectra_specs/preferred_learned/arbiter_0}

\subsubsection{Minepump}
\label{apx:preferred_learned_minepump}
\input{code/spectra_specs/preferred_learned/minepump_0}

\subsubsection{Lift}
\label{apx:preferred_learned_lift}
\input{code/spectra_specs/preferred_learned/lift_0}

\subsubsection{Traffic Single}
\label{apx:preferred_learned_traffic_single}
\input{code/spectra_specs/preferred_learned/traffic_single_0}

\subsubsection{Traffic Updated}
\label{apx:preferred_learned_traffic_updated}
\input{code/spectra_specs/preferred_learned/traffic_updated_0}

%% file: code/spectra_specs/preferred_learned/arbiter_0.tex
\begin{lstlisting}[style=SpectraStyle]
module Arbiter

env boolean a;
sys boolean g1;
sys boolean g2;
env boolean r1;
env boolean r2;

guarantee -- guarantee1
	pRespondsToS(r1=true,g1=true); -- equivalent to G((r1->F(g1)))

guarantee -- guarantee2
	pRespondsToS(r2=true,g2=true); -- equivalent to G((r2->F(g2)))

guarantee -- guarantee3
	alw((a=false->(g1=false&g2=false)));

guarantee -- guarantee4
	alw((g1=false|g2=false));

assumption -- a_always_0
	alwEv(a=true);

assumption -- a_always_1
	alw((a=true|(g1=false&g2=false&r1=false&r2=false)));

\end{lstlisting}

%% file: code/spectra_specs/preferred_learned/minepump_0.tex
\begin{lstlisting}[style=SpectraStyle]
module Minepump

env boolean highwater;
env boolean methane;
sys boolean pump;

assumption -- initial_assumption
	(highwater=false&methane=false);

assumption -- assumption1_1
	alw(((PREV(pump=true)&pump=true)->next(highwater=false)));

assumption -- assumption2_1
	alw((pump=true->(highwater=false|methane=false)));

guarantee -- initial_guarantee
	pump=false;

guarantee -- guarantee1_1_0
	alw((highwater=true->next(pump=true)));

guarantee -- guarantee2_1_0
	alw((methane=true->(next(pump=false)|next(methane=true)&next(highwater=true))));

guarantee -- guarantee2_1_8
	pRespondsToS(methane=true,next(pump=false)); --equivalent to G(methane->X(!pump))
    
-- pattern omitted
\end{lstlisting}

%% file: code/spectra_specs/preferred_learned/lift_0.tex
\begin{lstlisting}[style=SpectraStyle]
module Lift

env boolean b1;
env boolean b2;
env boolean b3;
env boolean c;
sys boolean f1;
sys boolean f2;
sys boolean f3;

assumption -- initial_assumptions
	((b1=false&b2=false)&b3=false);

assumption -- button1_off_at_floor1
	alw(((b1=true&f1=true)->next(b1=false)));

assumption -- button2_off_at_floor2
	alw(((b2=true&f2=true)->next(b2=false)));

assumption -- button3_off_at_floor3
	alw(((b3=true&f3=true)->next(b3=false)));

assumption -- button1_stays_on
	alw((((b1=true&f1=false)&c=false)->next(b1=true)));

assumption -- button2_stays_on
	alw((((b2=true&f2=false)&c=false)->next(b2=true)));

assumption -- button3_stays_on
	alw((((b3=true&f3=false)&c=false)->next(b3=true)));

guarantee -- initial_guarantees
	((f1=true&f2=false)&f3=false);

assumption -- one_floor
	alw(((f1=true|f2=true)|f3=true));

assumption -- one_floor_only1
	alw((f1=true->(f2=false&f3=false)));

assumption -- one_floor_only2
	alw((f2=true->(f1=false&f3=false)));

assumption -- one_floor_only3
	alw((f3=true->(f1=false&f2=false)));

guarantee -- move_one_max1
	alw((f1=true->(((((next(f1=true)&b2=false)&b3=false)|(next(f2=true)&b2=true))|(next(f2=true)&b3=true))|(next(f1=true)&c=true))));

guarantee -- move_one_max2
	alw((f2=true->((((next(f1=true)&b1=true)|((next(f2=true)&b1=false)&b3=false))|(next(f3=true)&b3=true))|(next(f2=true)&c=true))));

guarantee -- move_one_max3
	alw((f3=true->(((((next(f3=true)&b1=false)&b2=false)|(next(f2=true)&b1=true))|(next(f2=true)&b2=true))|(next(f3=true)&c=true))));

guarantee -- button1_answered
	pRespondsToS(b1=true, (f1=true|c=true)); -- equivalent to G((b1->F((f1|c))));

guarantee -- button2_answered
	pRespondsToS(b2=true, (f2=true|c=true)); -- equivalent to G((b2->F((f2|c))));

guarantee -- button3_answered
	pRespondsToS(b3=true, (f3=true|c=true)); -- equivalent to G((b3->F((f3|c))));

assumption -- not_all_buttons_on_0
	alw((c=true->((b1=false|b2=false)|b3=false)));

assumption -- not_all_buttons_on_1
	alw((f1=false->((b1=false|b2=false)|b3=false)));

-- pattern omitted
\end{lstlisting}

%% file: code/spectra_specs/preferred_learned/traffic_single_0.tex
\begin{lstlisting}[style=SpectraStyle]
module TrafficE2

env boolean car;
env boolean emergency;
sys boolean green;
env boolean police;

guarantee -- green_often
	alw((car=true->F(green=true)));

guarantee -- no_car_often
	alwEv(car=false);

assumption -- not_police_often
	alwEv(police=false);

assumption -- car_idle_when_red
	alw(((car=true&green=false)->next(car=true)));

assumption -- car_moves_when_green_0
	alw((((car=true&green=true)&police=false)->next(car=false)));

assumption -- car_moves_when_green_1
	pRespondsToS(car=true&green=true,next(car=false)); -- equivalent to G(((car&green)->F(next(!car))));

-- pattern omitted
\end{lstlisting}

%% file: code/spectra_specs/preferred_learned/traffic_updated_0.tex
\begin{lstlisting}[style=SpectraStyle]
module TrafficE2

env boolean carA;
env boolean carB;
env boolean emergency;
sys boolean greenA;
sys boolean greenB;

assumption -- no_emergency_often
	alwEv(emergency=false);

guarantee -- lights_not_both_red
	alw((greenA=false|greenB=false));

guarantee -- carA_leads_to_greenA
	pRespondsToS(carA=true,greenA=true); -- equivalent to G((carA->F(greenA)));

guarantee -- carB_lead_to_greenB
	pRespondsToS(carB=true,greenB=true); -- equivalent to G((carB->F(greenB)));

guarantee -- red_when_emergency
	alw((emergency=true->(greenA=false&greenB=false)));

assumption -- carA_idle_when_red_0
	alw((((carA=true&greenA=false)&carB=false)->next(carA=true)));

assumption -- carB_idle_when_red_1
	alw((((carB=true&greenB=false)&carA=false)->next(carB=true)));

assumption -- carA_moves_when_green
	alw(((carA=true&greenA=true)->next(carA=false)));

assumption -- carB_moves_when_green
	alw(((carB=true&greenB=true)->next(carB=false)));

assumption -- carB_idle_when_red_3
	alw(((carB=true&greenB=false)->(next(carB=true)|next(carA=false))));

assumption -- carA_idle_when_red_2
	alw(((carA=true&greenA=false)->(next(carA=true)|greenB=false)));

assumption -- carA_idle_when_red_3
	alw(((carA=true&greenA=false)->(next(carA=true)|next(emergency=false))));

assumption -- carB_idle_when_red_4
	alw(((carB=true&greenB=false)->(next(carB=true)|greenA=false)));

assumption -- carB_idle_when_red_5
	pRespondsToS(carB=true&greenB=false,next(carB=true)); -- equivalent to G(((carB&!greenB)->F(next(carB))));

assumption -- carA_idle_when_red_4
	alw(((carA=true&greenA=false)->(next(carA=true)|next(carB=false))));

assumption -- carA_idle_when_red_5
	pRespondsToS(carA=true&greenA=false,next(carA=true)); -- equivalent to G(((carA&!greenA)->F(next(carA))));

assumption -- carA_idle_when_red
	alw((((carA=true&greenA=false)&emergency=false)->next(carA=true)));

-- pattern omitted
\end{lstlisting}